\documentclass[mnsc,nonblindrev]{informs3}
\OneAndAHalfSpacedXI

\usepackage{natbib}
\bibpunct[, ]{(}{)}{,}{a}{}{,}
\def\bibfont{\small}

\TheoremsNumberedThrough
\ECRepeatTheorems
\EquationsNumberedThrough

\usepackage{amsmath,amsfonts,amssymb}  
\usepackage{setspace}                  
\usepackage{xspace}                    
\usepackage{changepage}                
\usepackage{graphicx}                  
\usepackage{thmtools,thm-restate}      

\allowdisplaybreaks

\usepackage[dvipsnames]{xcolor}
\usepackage[unicode,linktocpage=true,breaklinks,colorlinks,citecolor=blue,linkcolor=BrickRed]{hyperref}
\DeclareUnicodeCharacter{2217}{*}
\usepackage{pgfplots}
\pgfplotsset{compat=1.18}
\usepackage{tikz}
\usepackage{regexpatch}

\makeatletter
\xpatchcmd\thmt@restatable{%
\csname #2\@xa\endcsname\ifx\@nx#1\@nx\else[{#1}]\fi
}{%
\ifthmt@thisistheone
\csname #2\@xa\endcsname\ifx\@nx#1\@nx\else[{#1}]\fi
\else
\csname #2\@xa\endcsname[{Restated}]
\fi}{}{}

\makeatother

\newcounter{alg}
\theoremstyle{TH}

\newtheorem{Lemma}{Lemma}

\theoremstyle{EX}
\newtheorem{Definition}{Definition}

\makeatletter
\renewenvironment{proof}[1][Proof]{%
  \Trivlist\item[\hspace*{1em}\hskip\labelsep{\it #1.\enskip}]\ignorespaces}{%
  \leavevmode\unskip\penalty9999 \hbox{}\nobreak\hfill\Halmos\par
  \endTrivlist\addvspace{0pt}}
\makeatother

\newenvironment{subfigure}[2][c]{\begin{minipage}[#1]{#2}}{\end{minipage}}

\providecommand{\normalsizeXI}{\normalsize}
\providecommand{\normalsizeX}{\normalsize}
\providecommand{\smallX}{\small}
\providecommand{\footnotesizeIX}{\footnotesize}

\newcommand{\OPT}{\texttt{OPT}\xspace}
\newcommand{\ALG}{\texttt{ALG}\xspace}

\newenvironment{exampleblock}
  {\begin{adjustwidth}{2em}{2em}
   \begin{spacing}{0.9}
   \vspace{0.5em}}
  {\end{spacing}
   \end{adjustwidth}
   \vspace{0.5em}}

\newenvironment{GeneralTheoremRestated}
  {\begin{repeattheorem}[Theorem~\ref{Thm1} (Restated).]}
  {\end{repeattheorem}}

\makeatletter
\def\AppendixFontSize{\normalsize}
\newcommand{\AppendixNumbering}{%
  \setcounter{equation}{0}%
  \@addtoreset{equation}{section}%
  \renewcommand{\theequation}{\thesection.\arabic{equation}}%
  \def\theHsection{appendix.\Alph{section}}%
  \def\theHequation{appendix.\Alph{section}.\arabic{equation}}}
\makeatother

\MANUSCRIPTNO{}
\hypersetup{
  pdftitle={The Power of Flexible Budgets in Adwords},
  pdfauthor={Suho Kang and Rajan Udwani},
  pdfsubject={Online advertising and budget flexibility},
}

\begin{document}
\RUNAUTHOR{Kang and Udwani}
\RUNTITLE{The Power of Flexible Budgets in Adwords}
\TITLE{The Power of Flexible Budgets in Adwords}
\ARTICLEAUTHORS{%
\AUTHOR{Suho Kang, Rajan Udwani}
\AFF{Department of Industrial Engineering and Operations Research (IEOR),
University of California, Berkeley,
\EMAIL{suho\_kang@berkeley.edu, rudwani@berkeley.edu}}
}

\ABSTRACT{%
Search advertising platforms routinely spend beyond an advertiser's average daily budget on high-traffic days, so long as total spending over the month stays within the monthly budget. Motivated by this practice, we study a $D$-day generalization of the Adwords problem \citep{mehta2007adwords}, where each advertiser $i$ has a nominal (average) daily budget $B_i$ and a total horizon (monthly) budget $DB_i$. Given a flexibility parameter $\delta$, the platform may spend at most $\delta B_i$ on advertiser $i$ on any single day, subject to the horizon spending limit of $DB_i$.

We quantify the power of $\delta$-flexible budgets by benchmarking against the inflexible offline optimum, which may spend at most $B_i$ on advertiser $i$ on each day. We show that no amount of flexibility helps direct generalizations of the classical algorithm of \citet{mehta2007adwords}. By contrast, for every fixed $\delta$, we design an algorithm whose competitive ratio converges to $1-e^{-\delta}$ as $D\to\infty$, and we show that this is asymptotically optimal. Perhaps surprisingly, this matches the optimal competitive ratio in a more permissive setting where the algorithm receives a fresh spending limit of $\delta B_i$ each day and may spend up to $\delta D B_i$ over the horizon. Along the way, we characterize the exact optimal competitive ratio for every pair $(D,\delta)$ on high-traffic instances, where the offline benchmark exhausts every advertiser's budget on every day.
}
\KEYWORDS{Online Advertising, Budget Flexibility,
    Online Algorithms, Competitive Ratio}
\maketitle

\section{Introduction}
\label{SecIntroduction}

Online advertising is a vast market: U.S. internet advertising revenue alone
reached hundreds of billions in 2024 \citep{IABAnnualReport2025}. At this scale, even
small improvements in the efficiency of ad allocation translate into
substantial value for platforms and advertisers alike.

To capture online ad allocation in search advertising, \citet{mehta2007adwords}
introduce the canonical Adwords problem. A set of advertisers $I=[n]$ and their
daily budgets $(B_i)_{i\in I}$ are known at the start of the day. Search queries arrive sequentially and (nonnegative) bids
$(b_{it})_{i\in I}$ are revealed upon the arrival of query $t$. The platform must immediately and
irrevocably show at most one ad, i.e., assign the query to at most one advertiser. Assigning query $t$ to
advertiser $i$ charges $i$ an amount $b_{it}$, and the total amount spent by
advertiser $i$ over the day may not exceed its budget $B_i$. The objective is to
maximize total spending across all advertisers.

In the classical formulation, the platform has no information about future queries or bids, and the arrival sequence may be adversarial. We evaluate the platform's (online) algorithm \ALG against the offline optimum
\OPT, which knows the entire sequence of queries and bids in advance. For
$\rho\in[0,1]$, \ALG is \emph{$\rho$-competitive} if $\ALG\geq\rho\,\OPT$ on
every instance, and the competitive ratio of \ALG is the largest such $\rho$.

 \citet{mehta2007adwords} analyze this problem in the \emph{small-bid regime}, where
\[
    \gamma:=\max_{i, t}\frac{b_{it}}{B_i}\to0,
\]
which we also assume throughout the paper. In this regime, they show that the optimal competitive ratio is $1-1/e$ and that it is achieved by \emph{reduced-bid} algorithm that discounts each
advertiser's bid according to the fraction of its budget already spent, and
awards the query to the advertiser with the largest discounted bid. 

Formally, given a penalty function $f:[0,1]\to[0,1]$, let $x_i(t)$ denote the
fraction of advertiser $i$'s budget spent before query $t$ arrives. Upon the
arrival of $t$, the corresponding reduced-bid algorithm computes
\[
b_{it}\bigl(1-f(x_i(t))\bigr),
\]
for every advertiser and assigns $t$ to a maximizer, discarding $t$ if all
reduced bids are zero. The penalty function $f$ is chosen to be non-decreasing, so advertisers that have already spent a larger fraction of their budgets become less attractive. Further, $f(1)=1$, so that once an advertiser exhausts its budget, its reduced bid becomes zero. \citet{mehta2007adwords} propose \textsc{Balance}, which uses
the exponential penalty
\[
    f(x)=e^{x-1}.
\]

With a rigid daily spending limit, \ALG may exhaust some advertisers' budgets
too early in the day while leaving others underspent---a misallocation that
\OPT, knowing the query sequence in advance, avoids. Permitting \ALG to spend beyond the nominal daily budget on some days and to
compensate by underspending on other days may improve overall performance.

In practice, platforms exercise exactly this flexibility, subject to a strict
longer-horizon cap, which we refer to as the \emph{horizon spending limit}.
Google Ads, for example, permits spending of up to twice an advertiser's nominal
daily budget on any single day, while capping monthly spending at $30.4$ times
that budget\footnote{The average month contains $365/12\approx30.4$ days.}
\citep{GoogleAdsSpendingLimits}.
As Figure~\ref{fig:GoogleAdwords} illustrates, the platform may spend more on
high-traffic days and less on others, provided the horizon spending limit is
respected. Presumably, flexibility is bounded because advertisers generally
prefer their budgets to be paced over the horizon rather than consumed in a few
days.

\begin{figure}[h]
    \centering
\includegraphics[width = 1\textwidth]{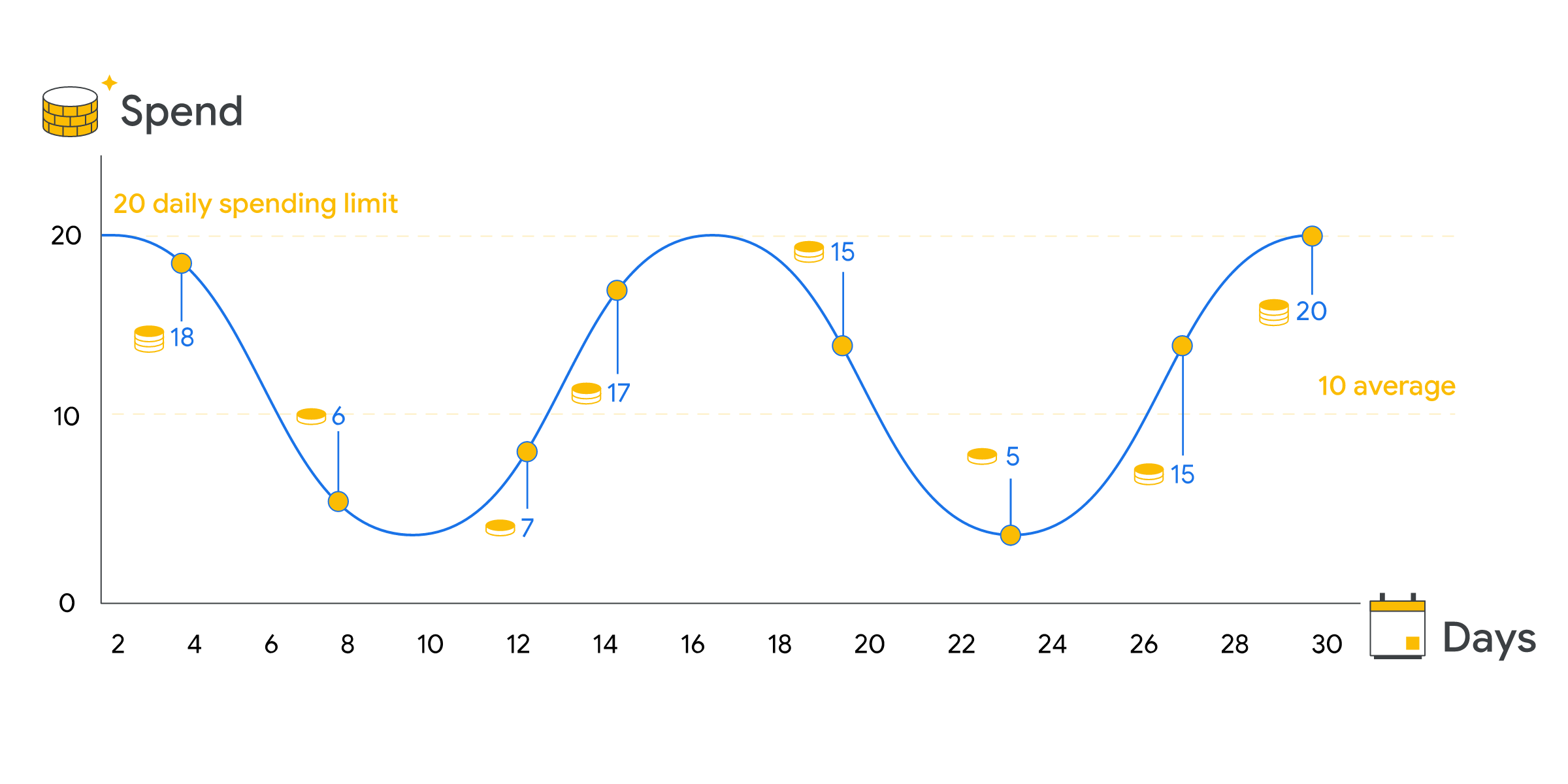}
    \caption{An illustration from Google AdWords intuitively explaining the flexible daily spending limit. The horizon spending limit is defined over approximately $30$ days, while the daily spending limit allows up to twice the nominal daily budget, corresponding to a flexibility level of $2$ in our model.}
    \label{fig:GoogleAdwords}
\end{figure}

The horizon spending limit makes any such gain far from obvious because
spending above the nominal daily budget on one day leaves less available for
future days. This tension motivates the central question of this paper:
\begin{quote}
\emph{Does budget flexibility improve the worst-case performance of online
algorithms relative to the inflexible offline benchmark? If so, how does the
improvement depend on the amount of flexibility?}
\end{quote}
\vspace{0.5em}

\subsection{Our Setting: $D$-day Adwords with $\delta$-flexibility}

Let $D$ denote the number of days in the campaign and let $\delta\in[1,D]$ denote the level of flexibility. As before, a set of advertisers $I=[n]$ is given, where each advertiser $i\in I$ has a nominal daily budget $B_i$ and a horizon budget $DB_i$ over $D$ days. 

On each day $d\in[D]$, an unknown (adversarial) sequence of queries $Q_d$ arrives online. Upon the arrival of query $t\in Q_d$, its bid vector $(b_{it})_{i\in I}$ is revealed, and an online algorithm \ALG must assign the query to at most one advertiser. For each advertiser $i$, \ALG may spend up to $\delta B_i$ per day, while the offline optimum \OPT, which knows the entire sequence of queries and bids in advance, may spend up to $B_i$ per day. Both \ALG and \OPT are subject to the same horizon spending limit $DB_i$. Our goal is to design an online algorithm with the best possible competitive ratio.

\subsection{Main Results}\label{MainResultsIntro}

Consider the following thought experiment: suppose that the online algorithm does not face the horizon spending limit and instead receives a fresh budget $\delta B_i$ for every advertiser $i \in I$, every day. Since \OPT is restricted to $B_i$ on each day, we can decompose the entire instance into $D$ single-day instances in which the online algorithm has $\delta$ times the budget of the offline optimum. In this setting, it can be shown that a natural generalization of \textsc{Balance} with the penalty function $f_\delta(x) := \exp(x-\delta)$ achieves the optimal competitive ratio of $1 - \exp(-\delta)$ (see Appendix \ref{SecFreshDailyBudget} for details). Since we removed the horizon spending limit in this thought experiment, $1-\exp(-\delta)$ also serves as an upper bound on the optimal competitive ratio in our main model.

Encouraged by this, we first consider the same generalization of \textsc{Balance} in our setting (with the horizon spending limit). Interestingly, additional flexibility buys this algorithm nothing and  for every $D$ and $\delta$, we construct an instance on which the algorithm does not beat $1- 1/e$ of the offline optimum (see Appendix~\ref{UBclassicBALANCE}). The reason is that the exponential penalty is too greedy
early on. The algorithm may spend aggressively on a single
advertiser on early days and quickly exhaust that advertiser's entire horizon
budget, leaving the other advertisers' budgets largely unspent. The offline
optimum instead spreads the same advertiser's budget evenly across days. This indicates a need for more conservative penalty on earlier days.

Guided by this observation, we first restrict attention to a structured class of instances in which \OPT exhausts every advertiser's nominal daily budget on every day. We call this the \emph{perfect-utilization setting}, which captures a high-traffic regime where demand is sufficient to fully utilize all daily budgets. Fixing the daily spending of \OPT\ allows us to focus entirely on the online algorithm's budget allocation across days. We refer to the model without this restriction as the \emph{general setting}.

In the perfect-utilization setting, we introduce Algorithm~\ref{alg:LPA}, a new reduced-bid algorithm whose penalty function on each day depends on both the remaining horizon budget and the number of days left. Algorithm~\ref{alg:LPA} represents a family of algorithms for different values of $D$ and $\delta$ and is formally introduced in Section~\ref{SecModelAlgorithm}. On the last day, it reduces to the exponential penalty of the single-day thought experiment. However, with more days remaining, it is more conservative and generally takes a form that is no longer a simple exponential. We first show that Algorithm~\ref{alg:LPA} achieves the optimal competitive ratio in the perfect-utilization setting for every $D$ and $\delta$.

\begin{restatable}{Theorem}{TheoremTwo} \label{Thm2}
For every $D$ and $\delta$, in the perfect-utilization setting, Algorithm~\ref{alg:LPA} has competitive ratio
\begin{equation}\label{cr}
    \frac{1}{D}
    \mathbb{E}\!\left[
        \min\left\{
            D,\sum_{d=1}^{D}\min\{Z_d,\delta\}
        \right\}
    \right], 
\end{equation}
where $Z_1,\ldots,Z_D$ are independent exponential random variables with mean one. This is the optimal competitive ratio under perfect-utilization, i.e., no online algorithm has competitive ratio strictly higher than \eqref{cr}.
\end{restatable}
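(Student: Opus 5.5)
Both directions rest on one recursion. Define the value function $V_k(y)$ as the expectation, with $k$ days remaining and normalized remaining horizon budget $y$ (in units of $B_i$), of
\[
\min\Bigl\{y,\ \sum_{d=1}^{k}\min\{Z_d,\delta\}\Bigr\}
\]
for i.i.d.\ mean-one exponentials. Then \eqref{cr} equals $V_D(D)/D$, and
\[
V_k(y)=\mathbb{E}\bigl[\min\{Z,\delta,y\}+V_{k-1}\bigl(y-\min\{Z,\delta,y\}\bigr)\bigr].
\]
The plan is to show that this is the optimal value of a $D$-stage game. On each day, the single-day perfect-utilization problem, with daily cap $\delta$, remaining horizon budget $y$, and offline spend exactly $1$, behaves like the classical upper-triangular Balance instance. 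Its continuous-limit version lets the adversary \emph{kill} the advertiser's remaining demand at a ``time'' $Z$, which in the worst case is exponentially distributed across advertisers. That is the origin of the $\min\{Z_d,\delta\}$ terms.

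\textbf{Upper bound.} Take $n\to\infty$ symmetric advertisers. On each day, run the upper-triangular instance of \citet{mehta2007adwords} after a random relabeling. Each advertiser receives total offline demand $1$ per day, spread so that the advertiser ranked $j$-th drops out after a fraction $j/n$ of the day's queries. By symmetry, any online algorithm can do no better than splitting each query equally among the surviving advertisers. A surviving advertiser therefore accumulates $\int_0^{s}\frac{du}{1-u}=-\ln(1-s)$ before dropping out at time $s$. With $s$ uniform, its daily demand available to \ALG is $Z=-\ln(1-s)\sim\mathrm{Exp}(1)$, capped by $\delta$ and by the remaining horizon budget. Independent relabelings across days give independent $Z_d$, and \OPT earns $D$ per advertiser. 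Summing then gives \eqref{cr}. One subtlety needs care: the algorithm might prefer to withhold spending on some advertisers early in order to save horizon budget. Since the horizon cap binds only through the $\min$ with $D$, spending greedily up to the available demand is never worse, and I would argue this by a simple exchange argument on the concave function $\min\{D,\cdot\}$.

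\textbf{Lower bound.} Use a primal-dual analysis of Algorithm~\ref{alg:LPA}. Its penalty on day $D-k+1$ should be $f_k(x;y)$ with $1-f_k = \partial_y V_{k}$-type marginal values, i.e.\ the reduced bid equals the marginal expected future value of budget. For each query, split the gain between the query dual $\beta_t=\max_i b_{it}(1-f)$ and an advertiser dual per day. Then show dual feasibility, namely $\alpha_{i,d}+\beta_t\ge b_{it}$, and that the sum of the duals is at most $\ALG/\rho$. The daily accounting reduces to a one-dimensional inequality: for each advertiser, day, remaining budget $y$, and realized daily spend $x\le\min\{\delta,y\}$,
\[
\int_0^{x}\bigl(1-f(u)\bigr)\,du + \bigl(1-f(x)\bigr)\cdot(\text{offline demand}) \;\ge\; \text{value-to-go decrement},
\]
and this telescopes over days to $V_D(D)$.

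The main obstacle is the lower bound. The core difficulty is showing that the penalty derived from $V_k$ is monotone in $x$, so that the algorithm is well defined. Beyond that, the per-day inequality must hold for adversarial offline demand of exactly $1$ and not just in expectation. I would first verify the case $k=1$, which recovers $f(x)=e^{x-\delta}$ when $y\ge\delta$, and then prove the remaining cases by induction on $k$ using the recursion for $V_k$ and the concavity of $V_k$ in $y$.
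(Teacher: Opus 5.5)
Your value-function recursion and telescoping target are the right ones, but the upper-bound instance does not give \eqref{cr}. A single upper-triangular block over all advertisers is fine on day~1. From day~2 on, however, the advertisers are no longer exchangeable: their remaining budgets $R_{id}$ differ, and the algorithm sees them. Once a survivor reaches its cap $c(R_{id})$, nothing forces the algorithm to keep splitting equally. It sends that share to unsaturated survivors, which then accumulate faster than $\int_0^s du/(1-u)$. So the per-advertiser bound $X_{id}\le\min\{Z_d,c(R_{id})\}$ with fresh independent $Z_d$ is false, and ``symmetry'' cannot rescue it; the algorithm may even create such heterogeneity on purpose on day~1. Concretely, take $D=\delta=2$. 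Even with equal splitting on day~1, an $e^{-2}$ fraction of advertisers has $R_{i2}=0$. Redirecting their share on day~2 earns strictly more than $\mathbb{E}[\min\{Z_2,2-c(Z_1)\}]$ per advertiser, so \ALG's total strictly exceeds $V_2(2)N$ (in units of $B$) while \OPT is still $2N$. The missing idea is the paper's slab construction. At the start of each day, partition advertisers into thin slabs by cumulative usage $Y_{i,d-1}$, and run a separate block inside each slab, so \OPT still spends one nominal budget per advertiser per day. Relax each slab's cap to the common value $c(D-\ell_m)$. Then propagate the potential $\Phi_d(y)=y+V_{D-d+1}(D-y)$, which satisfies $\Phi_d(y)=\mathbb{E}[\Phi_{d+1}(y+\min\{Z,c(D-y)\})]$ and is non-decreasing and $1$-Lipschitz. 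Its monotonicity (from $V_k'=\nu_k\le1$) is the rigorous form of your withholding/exchange argument, and the Lipschitz property absorbs the slab-rounding error.

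The lower-bound sketch also needs repair, in three places.

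First, the theorem concerns the specific penalty \eqref{LPF}, and $1-f$ is not a $\partial_yV_k$-type quantity: for example, $1-f(0\mid r,d)=V_{D-d+1}(r)-V_{D-d}(r)$ is a difference in the number of days. What you must prove is the exact equalization identity $\int_0^x f(u\mid r,d)\,du+1-f(x\mid r,d)+V_{D-d}(r-x)=V_{D-d+1}(r)$ for all $x\in[0,c(r)]$. Its $x$-derivative vanishes because $\partial_xf=f-\nu_{D-d}(r-x)$. The constant is then identified via $\mathbb{E}[\int_0^{Y}h+1-h(Y)]=\mathbb{E}[Y]$ for $Y=\min\{Z,c(r)\}$ and any $h$ with $h(c(r))=1$. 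Concavity of $V_k$ plays no role, and monotonicity of $f$ is only a side check via $f(x\mid r,d)\ge\nu_{D-d}(r-x)$.

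Second, your per-day inequality should contain $\int_0^x f$ (the share retained by the advertiser), not $\int_0^x(1-f)$. As written it fails already for $D=\delta=1$ at $x=1$, where the left side is $e^{-1}<1-e^{-1}$.

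Third, per-query, per-day dual feasibility with a uniform factor cannot certify \eqref{cr}. An advertiser with full budget and no spend has local factor $1-f(0\mid D,1)=V_D(D)-V_{D-1}(D)$, which equals $1-3e^{-2}\approx0.594<V_2(2)/2$ when $D=\delta=2$. The telescoping you want holds only for a per-advertiser certificate summed over all days and only over \OPT's queries. That is the LP-free condition behind Lemma~\ref{LPFreeCertificate}, where perfect utilization supplies $o_d=1$ on every day.
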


Theorem~\ref{Thm2} has a natural probabilistic reading: each day contributes an independent mean-one exponential amount, truncated at the daily spending limit $\delta$, and the sum of these contributions is truncated again at the horizon limit $D$. Because the perfect-utilization setting restricts the instance space, its optimal competitive ratio is at least that of the general setting; consequently, Theorem~\ref{Thm2} also yields a finite-horizon upper bound for the general setting, strictly tighter than $1-\exp(-\delta)$ whenever $\delta>1$. We next analyze the same algorithm in the general setting, without imposing the perfect-utilization restriction. 

\begin{restatable}{Theorem}{TheoremOne} \label{Thm1}
For every $D$ and $\delta$, in the general setting, Algorithm~\ref{alg:LPA} achieves a competitive ratio of at least
\[
    \mathbb{E}\!\left[
        \min\left\{
            D,\sum_{d=1}^{D}\min\{Z_d,\delta\}
        \right\}\right]
        -
        \mathbb{E}\!\left[\min\left\{
            D,\sum_{d=1}^{D-1}\min\{Z_d,\delta\}
        \right\}
    \right],
\]
where $Z_1,\ldots,Z_D$ are independent exponential random variables with mean one.
\end{restatable}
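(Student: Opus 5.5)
We plan a primal--dual (charging) argument built on the same potential that underlies Theorem~\ref{Thm2}. For Algorithm~\ref{alg:LPA}, the penalty on day $d$ depends on the remaining horizon budget and the number of days left. We write this as $f_d(x,s)$, where $x$ is the fraction of the daily cap already used on day $d$ and $s$ is the fraction of the horizon budget spent before day $d$. Let $V_k(s)$ denote the perfect-utilization value with $k$ days remaining. Concretely, $V_k(s)$ is $\mathbb{E}[\min\{D-s,\sum_{j=1}^{k}\min\{Z_j,\delta\}\}]$, suitably normalized. We assume the penalty is derived from this value function, so that $1-f_d$ is the marginal value $\partial V$ of keeping one more unit of budget. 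Each time \ALG assigns query $t$ to $i$, we set dual variables as follows. We set $\beta_t=b_{it}(1-f_d(x_i,s_i))$, the reduced bid of the winner. We also credit advertiser $i$ a dual amount $\alpha_i$ that increases by $b_{it}f_d(x_i,s_i)$. By construction, the total dual equals \ALG's spend.

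Next we check dual feasibility against \OPT. For any query $t$ on day $d$ that \OPT gives to $i'$, the greedy rule gives $\beta_t\ge b_{i't}(1-f_d(x_{i'}(t),s_{i'}))$. Hence \OPT's gain from advertiser $i'$ on day $d$ is at most $\sum_t\beta_t$ over its queries plus $\int f$ weighted by \OPT's daily spend, which is at most $B_{i'}$. In the perfect-utilization proof, \OPT spends exactly $B_i$ every day. This lets one telescope exactly, giving ratio \eqref{cr}. In the general setting, \OPT's daily spend $o_{i,d}\in[0,B_i]$ is arbitrary. We then bound each advertiser separately. We reduce to showing that, for every trajectory of \ALG's spending on $i$ and every vector $(o_{i,d})_d$, the charged amount (\ALG's spend on $i$ plus the reduced bids covering \OPT) is at least $\rho\sum_d o_{i,d}$ with $\rho$ equal to the claimed difference. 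The worst case for the penalty structure is adversarial but monotone. Since $f_d$ is non-decreasing in $x$, the covering term $\int_0^{o}(1-f_d)$ is concave in $o$. The worst ratio is therefore attained at extreme points, where \OPT spends $B_i$ on a subset of days and $0$ elsewhere.

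The main obstacle is handling the days on which \OPT spends nothing on $i$ while \ALG's budget state evolves anyway. In these days the adversary can let \ALG spend on $i$ through other queries, or leave $i$ idle. We will show that the worst extreme point is one where \OPT is active on every day but the last is effectively wasted. Equivalently, the loss relative to perfect utilization is at most one day's marginal contribution. This yields a per-advertiser ratio of $\mathbb{E}[\min\{D,\sum_{d\le D}\min\{Z_d,\delta\}\}]-\mathbb{E}[\min\{D,\sum_{d\le D-1}\min\{Z_d,\delta\}\}]$, which is $V_D-V_{D-1}$ in unnormalized terms. Our first attempt at this is an exchange argument. We will show that $V_k(s)-V_{k-1}(s)$ is non-increasing in $k$ and in $s$, using the independence of the $Z_j$ and the concavity of $\min\{D-s,\cdot\}$. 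Consequently, moving \OPT's spend from any day to an earlier one, or dropping it, can only help the ratio. The worst case then collapses to comparing a $D$-day run against a $(D-1)$-day run, and the bound follows by summing over advertisers.
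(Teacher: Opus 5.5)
Your framework matches the paper's. The per-advertiser inequality you reduce to is the factor-revealing program of Lemma~\ref{LPFreeCertificate}, and the monotonicity of $V_k-V_{k-1}$ in $k$ and in the remaining budget is Lemma~\ref{LemmaMarginalValues}. What is missing is the step that connects the two. The adversary picks the online trajectory $(x_d)$ and the offline pattern $(o_d)$ jointly, and the day-$d$ penalty depends on $R_d=D-\sum_{\ell<d}x_\ell$. So no exchange argument over $o$ alone is valid until the dependence on $x$ is removed; you name this obstacle but never resolve it.

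The paper resolves it with equalization. Write $\mathcal{L}_f(x_d,o_d,R_d\mid d)=\mathcal{L}_f(x_d,1,R_d\mid d)-(1-o_d)\bigl(1-f(x_d\mid R_d,d)\bigr)$.
\begin{itemize}
\item Lemma~\ref{LemmaEqualization} gives $\mathcal{L}_f(x_d,1,R_d\mid d)=V_{D-d+1}(R_d)-V_{D-d}(R_{d+1})$. This telescopes to $V_D(D)$ for \emph{every} admissible $x$, not just under perfect utilization.
\item Monotonicity of $f$ in $x$ bounds each loss term by $1-f(0\mid R_d,d)$. Equalization at $x=0$ identifies this as $V_{D-d+1}(R_d)-V_{D-d}(R_d)$.
\item Only now does your lemma apply. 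Monotonicity in $r$ replaces $R_d$ by $D$. Since $\sum_d\bigl(V_{D-d+1}(D)-V_{D-d}(D)\bigr)=V_D(D)$, the numerator is at least $\sum_d o_d\bigl(V_{D-d+1}(D)-V_{D-d}(D)\bigr)$. Monotonicity in the number of days then finishes the proof.
\end{itemize}
The link you posit instead, that $1-f_d$ is the marginal value of one more unit of budget, is not correct. Already on the last day, $1-f(0\mid r,D)=1-e^{-c(r)}$, which is in general neither $V_0'\equiv0$ nor $V_1'(r)$. The identity the proof needs is the marginal value of one more \emph{day} at zero usage: $1-f(0\mid r,d)=V_{D-d+1}(r)-V_{D-d}(r)$.

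Your description of the extremal case is also wrong. The worst configuration is $x=0$ with $o=(1,0,\ldots,0)$, i.e., \OPT active only on day~1, which gives exactly $1-f(0\mid D,1)=V_D(D)-V_{D-1}(D)$. By contrast, perfect utilization gives $V_D(D)/D$, and \OPT active on days $1,\ldots,D-1$ gives $(V_D(D)-V_1(D))/(D-1)$. At $D=3$, $\delta=2$ these are about $0.765$ and $0.715$, versus the theorem's $0.616$. Likewise, the loss relative to perfect utilization is not at most one day's marginal contribution. In the worst case it is $V_{D-1}(D)$, the sum of all day-marginals except the first. Your exchange sentence (moving \OPT's spend earlier or dropping it can only lower the ratio) does lead to the right configuration. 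However, it contradicts the sentence before it, and it is justified only after $x$ has been eliminated as above.

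Finally, the covering term is not $\int_0^o(1-f_d)$. The reduced bids on \OPT's queries are evaluated at \ALG's current usage, so the coverage is at least $o\bigl(1-f(X_{id}\mid R_{id},d)\bigr)$, which is linear in $o$. The extreme-point reduction still holds, but because the ratio is linear-fractional in $o$, not because of concavity.
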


The resulting guarantee is exactly the marginal expected consumption contributed by one additional capped mean-one exponential random variable, subject to the horizon spending limit. Since each capped exponential has mean $1-\exp(-\delta)<1$, the probability that their cumulative sum reaches the horizon limit vanishes as $D$ grows. Therefore, the competitive ratio approaches $1-\exp(-\delta)$ as $D$ grows. This, perhaps surprisingly, implies that Algorithm \ref{alg:LPA} has the same asymptotic guarantee as the optimal guarantee in the thought experiment where the online algorithm faces no horizon spending limit at all. In other words, with the right algorithm, the horizon spending limit is asymptotically costless, and Algorithm \ref{alg:LPA} is asymptotically optimal even in the general setting.

\begin{restatable}{Corollary}{CorAsymptotic}\label{CorAsymptotic}
    Fix $\delta\geq1$. As $D\to\infty$, the competitive ratio of Algorithm~\ref{alg:LPA} for the $D$-day Adwords problem with $\delta$-flexibility converges to $1-\exp(-\delta)$.
\end{restatable}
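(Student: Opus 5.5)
The plan is to combine Theorem~\ref{Thm1} as a lower bound with an upper bound taken from the fresh-daily-budget thought experiment. Let $Y_d:=\min\{Z_d,\delta\}$ and $S_k:=\sum_{d=1}^{k}Y_d$. The $Y_d$ are i.i.d., bounded in $[0,\delta]$, and have mean
\[
\mu:=\mathbb{E}[Y_d]=\int_0^\delta e^{-z}\,dz=1-e^{-\delta}<1 .
\]

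\textbf{Upper bound.} This side needs no new work. As noted after the thought experiment, any algorithm in our model is feasible in the more permissive setting with a fresh limit $\delta B_i$ each day. There the optimal ratio is $1-e^{-\delta}$ (Appendix~\ref{SecFreshDailyBudget}). So the competitive ratio of Algorithm~\ref{alg:LPA} is at most $1-e^{-\delta}$ for every $D$.

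\textbf{Lower bound.} The step I expect to be the main obstacle is showing that the expression in Theorem~\ref{Thm1} tends to $\mu$. Write it as $\mathbb{E}[\min\{D,S_D\}]-\mathbb{E}[\min\{D,S_{D-1}\}]$. Since $S_D=S_{D-1}+Y_D$ and $Y_D\ge 0$, this difference equals $\mathbb{E}[Y_D]$ minus a correction. The correction is supported on the event that the sum hits the cap:
\[
\mathbb{E}\bigl[S_D-\min\{D,S_D\}\bigr]-\mathbb{E}\bigl[S_{D-1}-\min\{D,S_{D-1}\}\bigr]\le \mathbb{E}\bigl[(S_D-D)^+\bigr].
\]
The inequality holds because both terms are nonnegative. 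Because $S_D\le D\delta$, we get
\[
\mathbb{E}\bigl[(S_D-D)^+\bigr]\le D\delta\,\Pr[S_D> D].
\]
The $Y_d$ are bounded and $\mu<1$, so Hoeffding's inequality gives
\[
\Pr[S_D> D]\le \exp\!\bigl(-2D(1-\mu)^2/\delta^2\bigr)=\exp\!\bigl(-2De^{-2\delta}/\delta^2\bigr).
\]
Hence the correction is at most $D\delta\exp(-cD)$ with $c>0$ depending only on $\delta$, and it vanishes as $D\to\infty$. The lower bound is therefore at least $\mu-o(1)$.

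Combining the two bounds squeezes the competitive ratio to $1-e^{-\delta}$. One caveat: because $\delta\le D$, fixing $\delta$ and letting $D\to\infty$ is always admissible. A simpler alternative for the lower bound is to note $\mathbb{E}[\min\{D,S_D\}]-\mathbb{E}[\min\{D,S_{D-1}\}]\ge \mathbb{E}[Y_D\mathbf{1}\{S_D\le D\}]\ge \mu-\delta\Pr[S_D>D]$, with the same Hoeffding estimate.
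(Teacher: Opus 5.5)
Your proposal is correct and follows the paper's proof: Theorem~\ref{Thm1} gives the lower bound, the thought-experiment bound $1-e^{-\delta}$ gives the upper bound, and Hoeffding's inequality shows that $\Pr(S_D>D)$ decays exponentially in $D$. Your main route's bound $D\delta\Pr[S_D>D]$ loses a factor of $D$, which the exponential decay absorbs, while your ``simpler alternative'' $\mu-\delta\Pr[S_D>D]$ is exactly the pointwise estimate the paper uses.
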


It remains to understand the cost of the perfect-utilization assumption when $D$ is finite. Unfortunately, unlike in the
classical single-day Adwords problem, the  perfect-utilization restriction is not without loss of generality. The general setting is strictly ``harder'' and may have worst case scenarios that violate perfect-utilization, as the following
non-asymptotic result shows.

\begin{restatable}{Theorem}{TheoremThree}\label{Thm3}
For $D=\delta=2$, the optimal competitive ratio in the general setting is
\[
    1-e^{-h^*}\approx0.701,
\]
where $h^*\approx1.208$ is the unique solution to
\[
    (1+h^*)e^{h^*-2}=1.
\]
\end{restatable}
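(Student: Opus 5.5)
The proof has two halves: an algorithm for $D=\delta=2$ whose competitive ratio is at least $1-e^{-h^*}$, and an adversarial family showing that no online algorithm beats $1-e^{-h^*}$. Throughout we use the small-bid, continuous (fractional water-filling) view. Each day is a single-day Adwords instance in which \OPT spends at most $B_i$ and \ALG spends at most $2B_i$, with a joint horizon cap of $2B_i$.

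\textbf{Structure of the bad instances.} First reduce to a two-parameter family. On day~1 the adversary runs the classical upper-triangular (Balance-style) instance on a continuum of advertisers; \OPT fully uses its day-1 budget. The algorithm can then "invest" up to $2B_i$ in early advertisers. On day~2 the adversary sends demand only to the advertisers whose horizon budget is most depleted, possibly sending no queries at all, which violates perfect-utilization. \OPT, which spread each budget as $B_i$ per day, still earns on day~2 from those depleted advertisers. Parametrize the algorithm's day-1 behaviour by a threshold $h$: the largest normalized amount $h\in[1,2]$ it is willing to spend on an advertiser on day~1 in the triangular sequence.

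\textbf{Upper bound.} Write the two adversary responses.
\begin{itemize}
\item \emph{Stop after day~1.} Day~2 brings no new queries, so \OPT earns only day~1. An algorithm that effectively has budget $h$ in the triangular instance earns about $(1-e^{-h})$ of \OPT.
\item \emph{Continue on day~2.} Day~2 is aimed at advertisers where the algorithm spent heavily, so overspending on day~1 is punished. The resulting ratio is increasing as $h$ shrinks.
\end{itemize}
Equating the two responses should give the indifference condition $(1+h)e^{h-2}=1$. I expect the $(1+h)$ factor to arise from integrating the exponential water-filling profile, since $\int_0^h x e^{-x}dx$-type terms appear. Because the worst of the two branches is at most $1-e^{-h^*}$ for every $h$, and randomization does not help in this fractional small-bid model, no online algorithm beats $1-e^{-h^*}$.

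\textbf{Matching algorithm.} Use a reduced-bid algorithm in which day~1 has penalty $f_1(x)=e^{x-h^*}$, capped so that spending stops at $h^*$ (or a smooth variant), and day~2 has penalty $e^{x-2}$ applied to the remaining budget. Prove it is $(1-e^{-h^*})$-competitive with the primal-dual (factor-revealing LP) framework used for Theorem~\ref{Thm1}:
\begin{itemize}
\item Set duals $\alpha_i$ for the horizon budgets and $\beta_t$ for the queries from the penalty functions.
\item Check approximate dual feasibility for each query on each day.
\item Bound the dual objective by $\ALG/(1-e^{-h^*})$.
\end{itemize}
The case split is on whether advertiser $i$ reaches the day-1 threshold. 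The defining equation for $h^*$ is exactly what makes both cases tight.

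\textbf{Main obstacle.} The hardest step is the upper-bound analysis of the day-2 continuation. One must show that the adversary's best day-2 attack is captured by the single parameter $h$, and that the algorithm cannot hedge better with a non-threshold day-1 spending profile. I would first attempt this with a factor-revealing LP over day-1 spending profiles, and then argue that a threshold profile is optimal by an exchange argument.
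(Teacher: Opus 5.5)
\textbf{The algorithm you propose is not $(1-e^{-h^*})$-competitive.} Write $\gamma:=1-e^{-h^*}$. A day-1 penalty $e^{x-h^*}$ with spending stopped at $h^*$ fails on the following instance. On a day-1 upper-triangular block its usage profile is asymptotically $\min\{Z,h^*\}$, with $Z\sim\operatorname{Exp}(1)$. Now let day~2 send an upper-triangular block to every advertiser whose day-1 usage exceeds $2-h^*$. With remaining budget $2-x$, any algorithm averages at most $1-e^{x-2}$ on such a block, so the ratio is at most
\[
\frac{1-e^{-h^*}+e^{h^*-2}-(2h^*-1)e^{-2}}{1+e^{h^*-2}}\approx 0.662<\gamma .
\]
In fact any day-1 rule that never spends beyond $h^*$ falls strictly below $\gamma$ here. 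In the factor-revealing program the defect is at $(o_1,o_2)=(1,1)$: the pseudo-reward is $2-e^{-h^*}-e^{x-2}<2\gamma$ for all $x\in(2-h^*,h^*]$.

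The missing idea is the shape of the optimal day-1 penalty. It coincides with $e^{x-h^*}$ only on $[0,2-h^*]$, where the day-1-only constraint $F(x)+1-f(x)\geq\gamma$ is tight. It then switches to the perfect-utilization penalty $(3-x)e^{x-2}$ and keeps spending up to $x=2$, where the $(1,1)$ constraint $F(x)+2-f(x)-e^{x-2}\geq2\gamma$ is tight. Requiring continuity of both $f$ and $F=\int_0^x f$ at the switch point puts the switch at $2-h^*$ and yields $(1+h^*)e^{h^*-2}=1$. Per-query dual fitting with the natural splitting also cannot certify this value. The day-2 local factor is only $1-e^{-(2-x_1)}$, which drops below $\gamma$ once $x_1>2-h^*$. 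You need the amortized per-advertiser (LP-free) certificate. With it, the check reduces to three scalar inequalities in $x$, one for each $(o_1,o_2)\in\{(0,1),(1,1),(1,0)\}$.

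\textbf{The upper-bound mechanism is misplaced.} An arbitrary online algorithm is not described by a day-1 cap. Even for capped algorithms your two branches do not produce $h^*$. With day-1 profile $\min\{Z,h\}$, stopping gives $1-e^{-h}$, while day-2 blocks on the capped advertisers give $(1-e^{-2})/(1+e^{-h})$. These cross at $h=1$, and at $h=2$ the worse branch is about $0.76$. So the claim that the worse branch is at most $\gamma$ for every $h$ is false. Targeting all advertisers above $2-h$ instead still leaves the worse branch at $1-2e^{-2}\approx0.73$ when $h=2$.

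The paper instead fixes a threshold on the adversary's side: day-2 blocks go exactly to the advertisers whose remaining budget is at most $h^*$, whatever the algorithm did. This gives $(\ALG-\gamma\,\OPT)/B\leq\sum_i\psi(X_{i1})+o(N)$, where
\[
\psi(x)=x+\mathbf{1}\{x>2-h^*\}\bigl(1-e^{x-2}\bigr)-\gamma\bigl(1+\mathbf{1}\{x>2-h^*\}\bigr).
\]
The function $\psi$ is non-decreasing; it is continuous at $2-h^*$ because $1-e^{-h^*}=\gamma$. Hence any day-1 profile, including one that withholds spending to avoid day-2 arrivals, is dominated by the harmonic profile $\min\{Z,2\}$. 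Moreover, $\mathbb{E}[\psi(\min\{Z,2\})]=e^{-h^*}-(1+h^*)e^{-2}=0$ precisely by the defining equation. This monotone potential replaces both your one-parameter reduction and the unspecified exchange argument. In it, $h^*$ is the adversary's optimal threshold, the minimizer over $h$ of $\bigl(1-e^{-2}+e^{-2}(e^{h}-1-h)\bigr)/(1+e^{h-2})$, not a cap chosen by the algorithm.
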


This ratio is strictly below the perfect-utilization ratio at $D=\delta=2$,
which Theorem~\ref{Thm2} gives as approximately $0.729$ (see Figure \ref{fig:three_theorems_delta2}). Algorithmically, the
optimal first-day penalty in the general setting blends the penalty of
Algorithm~\ref{alg:LPA} with the exponential penalty. This is natural, since the
general setting must simultaneously handle (i) instances with no second-day
queries, where preserving budget for the future is pointless and the exponential
penalty is appropriate, and (ii) perfect-utilization instances, which call for a
more conservative penalty. As the remaining horizon budget decreases, the optimal
penalty accordingly shifts from the exponential form toward the
perfect-utilization penalty of Algorithm~\ref{alg:LPA}. Extending this exact
characterization beyond $D=2$, however, requires identifying and balancing a
growing family of candidate penalties that does not readily yield to analysis.

This modification is necessary for optimality at finite $D$, but its benefit
vanishes as $D$ grows: By Corollary~\ref{CorAsymptotic}, Algorithm~\ref{alg:LPA} is
asymptotically optimal. Therefore, the perfect-utilization setting does not always
pin down the exact finite-horizon worst case but it identifies the right
algorithmic structure for the large-horizon regime and supplies a tighter
finite-horizon upper bound for the general setting.

\begin{figure}[!htbp]
\centering
\begin{tikzpicture}
\begin{axis}[
    width=0.97\linewidth,
    height=6.1cm,
    xmode=log,
    log basis x=10,
    xmin=1.85, xmax=108,
    ymin=0.57, ymax=0.90,
    xtick={2,3,5,10,30,100},
    xticklabels={2,3,5,10,30,100},
    ytick={0.60,0.65,0.70,0.75,0.80,0.85,0.90},
    xlabel={Number of days $D$},
    ylabel={Competitive ratio (CR)},
    grid=major,
    major grid style={draw=gray!18},
    tick label style={font=\small},
    label style={font=\small},
    legend style={
        at={(0.5,-0.23)}, anchor=north,
        legend columns=2,
        draw=none, fill=none,
        font=\scriptsize
    },
    legend cell align=left,
]
\addplot[red!75!black,very thick,mark=*,mark size=2pt]
coordinates {
    (2,0.5939941503)
    (3,0.6157293749)
    (5,0.6293980674)
    (10,0.6675935984)
    (30,0.7589961651)
    (100,0.8469653495)
};
\addlegendentry{Lower bound on CR of Algorithm~\ref{alg:LPA} (Theorem~\ref{Thm1})}

\addplot[blue!65!black,very thick,mark=square*,mark size=2pt]
coordinates {
    (2,0.7293294335)
    (3,0.7650905800)
    (5,0.7989697335)
    (10,0.8303601295)
    (30,0.8561861027)
    (100,0.8641051876)
};
\addlegendentry{Optimal CR under perfect utilization (Theorem~\ref{Thm2})}

\addplot[black,only marks,mark=star,mark size=4pt]
coordinates {(2,0.7011878105)};
\addlegendentry{Optimal CR in the general setting (Theorem~\ref{Thm3})}

\addplot[gray!75!black,thick,dashed,mark=none]
coordinates {(2,0.8646647168) (100,0.8646647168)};
\addlegendentry{Thought experiment: $1-e^{-2}$ (Corollary~\ref{CorAsymptotic})}
\end{axis}
\end{tikzpicture}
\caption{Summary of the main competitive ratio bounds for $\delta=2$ at selected horizon lengths. The lower red curve shows the finite-horizon competitive ratio
guarantee certified by the LP-free analysis of Algorithm~\ref{alg:LPA} (Theorem~\ref{Thm1}), approaching $1-\exp(-2) \approx 0.865$ as $D$ grows. 
In particular, when $D=30$, Algorithm~\ref{alg:LPA} is at least $0.759$-competitive in the general setting, breaking the classical $1-1/e\approx0.632$ barrier with an approximately $20\%$ relative improvement in the guarantee for a monthly horizon. The upper blue curve is the optimal competitive ratio
under perfect utilization from Theorem~\ref{Thm2}, and hence an
upper bound for the general setting. Theorem~\ref{Thm3} identifies
the exact optimal competitive ratio in the general setting at $D=2$, approximately $0.701$,
strictly between these two bounds.
}
\label{fig:three_theorems_delta2}

\end{figure}

\paragraph{Technical Overview.}

Our analysis contends with three main difficulties: (i) the daily spending limits
are coupled across days through the horizon budget and past usage; (ii) the
appropriate penalty function therefore varies with both the remaining horizon
budget and the number of remaining days; and (iii) in the general setting the
offline usage is no longer fixed, so the analysis must track both online and
offline usages across days.

A direct application of the standard primal-dual analyses for Adwords
(e.g., \citealp{buchbinder2007online,devanur2013randomized}) is less transparent in our multi-day setting. The standard LP relaxation of the inflexible
offline benchmark decomposes across days into $D$ structurally identical linear
programs, one per day. Classical dual-fitting certifies a competitive ratio
$\alpha$ by exhibiting, for every advertiser $i$ and query $t$ on every day,
dual variables satisfying $\lambda_t+b_{it}\theta_i/B_i \geq\alpha\,b_{it}$. However, our algorithm's advantage over $1-1/e$ emerges only once gains
are amortized across the entire horizon. Fitting a dual
inequality for each query makes this approach cumbersome in our setting (see Appendix \ref{apdLPFreeJustification}). The LP-free framework of \citet{goyal2025asymptotically} builds this
amortization directly into the dual, replacing the per-query constraint with a
single constraint per advertiser. This makes it easy to define the candidate dual and lets us focus on designing the
right penalty functions and verifying that this condition holds with the
desired approximation factor. Concretely, we use a direct generalization of the
dual fitting of \citet{devanur2013randomized}, reducing the amortized analysis above to a \emph{factor-revealing
program}  over the
admissible online and offline usages of a \emph{single} advertiser. The same
certificate underlies all of our results, so the remaining work lies in
the design and analysis of the penalty functions.

Our next task is to choose penalty functions that maximize the worst-case value
of this program. Motivated by \citet{mehta2007adwords}, we first consider the more restricted perfect-utilization setting,  where \OPT exhausts every advertiser's nominal daily budget on every day, so that the offline usages are fixed at one and only the online usages remain.  We then work backward from the
last day. On the last day, the exponential
penalty of the thought experiment is optimal. At a high level, when moving backward, we choose each
day's penalty so that the current day's contribution, together with the worst
case over the remaining days, is the same no matter how much the algorithm spends
today. We call this the \emph{equalization property}. The common
value it produces corresponds to the expected sum of one capped mean-one
exponential per remaining day, truncated at the remaining budget, which is
exactly the guarantee in Theorem~\ref{Thm2}. This also gives us a reinterpretation of the classic $(1-1/e)$ guarantee as the expected value of a capped exponential random variable, and we refer to this as the \emph{exponential reinterpretation}. 
Applying the two properties recursively defines the optimal penalty functions (see Algorithm~\ref{alg:LPA}) and simultaneously
proves a lower bound on the guarantee. We show a matching upper bound by constructing a recursive generalization of the classical
upper-triangular hard instance. Together these prove Theorem~\ref{Thm2}.

We next analyze the algorithm in the general setting to prove its asymptotic optimality. At this point, the penalty functions are fixed, but the factor-revealing program again depends on both the online and offline usages. Here, using the equalization property, we eliminate the dependency on the \emph{online} usages from the factor-revealing program, now leaving only the offline usages. This step is almost a mirror image of before: we fixed the offline usages
to design the penalties, then remove the online usages to analyze the effect of variable offline usages. 
We show that the resulting guarantee is exactly the marginal expected consumption
contributed by one additional capped mean-one exponential
random variable, subject to the horizon spending limit, proving Theorem~\ref{Thm1} and Corollary~\ref{CorAsymptotic}.

Finally, we show that restricting to perfect utilization can strictly increase the optimal competitive ratio relative to the general setting. This contrasts with the classical Adwords problem \citep{mehta2007adwords}, where the guarantee derived under perfect utilization extends to arbitrary offline utilization without any loss. For $D=\delta=2$, we return to the original factor-revealing program and reformulate it based on the extreme cases of offline usages. These cases lead to different constraints, each point to a different family of penalty functions. The main difficulty is therefore to construct a single penalty function that balances these competing candidates, which appears to be quite involved for longer horizons.

\subsection{Related Work}
To the best of our knowledge, this is the first paper to study budget
flexibility from a worst-case perspective. Although the practice appears to be standard among advertising platforms, we are not aware of prior work that
quantifies its effect on the competitive ratio of online allocation. Several
lines of work are nonetheless closely related, and we discuss them below. Given the rich history and vast literature on online matching and Adwords, we focus on settings that are most directly related to ours and refer readers to \citet{mehta2010online,choi2020online,huang2024online, niazadeh2026modern} for comprehensive surveys and tutorials, including on recent developments. 

 Beginning with the online bipartite matching problem, which corresponds to binary bids and unit advertiser budgets, \citet{karp1990optimal} show that their randomized algorithm \texttt{RANKING} achieves the optimal competitive ratio of $1-1/e$ under an adversarial arrival sequence. Subsequently, \citet{kalyanasundaram2000optimal} consider the online $b$-matching problem, in which each bid is binary and all advertisers start with the same budget $b$, and show that the same optimal guarantee can be achieved by a deterministic algorithm as $b\to\infty$. \citet{mehta2007adwords} extend this to the Adwords problem in the small-bid
regime through \textsc{Balance}. \citet{buchbinder2007online} give a primal-dual
analysis of \textsc{Balance}, and \citet{devanur2013randomized} develop a
randomized primal-dual framework that recovers and unifies these classical
results. More recently, the LP-free framework of \citet{goyal2025asymptotically}, which we use, compares online allocation problems directly with the offline
benchmark without an explicit LP relaxation.

There have been several related directions that modify the classical Adwords problem to capture practical features beyond the standard formulation. One line of work relaxes the arrival model. For example,
\citet{devanur2009adwords,manshadi2012online,mirrokni2012simultaneous,devanur2019near}
obtain improved guarantees under more structured arrivals, including random-order
and stochastic models. We retain adversarial arrivals throughout. Our setting is also related to models in which resource availability changes over time. \citet{mehta2007adwords} show that \textsc{Balance} is robust to advertisers entering at different times. \citet{feng2025robustness} consider a setting where inventory is replenished according to an exogenous process, developing a batched generalization of the inventory balancing algorithm that retains the asymptotically optimal competitive ratio. \citet{kang2025black} further develop a black-box approach for exogenous replenishment in broader settings. However, the horizon budget in our model is fixed at the beginning, and no additional budget is introduced over time. Flexibility only changes how much of this fixed horizon budget may be spent on each day.

Closest in spirit to our work is a line that
identifies structure that breaks the $1-1/e$ barrier under adversarial arrivals. \citet{abolhassani2022online} show that in vertex-weighted online bipartite
matching the achievable guarantee improves as supply exceeds demand, and
\citet{barrientos2025online} obtain higher competitive ratios, under both
adversarial and stochastic arrivals, as the market becomes more imbalanced. 
\citet{manshadi2025online} study online matching with multichannel traffic and show that the competitive ratio improves with the fraction of degree-one arrivals, surpassing $1-1/e$ when this fraction is large. \citet{feng2025batching} consider settings in which online arrivals occur in multiple batches and show that the optimal competitive ratio is greater than $1-1/e$ by designing penalty functions that change across batches. Beyond linear objectives, \citet{devanur2012online} consider online matching with concave returns and characterize the optimal competitive ratio via differential equations, showing that strict concavity can improve the optimal guarantee beyond $1-1/e$. In a similar vein, \citet{aouad2023online} use the concavity of objective in online two-sided assortment optimization to derive algorithms with guarantee better than $1-1/e$.   We share the spirit of these
papers, however, the improvements resulting from budget flexibility are not (to our knowledge) directly related to supply-demand imbalance, concavity of the objective, or the degree and granularity of arrivals. 

Another related line of work varies what the algorithm knows about budgets:
\citet{vazirani2026towards} and
\citet{udwani2025adwords} consider Adwords with initially unknown budgets where the algorithm has strictly less information than in the classical
setting. In a different direction, \citet{choo2026learning} augment
\textsc{Balance} with predictions, trading off consistency when the advice is
accurate against worst-case robustness when it is not. Our algorithm knows every budget in advance and receives no advice.

Finally, our results are confined to the small-bid regime. Without that
assumption, \citet{kapralov2013online} show that no online algorithm is better
than $0.612$-competitive relative to the natural LP relaxation of Adwords, and
\citet{huang2024adwords} give a $0.5016$-competitive algorithm for general bids,
the first improvement over the greedy algorithm. Understanding the power of
budget flexibility beyond the small-bid regime is an interesting direction for
future work.
\paragraph{Outline.}
We first introduce the main algorithm (Algorithm \ref{alg:LPA}) in Section~\ref{SecModelAlgorithm}. Section~\ref{SecFramework} develops the LP-free framework and analyzes the perfect-utilization setting to derive the penalty functions of Algorithm~\ref{alg:LPA} and prove Theorem~\ref{Thm2}. Section~\ref{SecMainResult} then analyzes the general setting, establishing the asymptotic optimality result in Theorem~\ref{Thm1} and the exact two-day characterization in Theorem~\ref{Thm3}. Finally, Section~\ref{SecConclusion} concludes the paper and summarizes the main implications of our results.

\section{Horizon-Aware Reduced-Bid Algorithm}
\label{SecModelAlgorithm}

For ease of notation, we normalize all budget-related quantities of each advertiser $i\in I$ by its nominal daily budget $B_i$. Thus, each advertiser starts with a normalized horizon budget $D$, the daily spending limit of \ALG is $\delta$, and the daily spending limit of \OPT is $1$. Also, for $x \geq 0$, define
\[
    c(x):=\min\{x,\delta\}.
\]
In particular, if $r$ is the remaining normalized horizon budget of an advertiser at the beginning of a day, then $c(r)$ is its effective daily spending limit on that day.

A simple baseline algorithm for our setting is to ignore the flexibility and run classical \textsc{Balance} independently on each day, restricting the daily spending from every advertiser to $1$. Since \OPT is also subject to the same daily spending limit, the classical $1-1/e$ guarantee applies separately to each day and hence over the entire horizon.
Thus, this baseline algorithm is $(1-1/e)$-competitive for any $D$, and this guarantee is optimal if flexibility is ignored.\footnote{For example, if there are no arrivals after the first day, i.e., $Q_2=\cdots=Q_D=\emptyset$, the problem reduces to the single-day Adwords problem when flexibility is ignored. Therefore, the classical $1-1/e$ upper bound \citep{kalyanasundaram2000optimal,mehta2007adwords} applies.}

This baseline algorithm can be extended to the thought experiment introduced in Section~\ref{SecIntroduction}, where \ALG is only subject to a daily spending limit of $\delta$, without a horizon spending limit. In particular, applying the exponential penalty $f_\delta(x)=\exp(x-\delta)$ every day achieves the optimal competitive ratio of $1-\exp(-\delta)$. We refer readers to Appendix~\ref{SecFreshDailyBudget} for the proof.

This thought experiment also identifies the optimal penalty function on the last day of our main setting. If an advertiser enters the last day with remaining normalized horizon budget $r$, its effective spending limit is simply $c(r)$ and we can reduce the problem of designing the optimal penalty function for the last day to the single-day setting, for which the exponential penalty $f_{c(r)}(x)=e^{x-c(r)}$ is optimal.

To fully utilize flexibility in our setting, a natural first attempt is to use the exponential penalty on each day, with $c(r)$ determined by the remaining horizon budget at the beginning of that day. However, the following example shows that doing so throughout the horizon can be too aggressive and can even perform strictly worse than the baseline.

\begin{exampleblock}
\noindent\textbf{Example 1.} Consider $D=\delta=2$ and fix an arbitrarily small $\epsilon\in(0,1)$. There are two advertisers with nominal daily budgets
\[
    B_1=B,
    \qquad
    B_2=\frac{1+e^{-2}}{1-e^{-2}}B,
\]
where $B$ is sufficiently large. On day $1$, the first $2B$ queries are compatible with both advertisers. Advertiser $1$ bids one on every query, while advertiser $2$ bids
\[
    b_{2t}
    =
    \frac{(1-\epsilon)
    \left(1-e^{\frac{t-1}{B}-2}\right)}
    {1-e^{-2}},
    \qquad t\in[2B].
\]

Under the exponential penalty, \ALG prefers advertiser $1$ for these queries, as we verify below. After these arrivals, $B$ additional unit-bid queries arrive on day $1$ and another $B$ on day $2$, each compatible only with advertiser $1$.

\medskip
\noindent
\emph{Offline Optimum.} \OPT assigns the first $2B$ queries to advertiser $2$ and the remaining queries to advertiser $1$, which is feasible under its inflexible daily spending limits. Hence, as $B\to\infty$,
\[
    \frac{\OPT}{B}
    \geq
    2+\frac{1}{B}\sum_{t=1}^{2B}b_{2t}
    \longrightarrow
    2+(1-\epsilon)\frac{1+e^{-2}}{1-e^{-2}}.
\]

\medskip
\noindent
\emph{Online Algorithm.}
Suppose that the algorithm uses the exponential penalty
\[
    f_{\delta}(x)=e^{x-2}
\]
on the first day. Before the $t$-th query for $t \in [2B]$, advertiser $1$ has normalized usage $(t-1)/B$, while advertiser $2$ remains unused. Their reduced bids are therefore
\[
    1-e^{\frac{t-1}{B}-2}
    \qquad\text{and}\qquad
    (1-\epsilon)
    \left(1-e^{\frac{t-1}{B}-2}\right),
\]
respectively. Thus, the first $2B$ queries are assigned to advertiser $1$, exhausting its entire horizon budget on the first day, and
\[
    \ALG=2B.
\]
Letting $B\to\infty$ and then $\epsilon\to0$ gives
\[
    \frac{\ALG}{\OPT}
    \leq
    \frac{2(1-e^{-2})}{3-e^{-2}}
    \approx0.604
    <
    1-\frac1e.
\]
\end{exampleblock}

In Appendix~\ref{UBclassicBALANCE}, we extend this construction and show that, for every $D\geq 2$ and $\delta\in[1,D]$, a direct generalization of \textsc{Balance} cannot guarantee a competitive ratio greater than $1-1/e$. Although the exponential penalty is optimal on the last day, the above example suggests that the algorithm should be more ``conservative'' on earlier days, imposing a larger penalty on early spending to preserve budget for future arrivals.

At the same time, being uniformly conservative across days is also not desirable. For example, if on day $d$ there is at least $\delta (D-d+1)$ (normalized) budget available then we are essentially in the though-experiment regime and no budget needs to be preserved---the optimal penalty is again the exponential penalty. Therefore, the penalty should depend on both the remaining horizon budget and the number of remaining days. 

Motivated by these observations, we consider a \emph{horizon-aware reduced-bid algorithm}, whose penalty functions depend on both the remaining horizon budget and the number of remaining days. To formalize this dependence, at the beginning of day $d$, let $R_{id}$ denote the normalized remaining horizon budget of advertiser $i$. Upon the arrival of query $t\in Q_d$, let $x_{id}(t)$ denote the normalized amount already spent from advertiser $i$ on day $d$. Given a penalty function $f(\cdot\mid R_{id},d)$, which will be specified shortly, upon the arrival of query $t\in Q_d$, the algorithm computes advertiser $i$'s reduced bid as
\[
    b_{it}
    \left(
        1-f\left(x_{id}(t)\mid R_{id},d\right)
    \right),
\]
and assigns the query to an advertiser with the largest positive reduced bid. If no advertiser has a positive reduced bid, the query is discarded. Ties are broken arbitrarily. 

For any normalized remaining budget $r\geq0$ and day $d\in [D]$, we require $f(\cdot\mid r,d)$ to satisfy the following conditions.

\begin{Definition}[Valid Penalty Functions]
    Given $r>0$ and $d\in[D]$, a function
    \[
        f(\cdot\mid r,d):[0,c(r)]\to[0,1]
    \]
    is valid if:
    \begin{enumerate}
        \item $f(c(r)\mid r,d)=1$.
        \item $f(\cdot\mid r,d)$ is non-decreasing.
        \item $f$ is $C$-Lipschitz for some fixed constant $C>0$.
    \end{enumerate}
\end{Definition}

The first condition ensures that, once advertiser $i$ reaches its effective daily spending limit, its reduced bid becomes zero and the algorithm no longer assigns additional queries to that advertiser on the same day. The second condition ensures that the penalty does not decrease as more budget is spent from the advertiser on that day. The third is a regularity condition ensuring that the penalty does not change substantially over the small amount of budget consumed by a single query.\footnote{The classical exponential penalty and all penalty functions used in our lower-bound arguments satisfy this condition with $C=1$.}

To specify our choice of penalty functions, let
$Z_1,Z_2,\ldots$ be independent exponential random variables with mean one,
and define
\[
    \nu_\ell(x)
    :=
    \Pr\left(
        \sum_{k=1}^{\ell} c(Z_k)\geq x
    \right),
    \qquad
    \nu_0(x):=0.
\]

\begin{Definition}[Horizon-Aware Reduced-Bid Algorithm]
\label{DefLPF}
\refstepcounter{alg}\label{alg:LPA}
Given a remaining normalized horizon budget $r\geq0$ and the current day
$d\in[D]$, define
\begin{equation}\label{LPF}
    f(x\mid r,d)
    :=
    e^{x-c(r)}
    +
    \int_x^{c(r)}
        e^{x-z}\nu_{D-d}(r-z)\,dz,
    \qquad
    x\in[0,c(r)].
\end{equation}
We refer to the reduced-bid algorithm that uses these penalty functions as Algorithm~\thealg. Also, when needed, we write $f^{D,\delta}(x\mid r,d)$ instead to make the dependence on $D$ and $\delta$ explicit.
\end{Definition}

We note that the resulting penalty functions are indeed valid (see Appendix~\ref{apd_LPFproperties}). The first term in $f(\cdot \mid r, d)$ is the exponential penalty from $f_{c(r)}$, while the second term provides an additional correction for the remaining days. As the name ``horizon-aware'' suggests, even for the same normalized remaining horizon budget, the penalty function may change depending on the number of remaining days. The precise form of this correction originates from our analysis in Section~\ref{SecFramework}.

In the classical Adwords problem, where $D=1$, the only penalty function required is $f^{1,1}(\cdot\mid 1,1)$, and \citet{mehta2007adwords} shows that $f^{1,1}(x\mid 1,1)=\exp(x-1)$ is optimal. In general, if an advertiser is left with normalized remaining horizon budget $r$ on the last day, Algorithm~\ref{alg:LPA} chooses a penalty function of
\[
    f^{D, \delta}(x\mid r,D)=f_{c(r)}(x) = e^{x-c(r)},
\]
which matches the exponential penalty in the thought experiment. The following example illustrates how the additional correction term modifies this exponential penalty on earlier days, and how the resulting penalty depends on both the day and the remaining horizon budget.

\begin{exampleblock}
\noindent\textbf{Example 2.} Consider the case \(D=3\) and \(\delta=2\), and fix an advertiser, starting with a horizon budget of \(3\). On the first day, Algorithm~\ref{alg:LPA} uses the penalty function \(f^{3, 2}(x\mid3,1)\),  which is\footnote{The piecewise form appears because the correction term in \eqref{LPF} changes when the remaining horizon budget crosses the threshold \(\delta=2\).}
\[
f^{3, 2}(x\mid3,1)=
\begin{cases}
\left(e+5-2x-\dfrac{x^2}{2}\right)e^{x-3},
    & 0\le x\le1,\\[0.8em]
\left(e+6-4x+\dfrac{x^2}{2}\right)e^{x-3},
    & 1\le x\le2.
\end{cases}
\]

Now suppose we are at the beginning of the second day, and let \(x_1\) denote the normalized amount spent from the advertiser on the first day. To illustrate how the penalty changes with the remaining horizon budget, consider \(x_1=0,1,\) and \(2\). The corresponding remaining budgets are \(3,2,\) and \(1\), and the second-day penalty functions \(f^{3, 2}(x\mid3-x_1,2)\) are
\[
f^{3, 2}(x\mid3,2)=
\begin{cases}
(e+1)e^{x-3}, & 0\le x\le1,\\[0.3em]
(e+2-x)e^{x-3}, & 1\le x\le2,
\end{cases}
\]
\[
f^{3, 2}(x\mid2,2)=(3-x)e^{x-2},
    \qquad 0\le x\le2,
\]
and
\[
f^{3,2}(x\mid1,2)=(2-x)e^{x-1},
    \qquad 0\le x\le1\footnote{Since the remaining horizon budget is $1$, the effective daily spending limit is $c(1)=1$.}.
\]

Finally, if \(x_2\) denotes the normalized amount spent on the second day, the remaining horizon budget on the last day is \(3-x_1-x_2\). Since there are no future days, the correction term vanishes and the penalty reduces to 
\[
    f^{3, 2}(x\mid3-x_1-x_2,3)
    =
    e^{x-c(3-x_1-x_2)}, \qquad 0\le x\le c(3-x_1 - x_2).
\]

Figure~\ref{fig:exampleD3delta2} illustrates the dependence of the penalty on both the day and the remaining horizon budget. The left panel shows that, for the same remaining horizon budget \(r=3\), the penalty is more conservative earlier in the horizon and approaches the exponential form \(\exp(x-2)\) on the last day. The right panel shows that, for the same day \(d=2\), the penalty becomes more conservative as the remaining horizon budget decreases.
\end{exampleblock}

\begin{figure}[h]
\centering

\begin{minipage}{0.47\textwidth}
\centering
\begin{tikzpicture}
\begin{axis}[
    width=\textwidth,
    height=0.72\textwidth,
    xmin=0, xmax=2.05,
    ymin=0, ymax=1.05,
    xlabel={$x$},
    ylabel={$f(x\mid r,d)$},
    xtick={0,1,2},
    ytick={0,0.2,0.4,0.6,0.8,1},
    samples=200,
    thick,
    legend style={
        at={(0.98,0.02)},
        anchor=south east,
        draw=none,
        fill=none,
        font=\tiny
    },
    clip=false
]

\addplot[red, domain=0:1, forget plot] {exp(x-3)*(exp(1)+5-2*x - x^2/2)};
\addplot[red, domain=1:2] {exp(x-3)*(exp(1)+6-4*x + x^2/2)};
\addlegendentry{$f^{3, 2}(x\mid 3,1)$}

\addplot[blue, domain=0:1, forget plot] {exp(x-3)*(exp(1)+1)};
\addplot[blue, domain=1:2] {exp(x-3)*(exp(1)+2-x)};
\addlegendentry{$f^{3, 2}(x\mid 3,2)$}

\addplot[green!60!black, domain=0:2] {exp(x-2)};
\addlegendentry{$f^{3,2}(x\mid 3,3)$}

\draw[densely dashed, gray] (axis cs:1,0) -- (axis cs:1,1.02);

\end{axis}
\end{tikzpicture}

\vspace{0.3em}
\small (a) Fixed \(r=3\), compare days \(1,2,3\)
\end{minipage}
\hfill
\begin{minipage}{0.47\textwidth}
\centering
\begin{tikzpicture}
\begin{axis}[
    width=\textwidth,
    height=0.72\textwidth,
    xmin=0, xmax=2.05,
    ymin=0, ymax=1.05,
    xlabel={$x$},
    ylabel={$f(x\mid r,2)$},
    xtick={0,1,2},
    ytick={0,0.2,0.4,0.6,0.8,1},
    samples=200,
    thick,
    legend style={
        at={(0.98,0.02)},
        anchor=south east,
        draw=none,
        fill=none,
        font=\tiny
    },
    clip=false
]

\addplot[red, domain=0:1] {exp(x-1)*(2-x)};
\addlegendentry{$f^{3, 2}(x\mid 1,2)$}

\addplot[blue, domain=0:2] {exp(x-2)*(3-x)};
\addlegendentry{$f^{3, 2}(x\mid 2,2)$}

\addplot[green!60!black, domain=0:1, forget plot] {exp(x-3)*(exp(1)+1)};
\addplot[green!60!black, domain=1:2] {exp(x-3)*(exp(1)+2-x)};
\addlegendentry{$f^{3, 2}(x\mid 3,2)$}

\addplot[black, dotted, domain=0:2] {exp(x-2)};
\addlegendentry{$e^{x-2}$}

\draw[densely dashed, gray] (axis cs:1,0) -- (axis cs:1,1.02);

\end{axis}
\end{tikzpicture}

\vspace{0.3em}
\small (b) Fixed \(d=2\), compare remaining budgets \(r=3,2,1\)
\end{minipage}

\caption{Penalty functions in Example~2 for \(D=3\) and \(\delta=2\). The left panel shows that the penalty is more conservative earlier in the horizon: $f^{3,2}(x \mid 3, 1)$ lies above $f^{3,2}(x \mid 3, 2)$, which in turn lies above the last-day exponential penalty $f^{3,2}(x \mid 3, 3) = \exp(x-2)$. The right panel fixes the day to be $2$ and shows that a smaller remaining horizon budget leads to a more conservative penalty.}
\label{fig:exampleD3delta2}
\end{figure}

\section{Derivation and (Special Case) Analysis of Algorithm~\ref{alg:LPA}}
\label{SecFramework}

To analyze a reduced-bid algorithm, we start with some notation that is useful for comparing an online algorithm \ALG with the offline optimum \OPT. Recall that we also use \ALG and \OPT to denote their respective objective values. For each advertiser $i\in I$ and each day $d\in[D]$, let $X_{id} \in [0, \delta]$ denote the normalized amount spent by \ALG from advertiser $i$ on day $d$, and let $O_{id} \in [0, 1]$ denote the corresponding normalized amount spent by \OPT. Recall that $R_{id}$ denotes the normalized remaining horizon budget of advertiser $i$ at the beginning of day $d$. In terms of the total normalized budget used before day $d$,
\[
    R_{id}
    =
    D-\sum_{\ell<d}X_{i\ell}.
\]
and the objective values of \ALG and \OPT can be recovered from these normalized usages as
\[
    \ALG
    =
    \sum_{i\in I}\sum_{d\in[D]} B_iX_{id},
    \qquad
    \OPT
    =
    \sum_{i\in I}\sum_{d\in[D]} B_iO_{id}.
\]

Since our competitive ratio analysis must hold for every realization of the online and offline usages, we define the following sets of admissible usage vectors.

\begin{Definition}[Admissible Usages]
The set of admissible online usages is defined as
\[
    S_{\ALG}
    :=
    \left\{
        (x_1,\ldots,x_D)
        \,\middle|\,
        x_d\in[0,\delta]\ \forall d\in[D],
        \quad
        \sum_{d\in[D]}x_d\leq D
    \right\},
\]
and the set of admissible offline usages is defined as
\[
    S_{\OPT}
    :=
    \left\{
        (o_1,\ldots,o_D)
        \,\middle|\,
        o_d\in[0,1]\ \forall d\in[D],
        \quad
        \sum_{d\in[D]}o_d>0
    \right\}.
\]
\end{Definition}

We note that a direct application of the standard primal-dual approaches for Adwords (e.g., \citealp{buchbinder2007online, devanur2013randomized}) is less transparent in our multi-day setting. At a high level, these analyses construct a dual solution that satisfies the dual constraints locally for each advertiser and each query. In our setting, however, the penalty function and its domain depend on the remaining horizon budget and the number of remaining days, so the corresponding local guarantees naturally vary across days. To improve beyond the standard $1-1/e$ guarantee, this suggests amortizing these guarantees over the entire horizon for each advertiser, which naturally leads to the LP-free analysis of \citealp{goyal2025asymptotically}. We refer readers to Appendix~\ref{apdLPFreeJustification} for a more detailed comparison with the standard primal-dual approach. 

In Subsection~\ref{SecLPFree}, we formally introduce the LP-free framework. For any given family of valid penalty functions, this framework reduces the competitive ratio analysis of a reduced-bid algorithm to a factor-revealing program over the online and offline usages of a single advertiser. In Subsection~\ref{SecPerfectUtilization}, we use this factor-revealing program as a design tool in the perfect-utilization setting, where the offline usages are fixed at $1$. This allows us to derive the penalty functions used by Algorithm~\ref{alg:LPA} and establish its performance guarantee in this setting. In Subsection~\ref{SecPerfectUtilizationUB}, we establish a matching upper bound, proving the optimality of Algorithm~\ref{alg:LPA} under perfect utilization and completing the proof of Theorem~\ref{Thm2}.

\subsection{LP-Free Certificate}
\label{SecLPFree}

For each day $d\in[D]$ and each query $t\in Q_d$, recall that Algorithm~\ref{alg:LPA} chooses
\[
    i^*(t)
    \in
    \arg\max_{j\in I}
    b_{jt}\left(1-f(x_{jd}(t)\mid R_{jd},d)\right),
\]
and matches query $t$ to $i^*(t)$ if the corresponding reduced bid is positive.

With a slight abuse of notation, let $\OPT_{id}\subseteq Q_d$ denote the set of queries matched to advertiser $i$ by \OPT on day $d$. For each advertiser $i$, we construct nonnegative LP-free certificates $\theta_{id}$ and $\lambda_t$ and constants $\alpha,\beta>0$ satisfying
\begin{align}
    \sum_{d\in[D]}
    \left(
        \theta_{id}
        +
        \sum_{t\in\OPT_{id}}\lambda_t
    \right)
    &\geq
    \alpha B_i
    \sum_{d\in[D]}O_{id},
    &&\forall i\in I,
    \label{LPFreeCertificate1}
    \\
    \sum_{i\in I}\sum_{d\in[D]}
    \left(
        \theta_{id}
        +
        \sum_{t\in\OPT_{id}}\lambda_t
    \right)
    &\leq
    \beta\,\ALG.
    \label{LPFreeCertificate2}
\end{align}

The role of this certificate is to avoid directly comparing the online usage $X_{id}$ with the offline usage $O_{id}$ for each advertiser. Such a comparison can be too restrictive because \ALG and \OPT may assign the same query to different advertisers. Instead, for each advertiser $i$ and day $d$, we track 
\[
\theta_{id}+\sum_{t\in\OPT_{id}}\lambda_t. 
\] 
Later, we lower-bound its normalized value, which we refer to as the \emph{pseudo-reward}.

Motivated by the dual-splitting of \citet{devanur2013randomized}, we construct $\theta_{id}$ and $\lambda_t$ as follows. For each day $d\in[D]$ and each query $t\in Q_d$, the maximum reduced-bid component is assigned to $\lambda_t$, while the penalty contribution associated with the selected advertiser is accumulated in $\theta_{i^*(t)d}$ through integration over the advertiser's realized usage on day $d$. Specifically,
\begin{align}
    \lambda_t
    &:=
    b_{i^*(t)t}
    \left(
        1-
        f\left(
            x_{i^*(t)d}(t)
            \mid R_{i^*(t)d},d
        \right)
    \right),
    \label{LPFreeCandidateLambda}
    \\
    \theta_{id}
    &:=
    B_i\int_0^{X_{id}} f(u\mid R_{id},d)\,du.
    \label{LPFreeCandidateTheta}
\end{align}

Once we find constants $\alpha,\beta>0$ satisfying
\eqref{LPFreeCertificate1} and \eqref{LPFreeCertificate2}, the LP-free
certificate immediately gives a lower bound on the competitive ratio of the
reduced-bid algorithm.

\begin{Lemma}[{\citet{goyal2025asymptotically}}]
\label{LemmaLPFreeAbstract}
Suppose that, for an online algorithm \ALG, there exist nonnegative $\theta_{id}$'s and $\lambda_t$'s satisfying
\eqref{LPFreeCertificate1} and \eqref{LPFreeCertificate2} for some
$\alpha,\beta>0$. Then, \ALG is $\alpha/\beta$-competitive.
\end{Lemma}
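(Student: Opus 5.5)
The argument is a direct chain of inequalities between \OPT, the sum of certificates, and \ALG. Sum \eqref{LPFreeCertificate1} over all advertisers $i\in I$. On the right-hand side this gives
\[
\alpha\sum_{i\in I}B_i\sum_{d\in[D]}O_{id}=\alpha\,\OPT,
\]
using the identity $\OPT=\sum_i\sum_d B_iO_{id}$ stated above. The left-hand side becomes $\sum_{i}\sum_{d}\bigl(\theta_{id}+\sum_{t\in\OPT_{id}}\lambda_t\bigr)$, which is exactly the left-hand side of \eqref{LPFreeCertificate2}. Chaining the two inequalities therefore yields
\[
\alpha\,\OPT\le\sum_{i\in I}\sum_{d\in[D]}\Bigl(\theta_{id}+\sum_{t\in\OPT_{id}}\lambda_t\Bigr)\le\beta\,\ALG,
\]
so $\ALG\ge(\alpha/\beta)\OPT$ on every instance. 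This is precisely $\alpha/\beta$-competitiveness.

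The one subtle point is whether the sets $\OPT_{id}$ cause any double counting. They do not, because each query is assigned by \OPT to at most one advertiser on one day. Consequently each $\lambda_t$ appears at most once in the aggregated sum. In any case, \eqref{LPFreeCertificate2} is assumed directly in terms of this aggregated sum, so no separate accounting is needed. Nonnegativity of the certificates is also not used in the chain above. It matters only when verifying \eqref{LPFreeCertificate2} for a specific construction, where unassigned queries contribute $\lambda_t\ge0$ that must be dominated by \ALG.

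There is essentially no obstacle: the lemma is a bookkeeping identity. All the difficulty lies in later applications, namely exhibiting $\alpha,\beta$ for the particular certificates \eqref{LPFreeCandidateLambda}--\eqref{LPFreeCandidateTheta}.
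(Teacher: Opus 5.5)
Your proof is correct and is the same argument the paper gives: sum \eqref{LPFreeCertificate1} over advertisers, recognize the right-hand side as $\alpha\,\OPT$, and chain with \eqref{LPFreeCertificate2} to get $\alpha\,\OPT\le\beta\,\ALG$. Your side remarks are also accurate: double counting is irrelevant here, and nonnegativity is used only when verifying \eqref{LPFreeCertificate2} for the specific certificates, to bound the sum over $\OPT_{id}$ by the sum over all queries.
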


Indeed, summing \eqref{LPFreeCertificate1} over all advertisers and combining
it with \eqref{LPFreeCertificate2} gives
\[
    \beta\,\ALG
    \geq
    \alpha \sum_{i\in I}B_i\sum_{d\in[D]}O_{id}
    =
    \alpha\,\OPT.
\]

In particular, in the small-bid regime $\gamma=\max_{i, t}b_{it}/B_i\to0$, the above certificate reduces the analysis to the following factor-revealing program (FRP) over the admissible online and offline usages. The proof is deferred to Appendix~\ref{apd_LPfreeCertificate}.

\begin{restatable}{Lemma}{LPFreeCertificate}\label{LPFreeCertificate}
    Consider a reduced-bid algorithm \ALG with valid penalty functions $f(\cdot\mid r,d)$. Recall that $S_{\ALG}$ and $S_{\OPT}$ denote the sets of admissible online and offline usage vectors, respectively. In the small-bid regime, the competitive ratio of \ALG is lower-bounded by
\begin{align}\label{FRP}
    \Gamma_f
    :=
    \min_{\substack{
        (x_1,\ldots,x_D)\in S_{\ALG},\\
        (o_1,\ldots,o_D)\in S_{\OPT}
    }}
    \frac{
        \sum_{d\in[D]}
        \mathcal{L}_f
        \left(
            x_d,o_d,
            D-\sum_{\ell<d}x_\ell
            \mid d
        \right)
    }{
        \sum_{d\in[D]}o_d
    },
\end{align}
where
\[
    \mathcal{L}_f(x,o,r\mid d)
    :=
    \int_0^x f(u\mid r,d)\,du
    +
    o\bigl(1-f(x\mid r,d)\bigr),
\]
for $r\in[0,D]$, $d\in[D]$, $x\in[0,c(r)]$, and $o\in[0,1]$.
\end{restatable}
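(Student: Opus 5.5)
We will instantiate Lemma~\ref{LemmaLPFreeAbstract} with the certificates \eqref{LPFreeCandidateLambda}--\eqref{LPFreeCandidateTheta}. We take $\beta=1+O(C\gamma)$ and $\alpha=\Gamma_f-O(C\gamma)$, and then let $\gamma\to0$.

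\textbf{Step 1: bounding the total by $\ALG$ (condition \eqref{LPFreeCertificate2}).} Fix a query $t\in Q_d$ that \ALG assigns to $i^*=i^*(t)$. Its contribution to the left-hand side has two parts: $\lambda_t$, and the increment of $\theta_{i^*d}$ caused by $t$. The increment equals $B_{i^*}\int_{x}^{x+b_{i^*t}/B_{i^*}} f(u\mid R_{i^*d},d)\,du$, where $x=x_{i^*d}(t)$. Because $f(\cdot\mid r,d)$ is $C$-Lipschitz, this increment is at most $b_{i^*t}\bigl(f(x\mid R_{i^*d},d)+C\gamma\bigr)$. Adding $\lambda_t=b_{i^*t}(1-f(x\mid R_{i^*d},d))$ shows that query $t$ contributes at most $b_{i^*t}(1+C\gamma)$. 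Discarded queries have $\lambda_t=0$ and do not change any $\theta$. Each $\theta_{id}$ is exactly the sum of its per-query increments, so every term on the left is counted once, and summing gives \eqref{LPFreeCertificate2} with $\beta=1+C\gamma$.

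Two boundary cases need care. First, the last query of a day may overshoot the cap $c(R_{id})$ by at most $\gamma$. We handle this by extending $f$ by $1$ beyond $c(r)$, or by noting the overshoot only moves $X_{id}$ by $O(\gamma)$. Second, the horizon cap: the penalty reaches $1$ at $c(R_{id})\le R_{id}$, so the horizon budget is respected up to $O(\gamma)$.

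\textbf{Step 2: the per-advertiser inequality (condition \eqref{LPFreeCertificate1}).} Fix an advertiser $i$ and a day $d$, and take $t\in\OPT_{id}$. Since $i^*(t)$ maximizes the reduced bid, $\lambda_t\ge b_{it}(1-f(x_{id}(t)\mid R_{id},d))$; this also holds when $t$ is discarded, because then every reduced bid is $0$. By monotonicity of $f$, $x_{id}(t)\le X_{id}$ gives $1-f(x_{id}(t)\mid\cdot)\ge 1-f(X_{id}\mid R_{id},d)$. Summing over $t\in\OPT_{id}$, and using that these bids total $B_iO_{id}$, gives $\sum_{t\in\OPT_{id}}\lambda_t\ge B_iO_{id}(1-f(X_{id}\mid R_{id},d))$. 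Adding $\theta_{id}$ yields $\theta_{id}+\sum_{t\in\OPT_{id}}\lambda_t\ge B_i\mathcal{L}_f(X_{id},O_{id},R_{id}\mid d)$.

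We then sum over $d$. The vector $(X_{i1},\dots,X_{iD})$ is (up to $O(\gamma)$) in $S_{\ALG}$, and $R_{id}=D-\sum_{\ell<d}X_{i\ell}$. If $\sum_d O_{id}=0$, the inequality is trivial. Otherwise $(O_{id})_d\in S_{\OPT}$, and the definition of $\Gamma_f$ gives $\sum_d\mathcal{L}_f\ge\Gamma_f\sum_dO_{id}$. This is \eqref{LPFreeCertificate1} with $\alpha=\Gamma_f$, modulo the $O(\gamma)$ perturbation of $X$.

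\textbf{Step 3: the main obstacle.} The delicate part is controlling the $O(\gamma)$ slack in Step 2. The realized usage may slightly exceed the caps, and $\mathcal{L}_f$ depends on $r$ through both $f(\cdot\mid r,d)$ and its domain $[0,c(r)]$. We therefore need $\mathcal{L}_f$ to be continuous in $(x,r)$ for the given valid family, so that rounding $X_{id}$ back into $S_{\ALG}$ costs only $o(1)\cdot\sum_dO_{id}$. One route is to clip the last query of each day. Another is to observe that the overshoot is at most $\gamma$ per day and bound the resulting change of $\mathcal{L}_f$ by $(1+C)\gamma$ per day. Since $\sum_d O_{id}$ may be tiny relative to $D$, the day-wise error must be charged carefully. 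Note that $o_d$ multiplies the $f$-term, while the $\int f$ term is monotone in $x$; so reducing $x$ slightly lowers the integral and raises $1-f$, and the net error is $O(\gamma)$ per day only when weighted appropriately. If this charging fails, we fall back on absorbing the error into $\beta$ via $\ALG$, using that overshoot occurs only on days with positive usage. Finally, letting $\gamma\to0$ gives competitive ratio at least $\Gamma_f$.
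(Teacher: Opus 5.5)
Your Steps~1--2 follow the structure of the paper's proof: the dual splitting, the Lipschitz bound on each $\theta$-increment, reduced-bid maximality plus monotonicity of $f$, and then the definition of $\Gamma_f$. The gap is the boundary-query bookkeeping, which is the only non-routine part of the argument. You leave it open, and the route you lean on does not close it.

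Suppose a matched query is charged its full bid and usage may overshoot $c(R_{id})$. Then the quantity $\sum_t b_{i^*(t)t}$ you bound in Step~1 exceeds the algorithm's actual, budget-feasible revenue, so Step~1 alone does not yield \eqref{LPFreeCertificate2}. In Step~2 the vector $(X_{id})_d$ also leaves $S_{\ALG}$. Your perturbative repair needs $\mathcal{L}_f$ to be continuous in $r$, and validity does not provide this: it only constrains $f(\cdot\mid r,d)$ as a function of $x$. And as you note yourself, an additive $O(\gamma)$-per-day error cannot be absorbed into $\Gamma_f\sum_d O_{id}$ when $\sum_d O_{id}$ is small. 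So Step~3 lists options rather than closing the argument.

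The fix is to put all the error into $\beta$, as the paper does.
Charge $\Delta_t\le b_{i^*(t)t}$, truncated at the remaining cap. Then $X_{id}\le c(R_{id})$ and $\sum_d X_{id}\le D$ hold exactly. Step~2 then gives $\alpha=\Gamma_f$ with no error, using only that \OPT's bids on $\OPT_{id}$ total at least (not exactly) $B_iO_{id}$. Your Step~1 computation, with each increment integrated over $[x,\,x+\Delta_t/B_i]$, gives
\[
\sum_{i\in I}\sum_{d\in[D]}\Bigl(\theta_{id}+\sum_{t\in\OPT_{id}}\lambda_t\Bigr)\le\Bigl(1+\tfrac{C\gamma}{2}\Bigr)\ALG+\sum_{t\text{ matched}}\bigl(b_{i^*(t)t}-\Delta_t\bigr).
\]
What is missing is a bound on the last sum. ``Overshoot occurs only on days with positive usage'' is not enough, since the usage $c(R_{id})$ on such a day can be arbitrarily small.

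The needed observation has three parts:
\begin{itemize}
\item A truncated query occurs at most once per advertiser-day, and only when the cap $c(R_{id})$ is reached.
\item If $R_{id}\ge\delta$, that day's usage is $\delta\ge1$.
\item Otherwise the horizon budget is exhausted, so the advertiser's total usage is $D$, which is at least its number of truncated queries.
\end{itemize}
Hence the number $m_i$ of truncated queries of advertiser $i$ satisfies $m_i\le\sum_d X_{id}$. The discrepancy is therefore at most $\gamma\ALG$, which gives $\beta=1+(1+C/2)\gamma$ and the claimed bound as $\gamma\to0$.
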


We refer to $\mathcal{L}_f(x,o,r\mid d)$ as the \emph{pseudo-reward} associated with online usage $x$, offline usage $o$, and remaining horizon budget $r$ on day $d$. For a fixed advertiser $i$ on day $d$, its first term corresponds to the normalized penalty contribution $\theta_{id}/B_i$, while the second term lower-bounds the normalized reduced-bid contribution $\sum_{t\in\OPT_{id}}\lambda_t/B_i$ in the small-bid regime.

\subsection{Penalty Functions from the Perfect-Utilization Setting}
\label{SecPerfectUtilization}

For any reduced-bid algorithm specified by a fixed family of valid penalty functions, the factor-revealing program in \eqref{FRP} gives a lower bound on its competitive ratio by minimizing over both the online usages $(x_d)_{d\in[D]}$ and the offline usages $(o_d)_{d\in[D]}$. For a fixed remaining budget $r$ and day $d$, the offline usage $o$ directly affects the reduced-bid term $1-f(x\mid r,d)$ in the pseudo-reward, while the penalty term $\int_0^x f(u\mid r,d)\,du$ is unaffected. Therefore, different levels of offline utilization may lead to different families of penalty functions, as we later see in the proof of Theorem~\ref{Thm3}. Rather than directly seeking a single family of penalty functions that balances these competing cases, we first consider a more structured setting.

Specifically, we consider the perfect-utilization setting, where \OPT exhausts every advertiser's nominal daily budget on every day. In this setting, $O_{id}=1$ for every $i\in I$ and $d\in[D]$, so the offline usages are fixed and the factor-revealing program in \eqref{FRP} reduces to an optimization only over the online usages.\footnote{The same restriction is also chosen by \citet{mehta2007adwords}, where it is without loss of generality.} Although this setting does not capture the full difficulty of the general problem in our multi-day setting, as Theorem~\ref{Thm3} shows that partial utilization can lead to a strictly worse competitive ratio, we show that solving this restricted program yields the horizon-aware penalty functions used by Algorithm~\ref{alg:LPA}, together with its performance guarantee in the perfect-utilization setting.

Under perfect utilization, the denominator of the FRP in \eqref{FRP} is simply $D$. Thus, to design the penalty functions that maximize $\Gamma_f$ in \eqref{FRP}, it suffices to analyze the minimum cumulative pseudo-reward over the admissible online usages,
\[
    \min_{(x_1,\ldots,x_D)\in S_{\ALG}}
    \sum_{d\in[D]}
    \mathcal{L}_f
    \left(
        x_d,1,
        D-\sum_{\ell<d}x_\ell
        \mid d
    \right).
\]
Rather than optimizing over all $D$ usages simultaneously, we analyze this program recursively. For example, after fixing the first-day usage $x_1\in[0,\delta]$, we can separate the first-day pseudo-reward from the minimum cumulative pseudo-reward over the remaining days:
\[
    \min_{x_1\in[0,\delta]}
    \left\{
        \mathcal{L}_f(x_1,1,D\mid1)
        +
        \min_{\substack{
            (x_2,\ldots,x_D):\\
            (x_1,\ldots,x_D)\in S_{\ALG}
        }}
        \sum_{d=2}^D
        \mathcal{L}_f
        \left(
            x_d,1,
            D-\sum_{\ell<d}x_\ell
            \mid d
        \right)
    \right\}.
\]
The second term can be viewed as a continuation value determined by the remaining horizon budget $D-x_1$ over the remaining $D-1$ days. More generally, we characterize these continuation values recursively by choosing the current-day penalty so that the overall pseudo-reward is equalized over the current-day usage. We show that continuation values admit a simple interpretation in terms of capped exponential random variables, which also leads to the penalty functions defined in Definition~\ref{DefLPF}.

Before proceeding, recall that
\[
    c(x):=\min\{x,\delta\}.
\]
Let
\[
    S_\ell
    :=
    \sum_{k=1}^{\ell} c(Z_k),
\]
where $Z_1,Z_2,\ldots$ are independent exponential random variables with mean one. With this notation, the function introduced in Definition~\ref{DefLPF} can be written as
\[
    \nu_\ell(x)
    :=
    \Pr(S_\ell\geq x).
\]
For convenience, set $S_0:=0$ and $\nu_0(x):=0$. We further define
\begin{equation}\label{DefV}
    V_\ell(x)
    :=
    \mathbb{E}\!\left[\min\{S_\ell,x\}\right]
    =
    \int_0^x \nu_\ell(z)\,dz,
    \qquad
    V_0(x):=0.
\end{equation}

For each day $d\in[D]$ and remaining normalized horizon budget $r \in [0, D]$ at the beginning of day $d$, define
\[
    W_d(r)
    :=
    \min_{\substack{
        (x_d,\ldots,x_D)\in[0,\delta]^{D-d+1}\\
        \sum_{j=d}^D x_j\leq r
    }}
    \sum_{j=d}^D
    \mathcal{L}_f
    \left(
        x_j,1,
        r-\sum_{m=d}^{j-1}x_m
        \mid j
    \right).
\]
Thus, $W_d(r)$ represents the minimum cumulative pseudo-reward from day $d$ through day $D$, when the normalized horizon budget available at the beginning of day $d$ is $r$.

Our main goal in this subsection is to discuss the key ideas behind the following lemma.
\begin{Lemma}\label{Thm2LBLemma}
For every $d\in[D]$ and $r \in [0, D]$, $W_d(r)=V_{D-d+1}(r)$.
\end{Lemma}
To see why this identity is useful, consider the initial state $d=1$ and $r=D$. It gives
\[
    W_1(D) = 
    \min_{(x_1,\ldots,x_D)\in S_{\ALG}}
    \sum_{d\in[D]}
    \mathcal{L}_f
    \left(
        x_d,1,
        D-\sum_{\ell<d}x_\ell
        \mid d
    \right)
=V_D(D),
\]
which immediately yields the competitive ratio $V_D(D)/D$ through the factor-revealing program in \eqref{FRP} in the perfect-utilization setting.

We prove the identity by backward induction on $d$, as the last day provides a natural starting point. If the remaining horizon budget at the beginning of the last day is $r$, as suggested by the thought experiment, the optimal penalty on the last day is the exponential penalty
\[
    f(x\mid r,D)=f_{c(r)}(x)=e^{x-c(r)}.
\]
Under this penalty, regardless of the last-day usage, the last-day pseudo-reward is always
\[
    V_1(r)=1-e^{-c(r)}.
\]
This gives the desired identity for the base case.

Moving backward, suppose that the minimum pseudo-reward from day $d+1$ onward can be represented by $V_{D-d}(r-x)$ after spending $x$ on day $d$. We then choose the day-$d$ penalty so that the current pseudo-reward together with this continuation value is independent of the current-day usage $x$. We identify the resulting common value as $V_{D-d+1}(r)$, yielding the desired backward recursion.

This argument relies on two main ingredients. First, we choose the penalty function so that the current pseudo-reward together with the continuation value is equalized over every feasible current-day usage, which we refer to as the \emph{equalization property}. This equalization ensures that the same guarantee applies uniformly across all advertisers (perhaps with different online usages), allowing us to focus on a single advertiser. Second, we identify the resulting equalized value through a probabilistic interpretation based on a capped mean-one exponential random variable, which we refer to as the \emph{exponential reinterpretation}. We illustrate these two ingredients in the case $D=\delta=2$ and defer some of the details to Appendix~\ref{apdPerfectUtilizationClaims}.

\begin{exampleblock}
\noindent\textbf{Example 3.}
Fix an advertiser, and suppose that its first-day normalized usage is $x_1\in[0,2]$. The remaining horizon budget on the second day is $r=2-x_1$. On the second day, our choice of the penalty function reduces to
\[
    f^{2, 2}(x_2\mid r,2)=e^{x_2-r},
    \qquad x_2\in[0,r].
\]
Substituting it into the pseudo-reward gives
\begin{align}
    \mathcal{L}_f(x_2,1,r\mid2)
    &=
    \int_0^{x_2}e^{u-r}\,du
    +1-e^{x_2-r}
    \nonumber\\
    &=
    \bigl(e^{x_2-r}-e^{-r}\bigr)
    +\bigl(1-e^{x_2-r}\bigr)
    =
    1-e^{-r}
    =
    V_1(r).
    \label{Example3Eq1}
\end{align}
Note that the dependence on $x_2$ cancels. When $r=1$, this recovers the $1-1/e$ guarantee of \textsc{Balance} in the classical Adwords problem \citep{mehta2007adwords, devanur2013randomized}. More generally, this cancellation shows that the last-day pseudo-reward is equalized across all feasible values of $x_2$, which we refer to as equalization property.

Moreover,
\[
    V_1(r)
    =
    \mathbb{E}[\min\{Z_2,r\}],
    \qquad
    Z_2\sim\operatorname{Exp}(1),
\]
which gives the exponential interpretation of the continuation value.

We next determine the first-day penalty. By \eqref{Example3Eq1},
\[
    W_1(2)
    =
    \min_{x_1\in[0,2]}
    \left\{
        \mathcal{L}_f(x_1,1,2\mid1)
        +
        V_1(2-x_1)
    \right\}.
\]
Following the equalization property, define
\[
    H(x)
    :=
    \mathcal{L}_f(x,1,2\mid1)
    +
    V_1(2-x),
\]
and choose $f^{2, 2}(x\mid2,1)$ so that $H(x)$ is constant over $x\in[0,2]$.
The condition $H'(x)=0$, together with the boundary condition $f^{2, 2}(2\mid2,1)=1$, gives a unique solution of
\[
    f^{2, 2}(x\mid2,1)
    =
    (3-x)e^{x-2},
\]
matching Definition~\ref{DefLPF}. Under this choice, for every $x \in [0, 2]$,
\[
    H(x)
    =
    2-4e^{-2}
    =
    V_2(2) = W_1(2).
\]
For comparison, suppose that we instead use the exponential penalty
\[
    \widetilde f(x)=e^{x-2}
\]
on the first day as well. The corresponding two-day pseudo-reward is
\[
    \widetilde H(x)
    =
    2-e^{-2}-e^{x-2},
\]
which is minimized at $x=2$. Hence, under this exponential first-day penalty, the LP-free analysis certifies only the lower total pseudo-reward of
\[
    \min_{x\in[0,2]}\widetilde H(x)
    =
    1-e^{-2} < 2 - 4e^{-2}.
\]
\end{exampleblock}

Based on these constructions, we formalize the equalization property and the exponential reinterpretation in the next lemma, whose proof is deferred to Appendix~\ref{apdPerfectUtilizationClaims}.

\begin{restatable}{Lemma}{LemmaEqualization}\label{LemmaEqualization}
For each day $d\in[D]$ and remaining horizon budget $r\in[0,D]$, let
\[
    H_{d,r}(x)
    :=
    \mathcal{L}_f(x,1,r\mid d)
    +
    V_{D-d}(r-x),
    \qquad
    x\in[0,c(r)].
\]
Then,
\begin{enumerate}
    \renewcommand{\labelenumi}{(\roman{enumi})}

    \item \textbf{Equalization.}
    The function $H_{d,r}(x)$ is constant over $x\in[0,c(r)]$.

    \item \textbf{Exponential reinterpretation.}
    The constant in part~(i) is $V_{D-d+1}(r)$.
\end{enumerate}
\end{restatable}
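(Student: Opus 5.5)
The plan is a direct computation. I will show that $H_{d,r}$ has zero derivative on $[0,c(r)]$, which gives part (i). I will then evaluate $H_{d,r}$ at $x=0$ and identify that value probabilistically, which gives part (ii). Throughout, fix $d$ and $r$ and write $m:=D-d$ and $c:=c(r)$. Abbreviate $f(x):=f(x\mid r,d)$. Recall from \eqref{DefV} that $V_m(y)=\int_0^y\nu_m(z)\,dz$, so $V_m$ is absolutely continuous with $V_m'(y)=\nu_m(y)$ almost everywhere. The trivial cases are handled separately. If $r=0$, the domain is the single point $\{0\}$ and $H$ is trivially constant. If $m=0$, then $\nu_0\equiv0$ and $V_0\equiv0$, and the argument below still applies verbatim.

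\textbf{Equalization.} I first rewrite \eqref{LPF} as
\[
f(x)=e^{x}\Bigl(e^{-c}+\int_x^{c}e^{-z}\nu_m(r-z)\,dz\Bigr).
\]
This exhibits $f$ as absolutely continuous. Differentiating (a.e.) gives
\[
f'(x)=f(x)-\nu_m(r-x),
\]
so $f$ is the solution of the linear ODE $f-f'=\nu_m(r-\cdot)$ with terminal condition $f(c)=1$. Next I differentiate
\[
H_{d,r}(x)=\int_0^x f(u)\,du+1-f(x)+V_m(r-x).
\]
This yields $H_{d,r}'(x)=f(x)-f'(x)-\nu_m(r-x)=0$ almost everywhere. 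Since $H_{d,r}$ is absolutely continuous, it is constant on $[0,c]$. The one point needing care is that $\nu_m$ can be discontinuous. For $m\ge1$, $S_m$ may have atoms, for instance at $m\delta$ when $\delta$ is finite, so I will argue with integral identities and a.e.\ derivatives rather than pointwise ones. I expect this to be the only technical subtlety in part (i).

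\textbf{Exponential reinterpretation.} By part (i), the constant equals
\[
H_{d,r}(0)=1-f(0)+V_m(r), \qquad\text{where } f(0)=e^{-c}+\int_0^c e^{-z}\nu_m(r-z)\,dz.
\]
So it suffices to show $V_{m+1}(r)-V_m(r)=1-f(0)$. Write $S_{m+1}=S_m+c(Z)$ with $Z\sim\mathrm{Exp}(1)$ independent of $S_m$. I will use the pathwise identity
\[
\min\{S_m+w,r\}-\min\{S_m,r\}=\int_0^w\mathbf 1\{S_m+z<r\}\,dz .
\]
Taking expectations and using independence together with $\Pr(c(Z)>z)=e^{-z}$ for $z<\delta$, the difference becomes
\[
V_{m+1}(r)-V_m(r)=\int_0^{\delta}e^{-z}\bigl(1-\nu_m(r-z)\bigr)\,dz .
\]
The integrand vanishes for $z\ge r$, because $S_m\ge0$ forces $\nu_m(r-z)=1$ there. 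Hence the upper limit can be replaced by $c=\min\{r,\delta\}$. Evaluating, this equals
\[
(1-e^{-c})-\int_0^c e^{-z}\nu_m(r-z)\,dz=1-f(0),
\]
as required.

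The main obstacle is bookkeeping rather than ideas. I will need to check the boundary conventions carefully: $\nu_m(y)=1$ for $y\le0$ when $m\ge1$, whereas $\nu_0\equiv0$. In the $m=0$ case, the range $z\ge r$ is excluded anyway because $c\le r$, so the two conventions do not conflict. I will also need to confirm that replacing $\delta$ by $c$ in the upper limit is consistent with the domain $x\in[0,c(r)]$ on which the penalty is defined.
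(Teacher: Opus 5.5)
Your proof is correct. Part (i) is the paper's argument ($f'=f-\nu_{D-d}(r-\cdot)$ a.e., hence $H'=0$ a.e.), with the absolute-continuity justification spelled out.

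For part (ii) you take a different route. The paper never evaluates $H$ at a specific point. Instead it averages the constant $H$ over a random usage $x=Y_r=\min\{Z,c(r)\}$, using the identity $\mathbb{E}[\int_0^{Y_r}h+1-h(Y_r)]=1-e^{-c(r)}$. That identity holds for every continuous $h$ with $h(c(r))=1$, which gives $H\equiv\mathbb{E}[Y_r+V_{D-d}(r-Y_r)]=V_{D-d+1}(r)$. The paper's argument uses nothing about $f$ beyond equalization and the boundary condition. So it shows the constant is forced for any equalizing valid penalty. It also supplies the probabilistic reading (today's usage behaves like a capped exponential), which mirrors the upper-bound construction.

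Your route evaluates at $x=0$ and uses the explicit formula for $f(0\mid r,d)$. You then compute the marginal $V_{m+1}(r)-V_m(r)$ by a tail/Fubini identity. This is the representation $\mathbb{E}[\min\{c(Z),(r-S_m)^+\}]$ that the paper uses in Lemma~\ref{LemmaMarginalValues}. Your approach is more elementary, and it directly yields $1-f(0\mid r,d)=V_{D-d+1}(r)-V_{D-d}(r)$, which the paper later extracts from equalization in the proof of Theorem~\ref{Thm1}.

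One bookkeeping caution. Under the convention $\nu_0\equiv0$, your intermediate formula $\int_0^\delta e^{-z}(1-\nu_m(r-z))\,dz$ is wrong for $m=0$ when $r<\delta$. Write it first with the true tail $\Pr(S_m\ge r-z)$, which equals $1$ for $z\ge r$ even when $m=0$. Then truncate to $[0,c]$. Only after that switch to $\nu_0$, which agrees with the true tail for $z<r$.
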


Now, we are ready to prove Lemma~\ref{Thm2LBLemma}.

\begin{proof}[Proof of Lemma~\ref{Thm2LBLemma}]
We prove the lemma by backward induction on $d$.

For the base case $d=D$, by the definition of $W_D$ and \eqref{LPF},
\[
    W_D(r)
    =
    \min_{0\leq x\leq c(r)}
    \mathcal{L}_f(x,1,r\mid D) = \min_{0 \leq x \leq c(r)} 1 - e^{-c(r)} = V_1(r).
\]

Now fix $d<D$ and suppose that
\[
    W_{d+1}(r')=V_{D-d}(r')
\]
for every $r'\in[0,D]$. Separating the current-day usage from the remaining
usages gives
\[
    W_d(r)
    =
    \min_{0\leq x\leq c(r)}
    \left\{
        \mathcal{L}_f(x,1,r\mid d)
        +
        W_{d+1}(r-x)
    \right\}.
\]
By the induction hypothesis,
\[
    W_d(r)
    =
    \min_{0\leq x\leq c(r)}
    \left\{
        \mathcal{L}_f(x,1,r\mid d)
        +
        V_{D-d}(r-x)
    \right\}.
\]
By Lemma~\ref{LemmaEqualization}, the expression inside the braces is
$V_{D-d+1}(r)$ for every feasible $x$. Hence,
\[
    W_d(r)=V_{D-d+1}(r).
\]
This completes the backward induction.
\end{proof}

\subsection{Optimality of Algorithm~\ref{alg:LPA} in the Perfect-Utilization Setting}
\label{SecPerfectUtilizationUB}

Recall that Theorem~\ref{Thm2} states that, in the perfect-utilization setting, Algorithm~\ref{alg:LPA} achieves the optimal competitive ratio $V_D(D)/D$. Using the LP-free framework, we have constructed the penalty functions in Algorithm~\ref{alg:LPA} through this restricted setting so that the resulting pseudo-rewards satisfy the equalization property. Then, combining Lemma~\ref{LPFreeCertificate} and Lemma~\ref{Thm2LBLemma} proves that Algorithm~\ref{alg:LPA} is at least $V_D(D)/D$-competitive in the perfect-utilization setting.

It now remains to show that this guarantee is optimal: no online algorithm can achieve a competitive ratio greater than $V_D(D)/D$ under perfect utilization. To prove this upper bound, we fix an arbitrary deterministic fractional online algorithm and adaptively construct an instance on which its competitive ratio is at most $V_D(D)/D$. It is sufficient to consider deterministic fractional online algorithms (and thus use an adaptive adversary), as it is Adwords folklore that any upper bound for deterministic fractional online algorithms also applies to randomized online algorithms. Given a randomized online algorithm, the conditional expected allocation at each arrival defines a deterministic fractional online algorithm that remains feasible and achieves exactly the expected reward of the randomized algorithm. We give a formal reduction in Appendix~\ref{apdDetJustification} and use this reduction in the proofs of Theorem~\ref{Thm2} and, later, Theorem~\ref{Thm3}.

Our upper-bound construction is built from a single-day upper-triangular instance that we apply repeatedly across days. This building block allows \OPT to fully utilize the nominal daily budget of every advertiser, while restricting the normalized allocation of the online algorithm. In this upper-bound argument, an \emph{allocation} refers to the normalized amount fractionally assigned to an advertiser, whereas the resulting \emph{usage} is the normalized amount actually charged after imposing the advertiser's spending limit. Thus, if the allocation is at most $a$ and the effective spending limit is $\kappa$, the usage is at most $\min\{a,\kappa\}$. This distinction is useful because, across days, we vary the effective spending limit according to the advertiser's remaining horizon budget while reusing the same bound for allocations.

In particular, in our construction, after appropriately reordering the advertisers in a single-day instance with $N$ advertisers, each with nominal daily budget $B$, the normalized allocation to advertiser $i$ is upper-bounded, up to a vanishing error, by the harmonic function
\[
    H_N(i):=\sum_{k=N-i+1}^{N}\frac{1}{k},
    \qquad i\in[N].
\]

We formalize this construction in the following lemma. Later, we apply the lemma recursively to groups of advertisers with similar remaining horizon budgets, which allows us to incorporate the horizon spending limit across multiple days.

\begin{Lemma}[Single-Day Upper-Triangular Block]
\label{KVVBuildingBlock}
Fix a normalized spending limit $\kappa\geq0$. For every deterministic fractional online algorithm and positive integers $N$ and $B$, there exists a single-day unit-bid instance with advertisers $I=[N]$, each with nominal daily budget $B$, such that:
\begin{enumerate}
    \renewcommand{\labelenumi}{(\roman{enumi})}

    \item \OPT exhausts the nominal daily budget of every advertiser, and hence
    \[
        \OPT=BN.
    \]

    \item The allocation to advertiser $i$ is at most
    \[
        H_N(i)+\frac{N}{B},
    \]
    and under a spending limit $\kappa$, its usage $X_i$ satisfies
    \[
        X_i
        \leq
        \min\left\{
            H_N(i)+\frac{N}{B},
            \kappa
        \right\}.
    \]
\end{enumerate}
\end{Lemma}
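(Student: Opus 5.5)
We will follow the classical upper-triangular construction of \citet{karp1990optimal} and \citet{mehta2007adwords}, built adaptively against the given deterministic fractional algorithm. The instance has $N$ phases, $k=1,\ldots,N$, and each phase consists of $B$ unit-bid queries. Before phase $k$ we will have a set $A_k\subseteq[N]$ of $N-k+1$ ``active'' advertisers, starting with $A_1=[N]$. Every query in phase $k$ bids $1$ for each advertiser in $A_k$ and $0$ for every other advertiser.

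At the end of phase $k$, the adversary looks at the cumulative fractional allocation the algorithm has made to each advertiser in $A_k$. It picks an advertiser in $A_k$ that received the smallest allocation during phase $k$ and drops it, so $A_{k+1}=A_k$ minus this advertiser. Because the algorithm is deterministic, this adaptive choice defines a fixed instance. We then relabel advertisers by the reverse of their removal order: the advertiser dropped in phase $k$ gets label $N-k+1$. So label $N$ is dropped first, and label $1$ survives to phase $N$.

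\textbf{Claim (i).} \OPT assigns all $B$ queries of phase $k$ to the advertiser removed at the end of phase $k$. That advertiser belongs to $A_k$, so it bids $1$ on every query of the phase. Each advertiser then receives exactly $B$, which respects its daily budget $B$, and so $\OPT=BN$.

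\textbf{Claim (ii).} During phase $k$ the algorithm allocates total mass at most $B$, because each query is assigned fractionally at most once, and it can only allocate to advertisers in $A_k$. The removed advertiser received the minimum allocation among these $N-k+1$ advertisers. By averaging, every advertiser still in $A_k$ at that point satisfies the following: its phase-$k$ allocation is at most $B/(N-k+1)$ in the normalized sense, after we account for the discreteness error.

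Here is where the discreteness error comes from. The minimal advertiser's phase-$k$ allocation is at most $B/(N-k+1)$, but we are bounding what the \emph{removed} advertiser received. The clean way to do this is the standard greedy-by-time variant. Within phase $k$, for each query the adversary tracks the running allocation, and the argument bounds, per phase, the allocation to the eventually-dropped advertiser by an average over the active ones. Since the dropped advertiser is the minimizer over the whole phase, its phase-$k$ allocation is at most $B/(N-k+1)$ exactly. The additive slack $N/B$ then only absorbs rounding when $B/(N-k+1)$ is not an integer number of queries, or when queries are split at integer boundaries: at most one unit per phase, and at most $N$ units over $N$ phases, which is $N/B$ after normalization.

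We next sum over the phases in which label $i$ was active. Label $i$ is removed in phase $N-i+1$, so it is active in phases $k=1,\ldots,N-i+1$. Its total normalized allocation is therefore at most
\[
\sum_{k=1}^{N-i+1}\frac1{N-k+1}+\frac NB=\sum_{m=i}^{N}\frac1m+\frac NB .
\]
To match the statement's indexing $H_N(i)=\sum_{k=N-i+1}^N 1/k$, we instead label by removal order, so that the advertiser removed in phase $i$ gets label $i$. That advertiser is active in phases $1,\ldots,i$, which yields $\sum_{k=1}^{i}1/(N-k+1)=H_N(i)$.

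Finally, usage is the charged amount. Under the spending limit $\kappa$ it is at most the allocation and at most $\kappa$, which gives $X_i\le\min\{H_N(i)+N/B,\kappa\}$.

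The only delicate step is the averaging bound: we must verify that selecting the minimizer at the end of the phase legitimately bounds the removed advertiser's phase allocation by $B/(N-k+1)$. This holds because the phase-$k$ allocations to the members of $A_k$ sum to at most $B$. We must also confirm that the bookkeeping for fractional and integer effects fits within the additive $N/B$ term. Everything else is routine.
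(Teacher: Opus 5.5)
There is a genuine gap in Claim (ii). Your averaging step bounds only the phase-$k$ allocation of the \emph{minimizer}, i.e., the advertiser removed at the end of phase $k$. When you sum over the phases in which advertiser $i$ is active, you also use the bound $\frac{1}{N-k+1}$ for every phase $k<i$. In those phases $i$ was active but was \emph{not} removed, and averaging gives no upper bound there.

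Nothing in your construction prevents the algorithm from concentrating. Suppose it sends all $2B$ queries of phases $1$ and $2$ to a single advertiser $a$, and from phase $3$ on splits each query evenly among the other active advertisers. Then $a$ is not the minimizer in phases $1$ and $2$, but is the unique minimizer in phase $3$. So it is removed in phase $3$ and gets label $3$ under your final convention, while its normalized allocation is $2$ and
\[
H_N(3)+\frac{N}{B}=\frac{1}{N}+\frac{1}{N-1}+\frac{1}{N-2}+\frac{N}{B}<1
\]
once $N\ge5$ and $B\ge5N$. Even its usage $\min\{2,\kappa\}$ violates the bound for $\kappa\ge1$. So (ii) is false for your instance, not merely unproved. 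The per-advertiser form is exactly what the proofs of Theorems~\ref{Thm2} and~\ref{Thm3} use. Your remarks about a ``greedy-by-time variant'' and rounding do not repair this, since they still bound only the dropped advertiser.

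The missing ingredient, which is how the paper's proof works, is to cap every active advertiser within the phase. During phase $k$, as soon as an active advertiser's phase-$k$ allocation exceeds $1/|A_k|$, the adversary makes it incompatible (bid $0$) with the remaining queries of that phase, and resets this at the end of the phase. Then every active advertiser receives at most $\frac{1}{N-k+1}+\frac{1}{B}$ in phase $k$. The $\frac1B$ is the overshoot from the single query during which it crosses the threshold. Summing over the $i$ phases in which advertiser $i$ is active gives at most $H_N(i)+\frac{i}{B}\le H_N(i)+\frac{N}{B}$. This overshoot, not rounding of $B/(N-k+1)$, is the source of the $N/B$ term.

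The capping then creates an obligation for Claim (i) that your current construction satisfies trivially: \OPT must still be able to give all $B$ phase-$k$ queries to the removed advertiser. This holds because the phase-$k$ allocations sum to at most $1$. Hence the minimizer has allocation at most $1/|A_k|$, never crossed the threshold, and therefore bid $1$ on every query of phase $k$.
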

\begin{proof}

Fix an arbitrary deterministic fractional online algorithm \ALG. We construct the instance stage by stage. During each stage, we control the allocation so that no active advertiser receives too much, while at the end of the stage we permanently remove the advertiser receiving the smallest allocation. Repeating this process yields the harmonic upper bound on \ALG, while allowing \OPT to fully utilize the nominal daily budget of every advertiser.

The instance consists of $N$ stages, each containing $B$ unit-bid queries. At stage $t$, each arriving query is compatible with every advertiser in $I(t)$ whose (normalized) allocation during the current stage has not yet exceeded
\[
    \frac{1}{|I(t)|}.
\]
Once an advertiser's stage-$t$ allocation exceeds this threshold, the
adversary makes it temporarily incompatible with the remaining queries of
that stage. At the end of the stage, the adversary permanently removes an advertiser $j(t)\in I(t)$ with the smallest stage-$t$ allocation and sets
\[
    I(t+1):=I(t)\setminus\{j(t)\}.
\]
The temporary incompatibilities are then reset before the next stage. Without loss of generality, we relabel indices of advertisers so that $j(t)$ is $t$, and $I(t) = \{t, t+1, \ldots, N\}$.

\medskip
\noindent
\emph{Offline Optimum.}
Since each stage $t$ contains $B$ queries, advertiser $j(t) = t$ remains compatible with every query in stage $t$. The offline optimum can assign all $B$ queries in stage $t$ to advertiser $t$. Doing so for every $t\in[N]$ assigns exactly $B$ units to each advertiser, and $\OPT = BN$.

\medskip
\noindent
\emph{Online Allocation.}
During stage $t$, due to the threshold, an active advertiser can receive an allocation of at most
\[
    \frac{1}{|I(t)|} + \frac{1}{B} =\frac{1}{N-t+1}+ \frac{1}{B}.
\]
We order the advertisers so that advertiser $i$ remains active through stages $1,\ldots,i$. Therefore, its total allocation to $i$ is at most
\[
    H_N(i)+\frac{N}{B}.
\]
After imposing the normalized spending limit $\kappa$, the usage
is bounded by both the allocation and the spending limit. Thus,
\[
    X_i
    \leq
    \min\left\{
        H_N(i)+\frac{N}{B},
        \kappa
    \right\}.
\]
\end{proof}

This building block is a direct extension of the classical $1-1/e$ upper-bound construction for Adwords \citep{kalyanasundaram2000optimal,mehta2007adwords}. As $N\to\infty$, the harmonic bound $H_N(i)$ for a uniformly chosen advertiser converges in distribution to a mean-one exponential random variable $Z$ (see Appendix~\ref{apdHarmonicLimit}). Hence, under a spending limit $\kappa$, the average usage is asymptotically bounded by
\[
    \mathbb{E}[\min\{Z,\kappa\}]
    =
    1-e^{-\kappa}.
\]
Note that setting $\kappa=1$ recovers the classical $1-1/e$ upper bound. In our setting, allowing $\kappa$ to vary with the remaining horizon budget allows us to reuse the same single-day construction across multiple days.

However, after the first application of the upper-triangular block, we can no longer treat all advertisers identically on subsequent days. In particular, the resulting online usages may differ across advertisers, so their remaining horizon budgets and their effective spending limits are no longer identical. To address this issue, at the beginning of each day, we partition advertisers into \emph{slabs} according to their cumulative online usage. Advertisers in the same slab then have nearly identical remaining horizon budgets. Therefore, we can apply a separate upper-triangular block within each slab on that same day, and the arrivals on the day are obtained by combining these blocks for each slab. To deal with the small variation in remaining horizon budgets within a slab, we relax them to a common upper bound before applying the corresponding single-day upper-triangular block. This relaxation can only help the online algorithm. As the variation of online usages within each slab tends to zero, the error from this relaxation vanishes, yielding the target value $V_D(D)$. We illustrate this recursive construction in the following example and Figure~\ref{fig:KVV_twoday}.

\begin{exampleblock}
    \noindent \textbf{Example 4.} To illustrate the construction, consider $D=\delta=2$. Recall the instance of a single-day upper-triangular block of $I = [N]$ in Lemma~\ref{KVVBuildingBlock}, and assume that its arrivals happen on the first day. Let $y_{i1}\in[0,2]$ denote the normalized usage of advertiser $i$ after the first day. Fix an integer $M$ and set $\eta:=2/M$. We partition advertisers according to their first-day normalized usages:
\[
    I_m
    :=
    \left\{
        i:
        (m-1)\eta<y_{i1}\leq m\eta
    \right\},
    \qquad
    m\in[M],
\]
with zero allocation included in $I_1$. For an advertiser in $I_m$, the actual second-day spending limit lies between
\[
    2-m\eta
    \qquad\text{and}\qquad
    2-(m-1)\eta.
\]
We relax this spending limit to 
\[
    \kappa_m:=2-(m-1)\eta
\]
throughout the slab. This relaxation can only help the online algorithm, and it changes the remaining budget by at most $\eta$. We then apply another single-day upper-triangular block separately within each $I_m$ on the second day.

For example, when $M=3$, based on the first-day allocations, $I$ is divided into
\[
    I_1:\ y_{i1}\in[0,2/3],
    \qquad
    I_2:\ y_{i1}\in(2/3,4/3],
    \qquad
    I_3:\ y_{i1}\in(4/3,2],
\]
and the corresponding relaxed second-day limits are
\[
    \kappa_1=2,
    \qquad
    \kappa_2=\frac{4}{3},
    \qquad
    \kappa_3=\frac{2}{3}.
\]
Thus, after rounding the first-day allocation in each slab down to its lower endpoint, the second-day block produces an allocation based on its relaxed limit. As $M\to\infty$ and $N\to\infty$, we show that the total usage of a randomly chosen advertiser is then bounded by
\[
    V_2(2) = \mathbb{E}\left[\min\left\{
        2,\,
        c(Z_1)+c(Z_2)
    \right\}\right],
\]
where $Z_1$ and $Z_2$ are independent mean-one exponential random variables.

\end{exampleblock}

\begin{figure}[t]
    \centering

    \begin{subfigure}[t]{0.325\textwidth}
        \centering
        \resizebox{0.97\linewidth}{!}{%
        \begin{tikzpicture}[x=4.1cm,y=1.75cm,font=\scriptsize]
            \pgfmathsetmacro{\qcap}{exp(-2)}
            \pgfmathsetmacro{\qhigh}{exp(-4/3)}
            \pgfmathsetmacro{\qlow}{exp(-2/3)}
            \pgfmathsetmacro{\midthree}{0.5*\qhigh}
            \pgfmathsetmacro{\midtwo}{0.5*(\qhigh+\qlow)}
            \pgfmathsetmacro{\midone}{0.5*(\qlow+1)}

            \fill[green!8] (0,0) rectangle (1,2/3);
            \fill[blue!7]  (0,2/3) rectangle (1,4/3);
            \fill[red!7]   (0,4/3) rectangle (1,2);

            \draw[->] (0,0) -- (1.05,0);
            \draw[->] (0,0) -- (0,2.16);
            \draw[thick] (0,0) rectangle (1,2);

            \draw[black,very thick] (0,2) -- (\qcap,2);
            \draw[black,very thick,domain=\qcap:1,samples=120,smooth]
                plot (\x,{-ln(\x)});

            \draw[densely dotted] (0,2/3) -- (1,2/3);
            \draw[densely dotted] (0,4/3) -- (1,4/3);

            \draw[densely dashed] (\qhigh,0) -- (\qhigh,4/3);
            \draw[densely dashed] (\qlow,0) -- (\qlow,2/3);

            \node[left] at (0,2/3) {$2/3$};
            \node[left] at (0,4/3) {$4/3$};
            \node[left] at (0,2) {$2$};

            \node[red!70!black] at (\midthree,-0.10) {$I_3$};
            \node[blue!70!black] at (\midtwo,-0.10) {$I_2$};
            \node[green!50!black] at (\midone,-0.10) {$I_1$};

            \node at (0.5,-0.30) {advertisers};
            \node[rotate=90] at (-0.2,1.0) {normalized allocation};
            \node at (0.5,2.28) {(a) Day 1 and allocation slabs};
        \end{tikzpicture}%
        }
    \end{subfigure}
    \hfill
    \begin{subfigure}[t]{0.325\textwidth}
        \centering
        \resizebox{0.97\linewidth}{!}{%
        \begin{tikzpicture}[x=4.1cm,y=1.75cm,font=\scriptsize]
            \pgfmathsetmacro{\aThree}{0}
            \pgfmathsetmacro{\bThree}{exp(-4/3)}
            \pgfmathsetmacro{\aTwo}{\bThree}
            \pgfmathsetmacro{\bTwo}{exp(-2/3)}
            \pgfmathsetmacro{\aOne}{\bTwo}
            \pgfmathsetmacro{\bOne}{1}

            \pgfmathsetmacro{\wThree}{\bThree-\aThree}
            \pgfmathsetmacro{\wTwo}{\bTwo-\aTwo}
            \pgfmathsetmacro{\wOne}{\bOne-\aOne}

            \pgfmathsetmacro{\sThree}
                {\aThree+\wThree*exp(-2/3)}
            \pgfmathsetmacro{\sTwo}
                {\aTwo+\wTwo*exp(-4/3)}
            \pgfmathsetmacro{\sOne}
                {\aOne+\wOne*exp(-2)}

            \pgfmathsetmacro{\midthree}{0.5*(\aThree+\bThree)}
            \pgfmathsetmacro{\midtwo}{0.5*(\aTwo+\bTwo)}
            \pgfmathsetmacro{\midone}{0.5*(\aOne+\bOne)}

            \fill[red!7]   (\aThree,0) rectangle (\bThree,2);
            \fill[blue!7]  (\aTwo,0) rectangle (\bTwo,2);
            \fill[green!8] (\aOne,0) rectangle (\bOne,2);

            \draw[->] (0,0) -- (1.05,0);
            \draw[->] (0,0) -- (0,2.16);
            \draw[thick] (0,0) rectangle (1,2);

            \draw[densely dotted] (\bThree,0) -- (\bThree,2);
            \draw[densely dotted] (\bTwo,0) -- (\bTwo,2);

            \draw[red!70!black,densely dashed]
                (\aThree,4/3) -- (\bThree,4/3);
            \draw[blue!70!black,densely dashed]
                (\aTwo,2/3) -- (\bTwo,2/3);
            \draw[green!50!black,densely dashed]
                (\aOne,0) -- (\bOne,0);

            \draw[red!70!black,very thick]
                (\aThree,2) -- (\sThree,2);
            \draw[red!70!black,very thick,
                  domain=\sThree:\bThree,samples=90,smooth]
                plot
                (\x,{4/3-ln((\x-\aThree)/\wThree)});

            \draw[blue!70!black,very thick]
                (\aTwo,2) -- (\sTwo,2);
            \draw[blue!70!black,very thick,
                  domain=\sTwo:\bTwo,samples=90,smooth]
                plot
                (\x,{2/3-ln((\x-\aTwo)/\wTwo)});

            \draw[green!50!black,very thick]
                (\aOne,2) -- (\sOne,2);
            \draw[green!50!black,very thick,
                  domain=\sOne:\bOne,samples=90,smooth]
                plot
                (\x,{-ln((\x-\aOne)/\wOne)});

            \node[left] at (0,2/3) {$2/3$};
            \node[left] at (0,4/3) {$4/3$};
            \node[left] at (0,2) {$2$};

            \node[red!70!black] at (\midthree,-0.10) {$I_3$};
            \node[blue!70!black] at (\midtwo,-0.10) {$I_2$};
            \node[green!50!black] at (\midone,-0.10) {$I_1$};

            \node at (0.5,-0.30) {advertisers};
            \node[rotate=90] at (-0.2,1.0) {normalized allocation};
            \node at (0.5,2.28) {(b) Second-day Usage};
        \end{tikzpicture}%
        }
    \end{subfigure}
    \hfill
    \begin{subfigure}[t]{0.325\textwidth}
        \centering
        \resizebox{0.97\linewidth}{!}{%
        \begin{tikzpicture}[x=4.1cm,y=1.75cm,font=\scriptsize]
            \pgfmathsetmacro{\sat}{3*exp(-2)}

            \draw[->] (0,0) -- (1.05,0);
            \draw[->] (0,0) -- (0,2.16);
            \draw[thick] (0,0) rectangle (1,2);

            \draw[densely dotted] (\sat,0) -- (\sat,2);
            \node[below] at (\sat,0) {$3e^{-2}$};

            \draw[purple!80!black,very thick]
                (0,2) -- (\sat,2);
            \draw[purple!80!black,very thick,
                  domain=0:2,samples=150,smooth]
                plot ({(1+\x)*exp(-\x)},{\x});

            \node[left] at (0,2/3) {$2/3$};
            \node[left] at (0,4/3) {$4/3$};
            \node[left] at (0,2) {$2$};

            \node at (0.5,-0.30) {reordered advertisers};
            \node[rotate=90] at (-0.2,1.0) {normalized allocation};
            \node at (0.5,2.28) {(c) Total Usage};
        \end{tikzpicture}%
        }
    \end{subfigure}

    \caption{Recursive upper-bound construction for $D=\delta=2$ with
    $M=3$. Panel (a) partitions the set of advertisers based on the first-day usage. Panel (b) relaxes the remaining horizon budget for each slab and applies a single-day upper-triangular block on its second day. Panel (c) reorders advertisers by their total two-day allocation. As the slab becomes arbitrarily thin, the resulting upper-bound profile converges to
    $\mathbb{E}\left[\min\{2,c(Z_1)+c(Z_2)\}\right]$.}
    \label{fig:KVV_twoday}
\end{figure}

Now, we are ready to prove Theorem~\ref{Thm2}.

\TheoremTwo*

\begin{proof}
Recall that combining Lemma~\ref{LPFreeCertificate} and Lemma~\ref{Thm2LBLemma} proves that Algorithm~\ref{alg:LPA} is at least $V_D(D)/D$-competitive in the perfect-utilization setting. It remains to prove the matching upper bound. Specifically, for every $\varepsilon>0$ and every deterministic fractional online algorithm \ALG, we construct an instance in the perfect-utilization setting such that
\[
    \frac{\ALG}{\OPT}
    \leq
    \frac{V_D(D)}{D}+\varepsilon.
\]

The instance consists of $N$ advertisers, each with nominal daily budget $B$. For advertiser $i$, let
\[
    Y_{id}:=\sum_{\ell=1}^d X_{i\ell},
    \qquad
    Y_{i0}:=0,
\]
denote its cumulative normalized usage through day $d \in [D]$. Fix an integer $M$ and let $\eta:=D/M$ and $\ell_m:=(m-1)\eta$. At the beginning of each day $d$, partition the advertisers according to their cumulative usages:
\[
    I_{md}
    :=
    \left\{
        i:
        \ell_m<Y_{i(d-1)}\leq \ell_m+\eta
    \right\},
    \qquad m\in[M],
\]
with zero usage included in $I_{1d}$. 

For each nonempty slab $I_{md}$, we relax the effective spending limit to $\kappa_m:=c(D-\ell_m)$ and apply an independent single-day upper-triangular block from Lemma~\ref{KVVBuildingBlock} on each day $d$.

\medskip
\noindent
\emph{Offline Optimum.} Since every advertiser appears in exactly one slab on every day, \OPT exhausts one nominal daily budget from every advertiser on every day. Therefore, the constructed instance satisfies perfect utilization and $\OPT = NBD$.

\medskip
\noindent
\emph{Online Allocation.}  
Since \[\ALG = B \sum_{i=1}^N Y_{iD},\]
our goal is to upper bound the sum of cumulative usage through day $d$ for each $d \in [D]$ recursively. 

To bound the term $\sum_{i=1}^N Y_{id}$, define
\[
    \Phi_d(y)
    :=
    y+V_{D-d+1}(D-y),
    \qquad
    \Phi_{D+1}(y):=y.
\]

By the definition of $V_d(r)$ in \eqref{DefV}, the functions $\Phi_d$ are non-decreasing and $1$-Lipschitz, and
\begin{equation}\label{eq:UpperBoundPhiRecursion}
    \Phi_d(y)
    =
    \mathbb{E}\left[
        \Phi_{d+1}
        \left(
            y+\min\{Z,c(D-y)\}
        \right)
    \right].
\end{equation}

As the number of advertisers in a slab becomes large, from Lemma~\ref{KVVBuildingBlock}, the harmonic allocation bound of a randomly chosen advertiser is converted into the capped exponential bound $\min\{Z,\kappa_m\}$, where $Z\sim\operatorname{Exp}(1)$. Therefore, up to vanishing errors, the average cumulative usage of advertisers in slab $I_{md}$ after day $d$ is bounded by
\[
    \ell_m+\eta+\min\{Z,\kappa_m\}.
\]

Consequently, applying the upper-triangular block separately within each
slab and using \eqref{eq:UpperBoundPhiRecursion} gives
\begin{equation}\label{eq:UpperBoundOneDayPotential}
    \frac{1}{N}\sum_{i=1}^N
    \Phi_{d+1}(Y_{id})
    \leq
    \frac{1}{N}\sum_{i=1}^N
    \Phi_d(Y_{i,d-1})
    +\varepsilon_{M,N,B},
\end{equation}
where $\varepsilon_{M,N,B}$ vanishes by first taking the slabs sufficiently thin and then choosing $N$ and $B$ sufficiently large. The formal derivation of \eqref{eq:UpperBoundOneDayPotential} is given in Appendix~\ref{apdPerfectUtilizationUB}.

Iterating \eqref{eq:UpperBoundOneDayPotential} over $d=1,\ldots,D$ yields
\[
    \frac{\ALG}{NB}
    =
    \frac{1}{N}\sum_{i=1}^N Y_{iD}
    \leq
    \Phi_1(0)+o(1)
    =
    V_D(D)+o(1).
\]
Since $\OPT=NBD$, for every $\varepsilon>0$, this construction gives a perfect-utilization instance with
\[
    \frac{\ALG}{\OPT}
    \leq
    \frac{V_D(D)}{D}+\varepsilon.
\]
\end{proof}

\section{Analysis in the General Setting}
\label{SecMainResult}

The previous section derives the penalty functions used by Algorithm~\ref{alg:LPA} from the perfect-utilization setting and proves their optimality within this restricted setting. We now return to the general setting, where the offline usage $O_{id}$ may not be $1$. This raises two natural questions: whether the same penalty functions remain effective without perfect utilization, and whether perfect utilization captures the finite-horizon worst case.

In Subsection~\ref{SecGeneralSetting}, we first give a positive answer to the former question in the asymptotic regime through Theorem~\ref{Thm1},  
which states that the competitive ratio of Algorithm~\ref{alg:LPA} converges to $1-\exp(-\delta)$ for every fixed $\delta$ as $D\to\infty$. Thus, the analysis in the perfect-utilization setting identifies the right penalty functions for extracting the power of budget flexibility.

Regarding the latter question, however, we show in Subsection~\ref{SecFiniteGeneral} that the perfect-utilization setting does not characterize the finite-horizon worst case by proving Theorem~\ref{Thm3}, which shows that for $D=\delta=2$, the optimal competitive ratio in the general setting is approximately $0.701$, strictly below $V_2(2)/2\approx0.729$ from Theorem~\ref{Thm2}.  

\subsection{Asymptotic Optimality of Algorithm~\ref{alg:LPA}}
\label{SecGeneralSetting}

We now analyze Algorithm~\ref{alg:LPA} in the general setting. With the penalty functions fixed, the main challenge is to understand how partial offline utilization affects the FRP in \eqref{FRP} when the offline usages $o_d$ may vary across days. In particular, for fixed $\delta$, we aim to identify $\Gamma_D$ such that
\begin{equation}\label{GoalThm1}
    \sum_{d\in[D]}
    \mathcal{L}_f
    \left(
        x_d,o_d,
        D-\sum_{\ell<d}x_\ell
        \mid d
    \right)
    \geq
    \Gamma_D\sum_{d\in[D]}o_d,
\end{equation}
for every admissible usage $(x_1,\ldots,x_D)\in S_{\ALG}$ and $(o_1,\ldots,o_D)\in S_{\OPT}$, with $\Gamma_D\to1-\exp(-\delta)$ as $D\to\infty$.

On the last day, the penalty function reduces to the exponential penalty, so the last-day pseudo-reward can be stated explicitly. However, unlike in the perfect-utilization setting, the resulting value still depends on both the remaining horizon budget, which is determined by $x_1,\ldots,x_{D-1}$, and the offline usage $o_D$. Moving backward therefore does not immediately reduce the problem to a function of the remaining horizon budget alone. Instead of solving this higher-dimensional recursion directly, we exploit the equalization property obtained under perfect utilization to remove the dependence on the online usages, leaving only the effect of the offline usages to be analyzed.

To elaborate in detail, for each day $d$, let
\[
    R_d:=D-\sum_{\ell<d}x_\ell,
\]
and we can rewrite
\[
    \mathcal{L}_f(x_d,o_d,R_d\mid d)
    =
    \mathcal{L}_f(x_d,1,R_d\mid d)
    -
    (1-o_d)\bigl(1-f(x_d\mid R_d,d)\bigr).
\]
Summing over the horizon and using
Lemma~\ref{LemmaEqualization} gives
\begin{align}
    &\sum_{d\in[D]}
    \mathcal{L}_f(x_d,o_d,R_d\mid d)
    \nonumber\\
    &=
    V_D(D)
    -
    \sum_{d\in[D]}
    (1-o_d)\bigl(1-f(x_d\mid R_d,d)\bigr)
    \label{Intermediate2Thm1}\\
    &\geq
    V_D(D)
    -
    \sum_{d\in[D]}
    (1-o_d)\bigl(1-f(0\mid R_d,d)\bigr)
    \nonumber\\
    &=
    V_D(D)
    -
    \sum_{d\in[D]}
    (1-o_d)
    \bigl(V_{D-d+1}(R_d)-V_{D-d}(R_d)\bigr)
    \nonumber\\
    &\geq
    V_D(D)
    -
    \sum_{d\in[D]}
    (1-o_d)
    \bigl(V_{D-d+1}(D)-V_{D-d}(D)\bigr)
    \nonumber\\
    &=
    \sum_{d\in[D]}
    \bigl(V_{D-d+1}(D)-V_{D-d}(D)\bigr) o_d \nonumber\\
    & \geq \bigl(V_{D}(D)-V_{D-1}(D)\bigr)\sum_{d\in[D]}
    o_d \label{IntermediateThm1}.
\end{align}

The first inequality follows from the monotonicity of
the penalty functions. The subsequent equality follows
from the equalization property in
Lemma~\ref{LemmaEqualization}, which gives
\[
    1-f(0\mid r,d)
    =
    V_{D-d+1}(r)-V_{D-d}(r).
\]
The second and the third inequality then follows from the following lemma.

\begin{Lemma}
\label{LemmaMarginalValues}
For every $\ell\geq1$, the function
$V_\ell(r)-V_{\ell-1}(r)$ is non-decreasing in $r$.
Moreover, for every $\ell\geq2$ and $r\geq0$,
\[
    V_\ell(r)-V_{\ell-1}(r)
    \leq
    V_{\ell-1}(r)-V_{\ell-2}(r).
\]
\end{Lemma}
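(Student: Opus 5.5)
The plan is to work directly from the probabilistic representation $V_\ell(r)=\mathbb{E}[\min\{S_\ell,r\}]$ in \eqref{DefV}. I will couple all $V_\ell$ through a single sequence $Z_1,Z_2,\ldots$ of independent mean-one exponentials, so that $S_\ell=S_{\ell-1}+c(Z_\ell)$ pathwise. Both claims should then reduce to elementary monotonicity and concavity properties of $s\mapsto\min\{s,r\}$.

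\emph{First claim.} By \eqref{DefV}, $V_\ell(r)=\int_0^r\nu_\ell(z)\,dz$, so $V_\ell$ is absolutely continuous. It therefore suffices to show that the derivative of $V_\ell-V_{\ell-1}$ is almost everywhere nonnegative, that is,
\[
    \frac{d}{dr}\bigl(V_\ell(r)-V_{\ell-1}(r)\bigr)=\nu_\ell(r)-\nu_{\ell-1}(r)\ge 0 .
\]
Under the coupling, $S_\ell=S_{\ell-1}+c(Z_\ell)\ge S_{\ell-1}$ holds pointwise. Hence $\{S_{\ell-1}\ge r\}\subseteq\{S_\ell\ge r\}$, which gives $\nu_\ell(r)\ge\nu_{\ell-1}(r)$. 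The case $\ell=1$ is immediate, since $\nu_0\equiv 0$. Integrating the nonnegative derivative shows that $V_\ell-V_{\ell-1}$ is non-decreasing in $r$.

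\emph{Second claim.} Fix $r\ge0$ and $\ell\ge2$, and set $g(s):=\min\{s,r\}$, $S:=S_{\ell-2}$, $A:=c(Z_{\ell-1})$, and $B:=c(Z_\ell)$. Then $S_\ell=S+A+B$ and $S_{\ell-1}=S+A$. Because $S,A,B$ are independent and $A\overset{d}{=}B$, we have $\mathbb{E}[g(S+B)]=\mathbb{E}[g(S+A)]=V_{\ell-1}(r)$. The desired inequality is therefore equivalent to
\[
    \mathbb{E}\bigl[g(S+A+B)-g(S+A)-g(S+B)+g(S)\bigr]\le 0 .
\]
This holds pointwise. Since $g$ is concave, its increments $t\mapsto g(t+b)-g(t)$ are non-increasing in $t$ for each fixed $b\ge0$. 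Applying this with $t=S+A\ge S$ gives $g(S+A+B)-g(S+A)\le g(S+B)-g(S)$, and taking expectations finishes the argument. When $\ell=2$ we have $S=S_0=0$, which is consistent with $V_0\equiv0$.

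I do not expect a substantive obstacle. The only points needing care are the following. First, the identity $V_{\ell-1}(r)=\mathbb{E}[g(S+B)]$ must be justified via independence and equality in distribution rather than pathwise. Second, all expectations must be finite, which is immediate because $g$ is bounded by $r$. Third, the derivative argument in the first claim must be valid almost everywhere; this follows from the integral representation in \eqref{DefV}.
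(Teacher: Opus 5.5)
Your proposal is correct and follows essentially the paper's argument. The paper also couples $S_\ell=S_{\ell-1}+c(Z_\ell)$ and writes the increment as $\mathbb{E}\bigl[\min\{Y_\ell,(r-S_{\ell-1})^+\}\bigr]$, which is the diminishing-increments form of your concavity step; it then uses $S_{\ell-1}\ge S_{\ell-2}$ together with $Y_{\ell-1},Y_\ell$ being identically distributed and independent of $S_{\ell-2}$. The only cosmetic difference is that you obtain the first claim from $\nu_\ell\ge\nu_{\ell-1}$ in the integral representation, whereas the paper reads monotonicity in $r$ directly off that same increment formula.
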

\begin{proof}
Recall that
\[
    S_\ell:=\sum_{k=1}^{\ell}c(Z_k),
    \qquad S_0:=0,
\]
where $Z_1,Z_2,\ldots$ are independent exponential
random variables with mean one. Let
$Y_\ell:=c(Z_\ell)$, which is independent of
$S_{\ell-1}$. By the definition of $V_\ell$,
\begin{align*}
    V_\ell(r)-V_{\ell-1}(r)
    &=
    \mathbb E\left[
        \min\{r,S_{\ell-1}+Y_\ell\}
        -\min\{r,S_{\ell-1}\}
    \right] =
    \mathbb E\left[
        \min\{Y_\ell,(r-S_{\ell-1})^+\}
    \right].
\end{align*}
This expression is non-decreasing in $r$.

Moreover, since $S_{\ell-1}\geq S_{\ell-2}$, for every $\ell\geq2$, we have
\begin{align*}
    V_\ell(r)-V_{\ell-1}(r)
    &=
    \mathbb E\left[
        \min\{Y_\ell,(r-S_{\ell-1})^+\}
    \right] \leq
    \mathbb E\left[
        \min\{Y_\ell,(r-S_{\ell-2})^+\}
    \right] =
    V_{\ell-1}(r)-V_{\ell-2}(r).
\end{align*}
\end{proof}

The preceding analysis establishes the finite-horizon guarantee of Algorithm~\ref{alg:LPA} in the general setting.

\begin{GeneralTheoremRestated}
For every $D\geq1$ and $\delta$,
Algorithm~\ref{alg:LPA} achieves a competitive ratio of at least
\[
    \Gamma_D:=V_D(D)-V_{D-1}(D)
\]
in the small-bid regime.
\end{GeneralTheoremRestated}

\begin{proof}
The result follows directly from \eqref{IntermediateThm1} and Lemma~\ref{LPFreeCertificate}.
\end{proof}

\paragraph{Remark.}
In fact, $\Gamma_D=V_D(D)-V_{D-1}(D)$ is the exact value of the factor-revealing program in \eqref{FRP} for Algorithm~\ref{alg:LPA}. To see this, choose $x_d=0$ for every $d$, $o_1=1$, and $o_d=0$ for all $d\geq2$. Then,
\[
    \frac{
        \sum_{d\in[D]}
        \mathcal L_f(x_d,o_d,R_d\mid d)
    }{
        \sum_{d\in[D]}o_d
    }
    =
    1-f(0\mid D,1)
    =
    V_D(D)-V_{D-1}(D)
    =
    \Gamma_D.
\]

We now show that the finite-horizon guarantee converges to the optimal competitive ratio of the thought experiment, proving Corollary~\ref{CorAsymptotic}.

\CorAsymptotic*

\begin{proof}[Proof of Corollary~\ref{CorAsymptotic}]
Recall that $S_D=S_{D-1}+Y_D$, where
$Y_D:=c(Z_D)=\min\{Z_D,\delta\}$
is independent of $S_{D-1}$ and
$Z_D\sim\operatorname{Exp}(1)$.
By the definition of $V_D$, we have
\[
\begin{aligned}
    \Gamma_D
    &=\mathbb E\left[
        \min\{D,S_D\}-\min\{D,S_{D-1}\}
    \right]\\
    &=\mathbb E\left[
        \min\{Y_D,(D-S_{D-1})^+\}
    \right].
\end{aligned}
\]
Our goal is to show that $(1-e^{-\delta})-\Gamma_D$ vanishes as $D\to\infty$. Since $Y_D\leq\delta$ and $S_D=S_{D-1}+Y_D$,
we have
\[
    0\leq
    Y_D-\min\{Y_D,(D-S_{D-1})^+\}
    \leq
    \delta\,\mathbf{1}\{S_D>D\}.
\]
Taking expectations and using
$\mathbb E[Y_D]=1-e^{-\delta}$ gives
\[
    0\leq (1-e^{-\delta})-\Gamma_D
    \leq\delta\,\Pr(S_D>D).
\]

Since $\mathbb E[S_D]=D(1-e^{-\delta})$
and $0\leq c(Z_d)\leq\delta$,
Hoeffding's inequality gives
\[
    \Pr(S_D>D)
    \leq
    \exp\left(
        -\frac{2e^{-2\delta}}{\delta^2}D
    \right).
\]
Therefore, for every fixed $\delta\geq1$,
\[
    0\leq (1-e^{-\delta})-\Gamma_D
    \leq
    \delta\exp\left(
        -\frac{2e^{-2\delta}}{\delta^2}D
    \right)
    \longrightarrow0
\]
as $D\to\infty$.
Combining this with Theorem~\ref{Thm1} and the upper bound of $1-e^{-\delta}$ from the thought experiment proves the corollary.
\end{proof}

\subsection{Exact Characterization for $D=\delta=2$}
\label{SecFiniteGeneral}

Throughout this subsection, we set $D=\delta=2$. Theorem~\ref{Thm3} gives an exact characterization of the optimal competitive ratio in the general setting, showing that it is strictly smaller than the corresponding value under perfect utilization. In particular, the optimal competitive ratio is $1-\exp(-h^*)\approx0.701$, where $h^*$ is the unique solution to $(1+h^*)\exp(h^*-2)=1$.

The main difficulty is to identify the penalty functions that maximize $\Gamma_f$ of \eqref{FRP}, where the offline usages $o_1$ and $o_2$ are no longer fixed to $1$. Fortunately, on the second day, if the remaining horizon budget at the beginning of day $2$ is $r$, using the exponential penalty $\exp(x-r)$ is still optimal. The first day, however, must balance different possible offline usages. For example, if we knew in advance that $o_1=o_2=1$, the instance would be in the perfect-utilization setting, for which the optimal first-day penalty is
\[
    f^{2,2}(x\mid2,1)=(3-x)e^{x-2}.
\]
On the other hand, if $o_1=1$ and $o_2=0$, then the exponential penalty $\exp(x-2)$ is again optimal. However, Example 1 shows that blindly using the exponential penalty throughout the horizon can perform strictly worse than $1-1/e$. Therefore, the main issue is to find a single first-day penalty that strikes the perfect balance between these two types of trade-offs.
\TheoremThree*
\begin{proof}
    
We first prove the lower bound by solving the resulting two-day factor-revealing program and constructing a first-day penalty that balances these cases. We then prove a matching upper bound by constructing a two-day hard instance for an arbitrary online algorithm. Together, these arguments establish the exact competitive ratio in Theorem~\ref{Thm3}.

\paragraph{Lower bound.}
We construct a horizon-aware reduced-bid algorithm that is $\left(1-\exp(-h^*)\right)$-competitive. Let $f^*(\cdot\mid r,d)$ denote its penalty functions. On the second day, if the remaining normalized horizon budget is $r$, the thought experiment implies that the optimal penalty is
\[
    f^*(x\mid r,2)=f^{2, 2}(x \mid r, 2) = e^{x-r},
    \qquad 0\leq x\leq r,
\] 
since $c(r)=r$ when $D=\delta=2$. It therefore remains to determine the first-day penalty $f^*(\cdot\mid2,1)$.

We directly analyze the factor-revealing program in \eqref{FRP}. Fix an advertiser, and fix the first-day online usage $x\in[0,2]$, so that the second-day remaining horizon budget is $2-x$. Define
\[
    F^*(x)
    :=
    \int_0^x f^*(s\mid2,1)\,ds.
\]
For any $o_2\in[0,1]$, minimizing the second-day pseudo-reward gives
\[
    \min_{0\leq x_2\leq2-x}
    \left\{
        \int_0^{x_2}e^{y-(2-x)}\,dy
        +o_2\bigl(1-e^{x_2-(2-x)}\bigr)
    \right\}
    =
    o_2\bigl(1-e^{x-2}\bigr),
\]
where the minimum is attained at $x_2=0$. Therefore, \eqref{FRP} gives
\begin{equation}\label{NewFRP}
    \Gamma_{f^*}
    =
    \inf_{\substack{
        x\in[0,2],\;
        o_1,o_2\in[0,1]\\
        o_1+o_2>0
    }}
    \frac{
        F^*(x)
        +o_1\bigl(1-f^*(x\mid2,1)\bigr)
        +o_2\bigl(1-e^{x-2}\bigr)
    }{
        o_1+o_2
    }.
\end{equation}

To further simplify \eqref{NewFRP}, set
\[
    \lambda:=\frac{o_1}{o_1+o_2}\in[0,1].
\]
For a fixed $\lambda$, minimizing over the total offline usage
$o_1+o_2$ gives the equivalent formulation
\[
    \Gamma_{f^*}
    =
    \inf_{\substack{x\in[0,2]\\ \lambda\in[0,1]}}
    \left\{
        F^*(x)\max\{\lambda,1-\lambda\}
        +\lambda\bigl(1-f^*(x\mid2,1)\bigr)
        +(1-\lambda)\bigl(1-e^{x-2}\bigr)
    \right\}.
\]
We provide the formal derivation in
Appendix~\ref{Appendix_Lemma41}.

Therefore, to show that the horizon-aware reduced-bid algorithm is
$\gamma$-competitive, it is sufficient to consider the cases
$\lambda<1/2$, $\lambda=1/2$, and $\lambda>1/2$. Equivalently, it suffices
to verify, for every $x\in[0,2]$,
\[
\begin{aligned}
    F^*(x)+1-e^{x-2}
    &\geq \gamma,
    &&\text{corresponding to } (o_1,o_2)=(0,1),\\
    F^*(x)+2-f^*(x\mid2,1)-e^{x-2}
    &\geq 2\gamma,
    &&\text{corresponding to } (o_1,o_2)=(1,1),\\
    F^*(x)+1-f^*(x\mid2,1)
    &\geq \gamma,
    &&\text{corresponding to } (o_1,o_2)=(1,0).
\end{aligned}
\]

The form of $f^*$ is obtained by making the last constraint tight for smaller values of $x$ and the second constraint tight for larger values of $x$. This choice then automatically satisfies the first constraint for every $x$. Solving these conditions and determining the switching point between the corresponding regimes yields
\[
    f^*(x\mid2,1)
    :=
    \begin{cases}
        e^{x-h^*},
        & 0\leq x\leq 2-h^*,\\[1mm]
        (3-x)e^{x-2},
        & 2-h^*<x\leq2.
    \end{cases}
\]

Under this penalty function, all three inequalities are satisfied with
\[
    \gamma
    =
    1-e^{-h^*}
    \approx 0.701.
\]
Therefore, the corresponding horizon-aware reduced-bid algorithm is $\left(1-\exp(-h^*)\right)$-competitive, establishing the lower bound in Theorem~\ref{Thm3}. We refer readers to Appendix~\ref{Appendix_Lemma41} for this direct calculation. Also, see Figure~\ref{fig:compare_penalties_two_day} for an illustration of this penalty function. The optimal first-day penalty function starts from an exponential penalty and switches to the more conservative perfect-utilization penalty once the first-day usage becomes sufficiently large.

\begin{figure}[h]
    \centering
    \begin{tikzpicture}
        \pgfmathsetmacro{\hstar}{1.2079400316}
        \pgfmathsetmacro{\astar}{2-\hstar}

        \begin{axis}[
            width=0.48\textwidth,
            height=0.30\textwidth,
            xmin=0, xmax=2,
            ymin=0.1, ymax=1.02,
            domain=0:2,
            samples=200,
            axis lines=left,
            xlabel={$x$},
            ylabel={penalty},
            xtick={0,\astar,2},
            xticklabels={$0$,$a^*=2-h^*$,$2$},
            ytick={exp(-2),exp(-\hstar),1},
            yticklabels={\scriptsize $e^{-2}$, \scriptsize  $e^{-h^*}$, \scriptsize $1$},
            legend style={
                at={(1.2,0.08)},
                anchor=south east,
                draw=none,
                fill=none,
                font=\small
            },
            clip=false
        ]

        \addplot[
            black,
            thick,
            dotted
        ]
        {exp(x-2)};
        \addlegendentry{$e^{x-2}$}

        \addplot[
            blue,
            thick,
            dashed
        ]
        {(3-x)*exp(x-2)};
        \addlegendentry{$f^{2,2}(x\mid2,1)$}

        \addplot[
            red,
            very thick,
            domain=0:\astar
        ]
        {exp(x-\hstar)};
        \addplot[
            red,
            very thick,
            domain=\astar:2
        ]
        {(3-x)*exp(x-2)};
        \addlegendentry{$f^*(x\mid2,1)$}

        \addplot[
            gray,
            thick,
            dashdotted
        ]
        coordinates {(\astar,0.1) (\astar,1.02)};

        \node[anchor=west] at (axis cs:1.02,0.95)
            {\scriptsize $(3-x)e^{x-2}$};

        \end{axis}
    \end{tikzpicture}
    \caption{Comparison of first-day penalty functions when $D=\delta=2$. For comparison, $\exp(x-2)$ is the exponential penalty that is optimal when there are no arrivals on the second day, while $f^{2,2}(x\mid2,1)=(3-x)\exp(x-2)$ is the penalty obtained from the perfect-utilization setting. The optimal first-day penalty in the general setting, $f^*(x\mid2,1)$, balances these two forces: on $[0, 2 - h^*]$ it lies between the aggressive exponential penalty and the more conservative perfect-utilization penalty, and on $(2-h^*,2]$ it coincides with $f^{2,2}(x\mid2,1)$.}
    \label{fig:compare_penalties_two_day}
\end{figure}

\paragraph{Upper bound.}

We next show that the above guarantee is tight. The construction begins with the upper-bound argument of Theorem~\ref{Thm2}, but differs on the second day. Since perfect utilization is no longer required, after observing the first-day usages, the adversary uses a remaining-budget threshold $h$ and generates second-day arrivals only for advertisers whose remaining horizon budget is at most $h$. Advertisers with remaining budget above $h$ receive no second-day arrivals, so their two-day contribution is determined entirely by the first day. Advertisers with remaining budget at most $h$ receive additional second-day arrivals, but they have already consumed at least $2-h$ of their horizon budgets. This limits how much additional usage \ALG can obtain on the second day, while allowing \OPT to obtain another nominal daily budget. 

Figure~\ref{fig:general_two_day_kvv} illustrates this construction and how the threshold $h=h^*$ determines which advertisers receive second-day upper-triangular blocks. Note that our choice of $h^*$ is the unique minimizer of the competitive ratio illustrated in Figure~\ref{fig:general_two_day_kvv}:
\begin{equation}\label{hRatioEq}
h^*
    :=
    \operatorname*{arg\,min}_{h\in(0,2)}
    \frac{
        \mathbb{E}[Y]
        +
        \mathbb{E}\!\left[
            \mathbf{1}\{Y\geq2-h\}
            \bigl(1-e^{Y-2}\bigr)
        \right]
    }{
        1+\Pr(Y\geq2-h)
    }
    =
    \operatorname*{arg\,min}_{h\in(0,2)}
    \frac{
        1-e^{-2}
        +
        e^{-2}(e^h-1-h)
    }{
        1+e^{h-2}
    },
\end{equation}
with $Y:=\min\{Z,2\}$ and
$Z\sim\operatorname{Exp}(1)$. The unique minimizer satisfies
\[
    (1+h^*)e^{h^*-2}=1,
\]
and the corresponding competitive ratio is $1-e^{-h^*}$.

A potential concern is that \ALG may intentionally limit its first-day usage even when its allocation could be larger, thereby keeping its remaining budget above $h^*$ and avoiding the second-day arrivals.\footnote{For an arbitrary threshold, the ratio obtained from \eqref{hRatioEq} need not be an upper bound against an
online algorithm that deliberately withholds spending.
For example, when $h=0.2$, the corresponding ratio is
\[
    \frac{1+e^{-1.8}-2.2e^{-2}}{1+e^{-1.8}}
    \approx0.7445.
\]
However, an online algorithm can restrict its first-day
spending to up to $1.8 B_i$ for every
advertiser $i$, so that no second-day blocks are generated.
Applying the single-day exponential penalty with this
spending limit guarantees a competitive ratio of at least
$1-e^{-1.8}\approx0.8347$.} However, we show that such withholding cannot improve the resulting competitive ratio. 

\begin{figure}[h]
    \centering

    \def\hstarval{1.207940031569323}

    \begin{subfigure}[t]{0.325\textwidth}
        \centering
        \resizebox{0.97\linewidth}{!}{%
        \begin{tikzpicture}[x=4.1cm,y=1.75cm,font=\scriptsize]
            \pgfmathsetmacro{\hstar}{\hstarval}
            \pgfmathsetmacro{\qcap}{exp(-2)}
            \pgfmathsetmacro{\qth}{exp(\hstar-2)}
            \pgfmathsetmacro{\yth}{2-\hstar}

            \fill[blue!6] (0,\yth) rectangle (\qth,2);

            \draw[->] (0,0) -- (1.05,0);
            \draw[->] (0,0) -- (0,2.16);
            \draw[thick] (0,0) rectangle (1,2);

            \draw[black,very thick] (0,2) -- (\qcap,2);
            \draw[black,very thick,domain=\qcap:1,samples=120,smooth]
                plot (\x,{-ln(\x)});

            \draw[densely dotted] (\qcap,0) -- (\qcap,2);
            \draw[densely dotted] (\qth,0) -- (\qth,\yth);
            \draw[densely dotted] (0,\yth) -- (\qth,\yth);

            \node[left] at (0.13,\yth+0.05) {$2-h^*$};
            \node[left] at (0,2) {$2$};
            \node[below] at (\qcap,0) {$e^{-2}$};
            \node[below] at (\qth,0) {$e^{h^*-2}$};

            \node at (0.66,1.55) {$Y=\min\{X_1,2\}$};

            \node at (0.5,-0.30) {advertisers};
            \node[rotate=90] at (-0.2,1.0) {normalized usage};
            \node at (0.5,2.28) {(a) Day 1 and Possible Usage};
        \end{tikzpicture}%
        }
    \end{subfigure}
    \hfill
    \begin{subfigure}[t]{0.325\textwidth}
        \centering
        \resizebox{0.97\linewidth}{!}{%
        \begin{tikzpicture}[x=4.1cm,y=1.75cm,font=\scriptsize]
            \pgfmathsetmacro{\hstar}{\hstarval}
            \pgfmathsetmacro{\a}{exp(-2)}
            \pgfmathsetmacro{\b}{exp(\hstar-2)}
            \pgfmathsetmacro{\yth}{2-\hstar}

            \pgfmathsetmacro{\sone}{exp(\hstar/4-2)}
            \pgfmathsetmacro{\stwo}{exp(2*\hstar/4-2)}
            \pgfmathsetmacro{\sthree}{exp(3*\hstar/4-2)}

            \fill[blue!6] (0,0) rectangle (\b,2);

            \draw[->] (0,0) -- (1.05,0);
            \draw[->] (0,0) -- (0,2.16);
            \draw[thick] (0,0) rectangle (1,2);

            \foreach \xx in {\a,\sone,\stwo,\sthree,\b} {
                \draw[densely dotted] (\xx,0) -- (\xx,2);
            }

            \draw[densely dotted] (0,\yth) -- (\b,\yth);

            \foreach \L/\R in {
                \a/\sone,
                \sone/\stwo,
                \stwo/\sthree,
                \sthree/\b
            } {
                \pgfmathsetmacro{\ymin}{-ln(\R)}
                \pgfmathsetmacro{\rslab}{2-\ymin}
                \pgfmathsetmacro{\vstar}{exp(-\rslab)}
                \pgfmathsetmacro{\xstar}{\L+(\R-\L)*\vstar}

                \draw[blue,very thick] (\L,2) -- (\xstar,2);
                \draw[blue,very thick,domain=\vstar:1,samples=60,smooth]
                    plot ({\L+(\R-\L)*\x},{\ymin-ln(\x)});
            }

            \node[left] at (0.13,\yth) {$2-h^*$};
            \node[left] at (0,2) {$2$};
            \node[below] at (\b,0) {$e^{h^*-2}$};

            \node at (0.5,-0.30) {advertisers};
            \node[rotate=90] at (-0.2,1.0) {normalized usage};
            \node at (0.5,2.28) {(b) Second-day Usage};
        \end{tikzpicture}%
        }
    \end{subfigure}
    \hfill
    \begin{subfigure}[t]{0.325\textwidth}
    \centering
    \resizebox{0.97\linewidth}{!}{%
    \begin{tikzpicture}[x=4.1cm,y=1.75cm,font=\scriptsize]
        \pgfmathsetmacro{\hstar}{\hstarval}
        \pgfmathsetmacro{\b}{exp(\hstar-2)}
        \pgfmathsetmacro{\sat}{exp(-\hstar)}
        \pgfmathsetmacro{\yth}{2-\hstar}

        \fill[pink!10] (0,0) rectangle (1,1);
        \fill[pink!18] (0,1) rectangle (\b,2);

        \draw[->] (0,0) -- (1.05,0);
        \draw[->] (0,0) -- (0,2.16);
        \draw[thick] (0,0) rectangle (1,2);

        \draw[pink!70!black,thick]
            (0,2) -- (\b,2) -- (\b,1) -- (1,1);
        \node[pink!70!black] at (0.78,1.14) {\scriptsize OPT};

        \draw[red!80!black,very thick] (0,2) -- (\sat,2);

        \draw[red!80!black,very thick,
              domain=\yth:2,samples=120,smooth]
            plot ({(\x+\hstar-1)*exp(-\x)},{\x});

        \draw[red!80!black,very thick,
              domain=0:\yth,samples=120,smooth]
            plot ({exp(-\x)},{\x});

        \draw[densely dotted] (\sat,0) -- (\sat,2);
        \draw[densely dotted] (\b,0) -- (\b,\yth);
        \draw[densely dotted] (0,\yth) -- (\b,\yth);

        \node[left] at (0.13,\yth) {$2-h^*$};
        \node[left] at (0,1) {$1$};
        \node[left] at (0,2) {$2$};
        \node[below] at (\sat-0.04,0) {$e^{-h^*}$};
        \node[below] at (\b+0.04,0) {$e^{h^*-2}$};

        \node at (0.5,-0.30) {reordered advertisers};
        \node[rotate=90] at (-0.2,1.0) {normalized usage};
        \node at (0.5,2.28) {(c) Total Usage};
        \end{tikzpicture}%
        }
    \end{subfigure}

    \caption{Upper-bound construction in the general setting for $D=\delta=2$. Panel (a) shows the first-day usage bound and the threshold $2-h^*$. Panel (b) applies second-day upper-triangular blocks only to advertisers whose remaining horizon budget is at most $h^*$. Panel (c) shows the resulting total usage after reordering advertisers in red, together with the offline usage in pink. \OPT exhausts the nominal daily budget of every advertiser on the first day and of the first $e^{h^*-2}$ fraction of advertisers again on the second day.}
    \label{fig:general_two_day_kvv}
\end{figure}

Now, fix an arbitrary deterministic fractional online algorithm \ALG. We construct a $2$-day Adwords instance with $2$-flexibility and advertiser set $I=[N]$, where each advertiser has the same nominal daily budget $B$, such that
\[
    \ALG-(1-e^{-h^*})\OPT
    \leq
    o(NB),
\]
while $\OPT\geq NB$. This implies that, for every $\varepsilon>0$, a
sufficiently large instance satisfies
\[
    \ALG
    \leq
    \bigl(1-e^{-h^*}+\varepsilon\bigr)\OPT.
\]

On the first day, we apply the single-day upper-triangular block from
Lemma~\ref{KVVBuildingBlock} with normalized spending limit $2$. Recall that
$X_{i1}$ denotes the resulting normalized first-day usage of advertiser $i$.

Fix $M\geq1$, set $\eta:=h^*/M$ and
$\ell_m:=2-h^*+(m-1)\eta$, and partition the advertisers with
$X_{i1}>2-h^*$ into slabs
\[
    I_m
    :=
    \left\{
        i:
        \ell_m<X_{i1}\leq\ell_m+\eta
    \right\},
    \qquad
    m\in[M].
\]
On the second day, we apply a separate upper-triangular block to each
nonempty slab $I_m$. Advertisers with $X_{i1}\leq2-h^*$ receive no
second-day queries. As in the proof of Theorem~\ref{Thm2}, within each slab
we relax the remaining horizon budgets to a common upper bound. The resulting
slab and finite-size errors vanish as $M$, $N$, and $B$ become sufficiently
large.

The offline optimum exhausts one nominal daily budget from every advertiser
on the first day and one additional nominal daily budget from every
advertiser receiving a second-day block. Therefore,
\[
    \OPT
    =
    NB+
    B\sum_{i=1}^N
    \mathbf{1}\{X_{i1}>2-h^*\}
    \geq NB.
\]

Consider a slab containing advertisers whose first-day usages are close to $x>2-h^*$. Such an advertiser has remaining normalized horizon budget $2-x$. By Lemma~\ref{KVVBuildingBlock} and Appendix~\ref{apdHarmonicLimit}, the average second-day usage in this slab is asymptotically bounded by
\[
    \mathbb{E}[\min\{Z,2-x\}]
    =
    1-e^{x-2},
    \qquad
    Z\sim\operatorname{Exp}(1).
\]
Consequently, after summing over the slabs,
\[
    \frac{\ALG}{B}
    \leq
    \sum_{i=1}^N
    \left[
        X_{i1}
        +
        \mathbf{1}\{X_{i1}>2-h^*\}
        \bigl(1-e^{X_{i1}-2}\bigr)
    \right]
    +
    o(N).
\]

To compare it with \OPT, for $x\in[0,2]$, define
\[
    \psi(x)
    :=
    x
    +
    \mathbf{1}\{x>2-h^*\}
    \bigl(1-e^{x-2}\bigr)
    -
    (1-e^{-h^*})
    \left(
        1+\mathbf{1}\{x>2-h^*\}
    \right).
\]
The first two terms upper-bound the normalized online value associated with an advertiser whose first-day usage is $x$, while the final term subtracts $1-e^{-h^*}$ times the corresponding offline usage. Formally,
\begin{equation}\label{Eq1thm3}
    \frac{
        \ALG-(1-e^{-h^*})\OPT
    }{B}
    \leq
    \sum_{i=1}^N\psi(X_{i1})
    +
    o(N).
\end{equation}

By construction, a direct calculation shows that $\psi$ is continuous, non-decreasing, and $1$-Lipschitz
on $[0,2]$. Lemma~\ref{KVVBuildingBlock} gives, after relabeling,
\[
    X_{i1}
    \leq
    y_i
    :=
    \min\left\{
        H_N(i)+\frac{N}{B},
        2
    \right\}.
\]
Therefore,
\[
    \sum_{i=1}^N\psi(X_{i1})
    \leq
    \sum_{i=1}^N\psi(y_i).
\]
This inequality also resolves the potential concern that \ALG may intentionally withhold first-day usage below its available allocation to avoid receiving second-day queries: by the monotonicity of $\psi$, decreasing $X_{i1}$ cannot increase its contribution to $\ALG-(1-\exp(-h^*))\OPT$.

Let $U_N$ be uniformly distributed on $[N]$. By
Appendix~\ref{apdHarmonicLimit}, as $N\to\infty$ and $N/B\to0$,
\[
    \frac{1}{N}
    \sum_{i=1}^N\psi(y_i)
    =
    \mathbb{E}\!\left[
        \psi\left(
            \min\left\{
                H_N(U_N)+\frac{N}{B},
                2
            \right\}
        \right)
    \right]
    \longrightarrow
    \mathbb{E}[\psi(Y)],
\]
where
\[
    Y:=\min\{Z,2\},
    \qquad
    Z\sim\operatorname{Exp}(1).
\]

We have
\[
    \mathbb{E}[Y]=1-e^{-2},
    \qquad
    \Pr(Y>2-h^*)=e^{h^*-2},
\]
and
\[
    \mathbb{E}\!\left[
        \mathbf{1}\{Y>2-h^*\}
        \bigl(1-e^{Y-2}\bigr)
    \right]
    =
    \int_{2-h^*}^{2}
        (1-e^{y-2})e^{-y}\,dy
    =
    e^{-2}\bigl(e^{h^*}-1-h^*\bigr).
\]
It follows that
\[
\begin{aligned}
    \mathbb{E}[\psi(Y)]
    &=
    1-e^{-2}
    +
    e^{-2}\bigl(e^{h^*}-1-h^*\bigr)  -
    (1-e^{-h^*})
    \bigl(1+e^{h^*-2}\bigr) \\
    &=
    e^{-h^*}-(1+h^*)e^{-2} \\ &
    =
    0,
\end{aligned}
\]
where the final equality follows from
$(1+h^*)e^{h^*-2}=1$.

Therefore, by \eqref{Eq1thm3},
\[
    \ALG-(1-e^{-h^*})\OPT
    \leq
    o(NB).
\]
Since $\OPT\geq NB$, we obtain
\[
    \frac{\ALG}{\OPT}
    \leq
    1-e^{-h^*}+o(1).
\]
Thus, for every $\varepsilon>0$, choosing $M$, $N$, and $B$ sufficiently
large gives an instance satisfying
\[
    \ALG
    \leq
    \bigl(1-e^{-h^*}+\varepsilon\bigr)\OPT.
\]
\end{proof}

\section{Conclusion}
\label{SecConclusion}

In this paper, we study the power of budget flexibility in the Adwords problem over a $D$-day horizon. Our main result shows that, for every fixed $\delta$, Algorithm~\ref{alg:LPA} achieves the optimal worst-case guarantee asymptotically as $D\to\infty$. This matches the optimal guarantee in the more permissive thought experiment in which the online algorithm receives a fresh budget $\delta B_i$ every day without a horizon spending limit. Therefore, with budget flexibility and the right algorithm, the algorithmic loss due to the horizon spending limit vanishes as the number of days grows. This is in contrast to direct generalizations of \textsc{Balance} based on the exponential penalty, whose worst-case guarantee remains at $1-1/e$ even with budget flexibility.

A key step toward this result is the perfect-utilization setting, in which \OPT exhausts every advertiser's nominal daily budget on every day. Under this restriction, we derive the penalty functions used by Algorithm~\ref{alg:LPA} and simultaneously obtain an exact characterization of the optimal competitive ratio for every $D$ and $\delta$. The lower bound follows from the LP-free framework, while the matching upper bound is established through a multi-day extension of the classical upper-triangular instance. At the same time, the perfect-utilization setting does not always capture the exact finite-horizon worst case of the original model. For $D=\delta=2$, the optimal competitive ratio is approximately $0.701$, strictly below $V_2(2)/2\approx0.729$ under perfect utilization, and the corresponding optimal first-day penalty function blends the horizon-aware penalty with the exponential penalty. Nevertheless, the benefit of this modification vanishes as $D$ increases, as captured by Theorem~\ref{Thm1}.

Several directions remain open. First, it would be interesting to understand the power of budget flexibility beyond the small-bid regime, where each query may consume a non-negligible fraction of an advertiser's budget. Second, characterizing the exact finite-horizon optimum without the perfect-utilization assumption for general $D$ and $\delta$ remains open. The proof of Theorem~\ref{Thm3} suggests a possible structure: different offline-utilization patterns induce different candidate penalty functions, and the optimal penalty may be obtained by balancing these candidates and identifying the appropriate switching points. Extending this approach to general $D$ and $\delta$, however, requires coordinating more candidate functions across different remaining horizon budgets, making the resulting optimization technically challenging.

\paragraph{AI Disclosure}

We used ChatGPT to discuss technical ideas, including the exponential reinterpretation in Lemma~\ref{LemmaEqualization} and solving the system of differential equations from \eqref{NewFRP}. We also used it to discuss proof organization and exposition, and to help polish the manuscript. The authors independently reviewed and verified all AI-assisted content, including the mathematical arguments and references.

\bibliographystyle{informs2014}
\bibliography{ref}

\clearpage
\begin{APPENDICES}
\AppendixNumbering
\section{Omitted Proofs}\label{Missing Proofs}

\subsection{Validity of the Penalty Functions in \eqref{LPF}}
\label{apd_LPFproperties}

Fix $r>0$ and $d\in[D]$, and write $c:=c(r)$. Recall that
\[
    f(x\mid r,d)
    =
    e^{x-c}
    +
    \int_x^c e^{x-z}\nu_{D-d}(r-z)\,dz,
    \qquad x\in[0,c].
\]

We verify that the function $f(\cdot \mid r, d)$ is valid. When $d=D$, the claim is immediate from $f(x\mid r,D)=e^{x-c}$. If $d<D$, we first show that each penalty function takes values in $[\exp(-\delta),1]$ and reaches $1$ at the end of its domain. We then show that it is non-decreasing. Lastly, we derive that the function is $1$-Lipschitz.

For every $x\in(0,c)$,
\begin{equation}\label{eq:LPFDerivative}
    f'(x\mid r,d) = f(x\mid r,d)-\nu_{D-d}(r-x).
\end{equation}

Since the integral term in \eqref{LPF} is nonnegative,
\[
    f(x\mid r,d)
    \geq
    e^{x-c}
    \geq
    e^{-\delta}.
\]
On the other hand, since $\nu_{D-d}\leq1$,
\[
    f(x\mid r,d)
    \leq
    e^{x-c}
    +
    \int_x^c e^{x-z}\,dz
    =
    1.
\]
Thus, $f(x\mid r,d)\in[e^{-\delta},1]\subseteq[0,1]$. Moreover, it is clear that $f(c(r)\mid r,d)=1$.

It remains to show that $f(\cdot\mid r,d)$ is non-decreasing. Let
\[
    S:=\sum_{k=1}^{D-d}c(Z_k),
    \qquad
    Y_x:=\min\{Z_{D-d+1},c-x\}.
\]
By the definition of $\nu_{D-d}$,
\[
\begin{aligned}
     f(x\mid r,d) & \geq  e^{x-c}\nu_{D-d}(r-c)
    +
    \int_x^c e^{x-z}\nu_{D-d}(r-z)\,dz
    \\&=
    \Pr(Y_x+S\geq r-x)\\
    &\geq
    \Pr(S\geq r-x)
    \\&=
    \nu_{D-d}(r-x).
\end{aligned}
\]
By \eqref{eq:LPFDerivative}, $f'(x\mid r,d)\geq0$ for every $x\in(0,c)$, so $f(\cdot\mid r,d)$ is non-decreasing. 

Finally, since $f(x\mid r,d)\leq1$ and $\nu_{D-d}(r-x)\geq0$, \eqref{eq:LPFDerivative} gives
\[
    0\leq f'(x\mid r,d)\leq1,
\]
for every $x\in(0,c)$. Hence, $f(\cdot\mid r,d)$ is $1$-Lipschitz.

\subsection{Proof of Lemma~\ref{LPFreeCertificate}}
\label{apd_LPfreeCertificate}

\LPFreeCertificate*

\begin{proof}
Recall that
\[
    \gamma
    :=
    \max_{i\in I,t}\frac{b_{it}}{B_i}
    \longrightarrow0
\]
in the small-bid regime. We verify \eqref{LPFreeCertificate1} and \eqref{LPFreeCertificate2}
separately with
\[
    \alpha=\Gamma_f,
    \qquad
    \beta=1+O(\gamma).
\]

For a query $t$ matched by \ALG to advertiser $j=i^*(t)$, let $\Delta_t\leq b_{jt}$ denote the amount actually charged to advertiser $j$. The inequality can be strict only when matching the query exhausts the advertiser's effective spending limit. Recall from
\eqref{LPFreeCandidateLambda} and \eqref{LPFreeCandidateTheta} that
$\lambda_t$ is defined using the reduced-bid component of $b_{jt}$, whereas each $\theta_{id}$ represents the integrated penalty contribution over advertiser
$i$'s realized usage on day $d$.
\begin{align*}
    \lambda_t
    &:=
    b_{i^*(t)t}
    \left(
        1-
        f\left(
            x_{i^*(t)d}(t)
            \mid R_{i^*(t)d},d
        \right)
    \right),
    \\
    \theta_{id}
    &:=
    B_i\int_0^{X_{id}} f(u\mid R_{id},d)\,du.
\end{align*}

We first lower bound $\lambda_t$ as follows. Fix an advertiser $i\in I$. Recall that $X_{id}$ and $O_{id}$ denote its
normalized online and offline usages on day $d$, and that
\[
    R_{id}
    =
    D-\sum_{\ell<d}X_{i\ell}.
\]

For every $t\in\OPT_{id}$, the definition of a reduced-bid algorithm gives
\[
    \lambda_t
    \geq
    b_{it}
    \bigl(1-f(x_{id}(t)\mid R_{id},d)\bigr)
    \geq
    b_{it}
    \bigl(1-f(X_{id}\mid R_{id},d)\bigr).
\]
Since the bids of the queries assigned to advertiser $i$ by \OPT sum to at least the amount actually charged by \OPT,
\[
    \frac{1}{B_i}
    \sum_{t\in\OPT_{id}}\lambda_t
    \geq
    O_{id}
    \bigl(1-f(X_{id}\mid R_{id},d)\bigr).
    \tag{A}
\]

By the definition of $\theta_{id}$, combining with (A), for every day $d\in[D]$,
\[
    \frac{1}{B_i}
    \left(
        \theta_{id}
        +
        \sum_{t\in\OPT_{id}}\lambda_t
    \right)
    \geq
    \mathcal{L}_f(X_{id},O_{id},R_{id}\mid d).
\]
Summing over $d\in[D]$ and using the definition of $\Gamma_f$ gives
\[
    \frac{1}{B_i}
    \sum_{d\in[D]}
    \left(
        \theta_{id}
        +
        \sum_{t\in\OPT_{id}}\lambda_t
    \right)
    \geq
    \Gamma_f
    \sum_{d\in[D]}O_{id}.
\]
Therefore, \eqref{LPFreeCertificate1} holds with $\alpha=\Gamma_f$.

It remains to verify our choice of $\beta$. For each query $t$ matched by
\ALG to advertiser $i$ on day $d$, let
\[
    q_t:=\frac{\Delta_t}{B_i}\leq\gamma
\]
denote its normalized actual charge. Note that 
\[
    \theta_{id}
    =
    B_i
    \sum_{\substack{t\text{ matched to }i\\\text{by }\ALG\text{ on day }d}}
    \int_{x_{id}(t)}^{x_{id}(t)+q_t}
        f(u\mid R_{id},d)\,du.
\]
Since $f(\cdot\mid R_{id},d)$ is $C$-Lipschitz, for every such query,
\[
\begin{aligned}
    B_i
    \int_{x_{id}(t)}^{x_{id}(t)+q_t}
        f(u\mid R_{id},d)\,du
    &\leq
    \Delta_t
    f(x_{id}(t)\mid R_{id},d)
    +
    \frac{C B_iq_t^2}{2}\\
    &\leq
    \Delta_t
    f(x_{id}(t)\mid R_{id},d)
    +
    \frac{C\gamma}{2}\Delta_t.
\end{aligned}
\]

Now, since each query is assigned by \OPT to at most one advertiser and
$\lambda_t\geq0$,
\[
    \sum_{i\in I}\sum_{d\in[D]}
        \sum_{t\in\OPT_{id}}\lambda_t
    \leq
    \sum_t \lambda_t
    =
    \sum_{\substack{t\text{ matched}\\\text{by }\ALG}}
    b_{i^*(t)t}
    \left(
        1-
        f\left(
            x_{i^*(t)d}(t)
            \mid R_{i^*(t)d},d
        \right)
    \right),
\]
where the equality follows since $\lambda_t=0$ whenever query $t$ is
discarded by \ALG.

Next, summing the preceding bound on the integrated penalty contribution
over all matched queries gives
\[
\begin{aligned}
    \sum_{i\in I}\sum_{d\in[D]}\theta_{id}
    &\leq
    \sum_{\substack{t\text{ matched}\\\text{by }\ALG}}
    \left[
        \Delta_t
        f\left(
            x_{i^*(t)d}(t)
            \mid R_{i^*(t)d},d
        \right)
        +
        \frac{C\gamma}{2}\Delta_t
    \right].
\end{aligned}
\]
Therefore,
\[
\begin{aligned}
    &\sum_{i\in I}\sum_{d\in[D]}
    \left(
        \theta_{id}
        +
        \sum_{t\in\OPT_{id}}\lambda_t
    \right)\\
    &\leq
    \sum_{\substack{t\text{ matched}\\\text{by }\ALG}}
    \left[
        b_{i^*(t)t}
        \left(
            1-
            f\left(
                x_{i^*(t)d}(t)
                \mid R_{i^*(t)d},d
            \right)
        \right)
        +
        \Delta_t
        f\left(
            x_{i^*(t)d}(t)
            \mid R_{i^*(t)d},d
        \right)
        +
        \frac{C\gamma}{2}\Delta_t
    \right]\\
    &=
    \sum_{\substack{t\text{ matched}\\\text{by }\ALG}}
    \left[
        \Delta_t
        +
        \frac{C\gamma}{2}\Delta_t
        +
        \bigl(b_{i^*(t)t}-\Delta_t\bigr)
        \left(
            1-
            f\left(
                x_{i^*(t)d}(t)
                \mid R_{i^*(t)d},d
            \right)
        \right)
    \right]\\
    &\leq
    \sum_{\substack{t\text{ matched}\\\text{by }\ALG}}
    \left[
        \Delta_t
        +
        \frac{C\gamma}{2}\Delta_t
        +
        \bigl(b_{i^*(t)t}-\Delta_t\bigr)
    \right]\\
    &\leq
    \sum_{\substack{t\text{ matched}\\\text{by }\ALG}}
    \left(
        1+\frac{C\gamma}{2}
    \right)\Delta_t + \sum_{\substack{t\text{ matched}\\\text{by }\ALG}} \bigl(b_{i^*(t)t}-\Delta_t\bigr)\\
    &=
    \left(
        1+\frac{C\gamma}{2}
    \right)\ALG+ \sum_{\substack{t\text{ matched}\\\text{by }\ALG}} \bigl(b_{i^*(t)t}-\Delta_t\bigr).
\end{aligned}
\]
Now, it remains to bound the discrepancy $b_{i^*(t)t}-\Delta_t$. Note that such a discrepancy can be positive only when query $t$ exhausts the effective spending limit of the matched advertiser, and there is at most one such boundary query for each advertiser on each day.

Fix an advertiser $i$, and let $m_i$ denote the number of its boundary queries over the horizon. Every boundary query, except possibly the one that
exhausts the remaining horizon budget, occurs on a day with effective spending
limit at least one and hence contributes at least one normalized unit of
usage. If a boundary query exhausts the remaining horizon budget, then the
total usage of advertiser $i$ is $D\geq m_i$. Therefore, in either case,
\[
    m_i\leq\sum_{d\in[D]}X_{id}.
\]
Since each boundary query can contribute to at most $\gamma B_i$, summing over all advertisers gives
\[
    \sum_{\substack{t\text{ matched}\\\text{by }\ALG}}
    \bigl(b_{i^*(t)t}-\Delta_t\bigr)
    \leq
    \gamma\ALG.
\]

Therefore,
\[
\sum_{i\in I}\sum_{d\in[D]}
    \left(
        \theta_{id}
        +
        \sum_{t\in\OPT_{id}}\lambda_t
    \right) \leq \left(
        1+\left(1+\frac{C}{2}\right)\gamma
    \right)\ALG,
\]
and \eqref{LPFreeCertificate2} holds with
\[
    \beta
    :=
    1+\left(1+\frac{C}{2}\right)\gamma.
\]

Applying Lemma~\ref{LemmaLPFreeAbstract}, the competitive ratio of \ALG is
at least
\[
    \frac{\Gamma_f}{1+\left(1+\frac{C}{2}\right)\gamma}
    =
    \Gamma_f\bigl(1-O(\gamma)\bigr).
\]
Hence, in the small-bid regime, the competitive ratio is lower-bounded by
$\Gamma_f$.
\end{proof}

\subsection{Proof of Lemma~\ref{LemmaEqualization}} \label{apdPerfectUtilizationClaims}
\LemmaEqualization*
\begin{proof}
The case $r=0$ is immediate. Fix $d\in[D]$ and $r\in(0,D]$, and set
\[
    H(x)
    :=
    \mathcal{L}_f(x,1,r\mid d)
    +
    V_{D-d}(r-x).
\]
We show that $H(x) = V_{D-d+1}(r)$ for every $x \in [0, c(r)]$.

By definition,
\[
    \frac{\partial}{\partial x}f(x\mid r,d)
    =
    f(x\mid r,d)-\nu_{D-d}(r-x),
\]
and
\[
    \frac{d}{dx}V_{D-d}(r-x)
    =
    -\nu_{D-d}(r-x)
\]
almost everywhere. Therefore,
\[
    H'(x)
    =
    f(x\mid r,d)
    -
    \frac{\partial}{\partial x}f(x\mid r,d)
    -
    \nu_{D-d}(r-x)
    =
    0.
\]
Hence, $H$ is constant over $[0,c(r)]$. This is the equalization property.

It remains to identify the value of this constant. Let
\[
    Y_r
    :=
    \min\{Z_{D-d+1},c(r)\}.
\]
By construction, $Z_{D-d+1}$ is independent of $S_{D-d}$.

We first formalize the exponential trick. For any continuous function
$h:[0,c(r)]\to\mathbb{R}$ satisfying $h(c(r))=1$,
\begin{equation}\label{eq:CappedExpIdentity}
    \mathbb{E}\left[
        \int_0^{Y_r}h(y)\,dy
        +
        1-h(Y_r)
    \right]
    =
    \mathbb{E}[Y_r]
    =
    1-e^{-c(r)}.
\end{equation}
Indeed,
\[
    \mathbb{E}\left[
        \int_0^{Y_r}h(y)\,dy
    \right]
    =
    \int_0^{c(r)}e^{-y}h(y)\,dy,
\]
while
\[
    \mathbb{E}[1-h(Y_r)]
    =
    \int_0^{c(r)}
        e^{-y}\bigl(1-h(y)\bigr)\,dy
    +
    e^{-c(r)}\bigl(1-h(c(r))\bigr).
\]
Since $h(c(r))=1$, adding the two expressions gives
\[
    \int_0^{c(r)}e^{-y}\,dy
    =
    1-e^{-c(r)}
    =
    \mathbb{E}[Y_r],
\]
which proves \eqref{eq:CappedExpIdentity}.

Since $f(c(r)\mid r,d)=1$ and $H$ is constant, applying
\eqref{eq:CappedExpIdentity} with
$h(\cdot)=f(\cdot\mid r,d)$ gives
\[
    H(x)
    =
    \mathbb{E}[H(Y_r)]
    =
    \mathbb{E}\left[
        Y_r+V_{D-d}(r-Y_r)
    \right] = \mathbb{E}\left[
        \min\{r,S_{D-d+1}\}
    \right]
    =
    V_{D-d+1}(r).
\]
for every $x\in[0,c(r)]$.
This proves the result.
\end{proof}

\subsection{Asymptotic Property of the Harmonic Bound}
\label{apdHarmonicLimit}

Recall that
\[
    H_N(i)
    :=
    \sum_{k=N-i+1}^{N}\frac{1}{k},
    \qquad i\in[N].
\]

By Lemma~\ref{KVVBuildingBlock}, the single-day upper-triangular construction yields the harmonic bound $H_N(i)$ for advertiser $i$. When $U_N$ is chosen uniformly from $[N]$, we show that $H_N(U_N)$ converges in distribution to a mean-one exponential random variable. Consequently, after imposing a spending limit $\kappa$, the average capped harmonic bound converges to $\mathbb{E}[\min\{Z,\kappa\}]$. The following lemma formalizes these observations.

\begin{Lemma}\label{LemmaHarmonicLimit}
Let $U_N$ be chosen uniformly from $[N]$. Then,
\[
    H_N(U_N)
    \xrightarrow{d}
    Z,
    \qquad
    Z\sim\operatorname{Exp}(1),
\]
as $N\to\infty$.
\end{Lemma}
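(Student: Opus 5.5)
We will prove convergence of distribution functions directly: it suffices to show that for every fixed $z\ge0$,
\[
    \Pr\bigl(H_N(U_N)> z\bigr)\longrightarrow e^{-z},
\]
since the limit is continuous in $z$ and $H_N(U_N)\ge 0$.

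The key observation is that $H_N(i)$ is strictly increasing in $i$. Hence $\{H_N(U_N)>z\}=\{U_N> i_N(z)\}$, where
\[
    i_N(z):=\max\{i\in\{0,\ldots,N\}: H_N(i)\le z\}
\]
and $H_N(0):=0$. Consequently
\[
    \Pr\bigl(H_N(U_N)>z\bigr)=\frac{N-i_N(z)}{N}.
\]
So the whole task reduces to locating the threshold index $i_N(z)$.

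To do this we use the standard logarithmic comparison for harmonic sums. Setting $j:=N-i+1$ and comparing with integrals of $1/x$ gives
\[
    \log\frac{N+1}{j}\le H_N(i)=\sum_{k=j}^{N}\frac1k\le \log\frac{N}{j-1}
\]
for $j\ge2$, with the trivial case $j=1$ handled by $H_N(N)\approx\log N\to\infty$. Fix $z$ and take $N$ large enough that $H_N(N)>z$. Then:
\begin{itemize}
    \item the condition $H_N(i)\le z$ forces $j\ge (N+1)e^{-z}$;
    \item the condition $H_N(i)>z$ holds whenever $j-1< Ne^{-z}$.
\end{itemize}
Therefore $N-i_N(z)=j^\ast-1$ with $j^\ast$ within $O(1)$ of $Ne^{-z}$, i.e., $N-i_N(z)=Ne^{-z}+O(1)$. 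Dividing by $N$ yields the claimed limit $e^{-z}$.

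The only mildly delicate step is the off-by-one bookkeeping between $i$, $j$ and the integral bounds at the threshold. This is harmless because every such error is $O(1)$, which is negligible against $N$.

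As a consequence, since $y\mapsto\min\{y,\kappa\}$ is bounded and continuous, we also get
\[
    \mathbb E\bigl[\min\{H_N(U_N),\kappa\}\bigr]\to 1-e^{-\kappa}.
\]
The same holds with $H_N(U_N)$ replaced by $H_N(U_N)+N/B$ when $N/B\to0$, by Slutsky's theorem. This is the form in which the lemma is used in the upper-bound proofs.
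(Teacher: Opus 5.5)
Your strategy works, but the second bullet is stated in the wrong direction, and as written it is false. Take $N=11$, $z=1$, $j=5$: then $j-1=4<11/e\approx 4.047$, yet $\sum_{k=5}^{11}1/k\approx 0.937\le 1$. Read literally, both bullets only exclude indices from $\{H_N\le z\}$, so both bound $N-i_N(z)$ from below, and the matching upper bound is never derived. This is a reversed implication, not off-by-one bookkeeping.

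The fix follows immediately from the estimate you already displayed. The bound $H_N(i)\le\log\frac{N}{j-1}$ gives $j-1\ge Ne^{-z}\Rightarrow H_N(i)\le z$. Applying this at $j=\lceil Ne^{-z}\rceil+1$ yields $N-i_N(z)\le Ne^{-z}+1$. Together with your first bullet this gives $|N-i_N(z)-Ne^{-z}|\le 1$, hence $|\Pr(H_N(U_N)>z)-e^{-z}|\le 1/N$ for every $z\ge 0$. The separate treatment of $j=1$ is then unnecessary.

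With that repair your argument is correct, and it takes a different route from the paper. The paper couples all $N$ on one probability space via $U_N=\lceil NU\rceil$ with $U\sim\operatorname{Unif}(0,1)$ and asserts $H_N(U_N)\to-\log(1-U)$ almost surely. In that view $\operatorname{Exp}(1)$ appears directly as the quantile transform of a uniform, and convergence in distribution follows at once. That proof is shorter, but it leaves the pointwise limit unjustified. The limit rests on exactly the harmonic-versus-logarithm comparison you carry out, together with $N-\lceil NU\rceil\to\infty$ for $U<1$. Your computation of the distribution function through the monotonicity of $i\mapsto H_N(i)$ is essentially the counterpart of that quantile coupling. It is fully explicit and gives a uniform $O(1/N)$ rate, which the paper's qualitative statement does not.
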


\begin{proof}
Let $U\sim\operatorname{Unif}(0,1)$ and set $U_N:=\lceil NU\rceil$, so that
$U_N$ is uniformly distributed on $[N]$. Note that, as $N \to \infty$, $U_N/N\to U$ almost surely, and
\[
    H_N(U_N)\longrightarrow -\log(1-U)\sim\operatorname{Exp}(1) .
\]
\end{proof}

For any fixed $\kappa\geq0$, the function
$x\mapsto\min\{x,\kappa\}$ is bounded and continuous. Thus, convergence in
distribution implies
\[
    \mathbb{E}[\min\{H_N(U_N),\kappa\}]
    \longrightarrow
    \mathbb{E}[\min\{Z,\kappa\}].
\]
Since $U_N$ is uniform on $[N]$,
\[
    \mathbb{E}[\min\{H_N(U_N),\kappa\}]
    =
    \frac{1}{N}
    \sum_{i=1}^{N}
    \min\{H_N(i),\kappa\}.
\]
Using this property, we can derive the upper bound for the single-day upper-triangular instance from Lemma~\ref{KVVBuildingBlock} as follows. In the construction, the (normalized) usage of advertiser $i$ is at most
\[
    \min\left\{
        H_N(i)+\frac{N}{B},
        \kappa
    \right\},
\]
while $\OPT=BN$. Hence,
\[
    \frac{\ALG}{\OPT}
    \leq
    \frac{1}{N}
    \sum_{i=1}^{N}
    \min\left\{
        H_N(i)+\frac{N}{B},
        \kappa
    \right\}.
\]
For fixed $N$, letting $B\to\infty$ removes the vanishing error $N/B$, and then letting $N\to\infty$ gives
\[
    \limsup \frac{\ALG}{\OPT}
    \leq
    \mathbb{E}[\min\{Z,\kappa\}]
    =
    1-e^{-\kappa}.
\]
Thus, the single-day upper-triangular block yields the asymptotic upper bound $1-e^{-\kappa}$.

\section{Tight Characterization of the Thought Experiment}\label{SecFreshDailyBudget}

Recall the thought experiment introduced in Section~\ref{MainResultsIntro}. The online algorithm \ALG faces no horizon spending limit and instead receives a fresh normalized budget $\delta$ for every advertiser on each day, while \OPT remains subject to its inflexible daily spending limit. Hence, each day can be analyzed independently, and the optimal competitive ratio over $D$ days is the same as that of the corresponding single-day problem. We show that the optimal competitive ratio in this single-day setting is $1-\exp(-\delta)$.

\paragraph{Lower Bound}
We first show that the competitive ratio is at least $1-\exp(-\delta)$ with our choice of penalty function. Recall that $x_{id}(t)$ denotes the normalized amount already spent from advertiser $i$ on day $d$ immediately before the arrival of query $t$. In the single-day setting, we write $x_{i}(t)$ for this quantity. Also, recall from Section~\ref{MainResultsIntro} that
\[
    f_\delta(x):=e^{x-\delta}.
\]

We consider the reduced-bid algorithm that, upon the arrival of query $t$, matches it to an advertiser maximizing
\[
    b_{it}\bigl(1-f_\delta(x_{i}(t))\bigr).
\]
By Lemma~\ref{LPFreeCertificate}, its competitive ratio is lower-bounded by
\begin{equation}\label{FreshBudgetEq1}
    \min_{\substack{x\in[0,\delta]\\ o\in(0,1]}}
    \frac{\int_0^x f_\delta(u)\,du+o\bigl(1-f_\delta(x)\bigr)}{o}
    \geq
    \min_{x\in[0,\delta]}
    \left\{
        e^{x-\delta}-e^{-\delta}+1-e^{x-\delta}
    \right\}
    =
    1-e^{-\delta}.
\end{equation}

\paragraph{Upper Bound}
It remains to show that no (possibly randomized) online algorithm can achieve a competitive ratio greater than $1-\exp(-\delta)$ in the single-day thought experiment. As discussed in Appendix~\ref{apdDetJustification}, it is sufficient to consider an arbitrary deterministic fractional online algorithm.

Apply Lemma~\ref{KVVBuildingBlock} with $\kappa=\delta$. The offline optimum
exhausts the nominal daily budget of every advertiser and obtains
\[
    \OPT=BN.
\]
For the online algorithm, the normalized usage of advertiser $i$ is at most
\[
    \min\left\{
        H_N(i)+\frac{N}{B},
        \delta
    \right\}.
\]
Therefore,
\[
    \ALG
    \leq
    B\sum_{i=1}^{N}
    \min\left\{
        H_N(i)+\frac{N}{B},
        \delta
    \right\}.
\]
For every fixed $N$, sending $B\to\infty$ gives
\[
    \limsup_{B\to\infty}
    \frac{\ALG}{\OPT}
    \leq
    \frac{1}{N}
    \sum_{i=1}^{N}
    c\bigl(H_N(i)\bigr) = \mathbb{E}\!\left[c\bigl(H_N(U_N)\bigr)\right].
\]
By Lemma~\ref{LemmaHarmonicLimit}, since $c(\cdot)$ is bounded and continuous,
\[
    \frac{1}{N}
    \sum_{i=1}^{N}
    c\bigl(H_N(i)\bigr)
    \longrightarrow
    \mathbb{E}[c(Z)]
    =
    1-e^{-\delta},
\]
as $N \to \infty$. Therefore,
\[
    \frac{\ALG}{\OPT}
    \leq
    1-e^{-\delta}+o(1).
\]
Together with the lower bound, this proves that the optimal competitive ratio in the single-day thought experiment is exactly $1-\exp(-\delta)$. Since our main setting further restricts \ALG by imposing the horizon spending limit, the same value also serves as an upper bound in our setting for every $D$ and $\delta$.

\section{A Limitation of Direct Generalizations of \textsc{Balance}}
\label{UBclassicBALANCE}

We extend Example~1 and show that a broad class of direct exponential
generalizations of \textsc{Balance} cannot improve upon the classical
$1-1/e$ guarantee.

Recall that the classical \textsc{Balance} sets its penalty function as $\exp(x-1)$, where $x\in[0,1]$ denotes the normalized amount already spent. If an advertiser can instead spend up to $r>0$ normalized units on the current day, its natural extension is
\[
    f_{\beta,r}(x)
    :=
    \exp\left(
        \beta\left(\frac{x}{r}-1\right)
    \right),
    \qquad
    x\in[0,r],
\]
where $\beta\geq1$ controls the steepness of the penalty. Note that, this is a valid penalty function since it is non-decreasing and satisfies $f_{\beta,r}(r)=1$. Setting $r=\beta=1$ recovers the classical \textsc{Balance} in a single day, although we limit \OPT to be inflexible. Moreover, when the remaining budget is sufficient, i.e., $r \geq \delta$, setting $\beta=\delta$ gives
\[
    f_{\delta, r}(x)=f_{\delta}(x) = e^{x-\delta},
\]
which is the exponential penalty used in the thought experiment.

Recall that $R_{id}$ denotes the remaining horizon budget of advertiser $i$ at the beginning of day $d$. We allow the algorithm to choose a possibly smaller spending limit $s_{id}\in[0,c(R_{id})]$ for each advertiser $i$ at the beginning of day $d$, based on the history before that day. It then applies $f_{\beta,s_{id}}$ on $[0,s_{id}]$. If $s_{id}=0$, the advertiser is inactive on that day. Therefore, the result below also allows the algorithm to intentionally restrict its daily spending below the available limit.

\begin{Lemma}\label{LemmaGeneralizedBalanceUpperBound}
Fix $D\geq2$, $\delta\in[1,D]$, and $\beta\geq1$. Any deterministic
reduced-bid algorithm using the (scaled) exponential penalty
$f_{\beta,s_{id}}$ on every day has competitive ratio at most
\[
    \rho(\beta)
    :=
    \left(
        1+
        \frac{1}{1-e^{-\beta}}
        -
        \frac{1}{\beta}
    \right)^{-1}
    \leq
    1-\frac{1}{e}
\]
in the small-bid regime.
\end{Lemma}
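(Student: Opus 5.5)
The plan is to generalize Example~1 into an adaptive, day-by-day construction against a single "long-lived" advertiser whose horizon budget \ALG is lured into spending on queries that \OPT would rather give to a fresh, short-lived competitor. Fix a deterministic algorithm of the stated form, and let advertiser $1$ have nominal daily budget $B$ (normalized horizon budget $D$). For each day $d$ we introduce a fresh advertiser $2^{(d)}$ that only ever bids on day $d$. Its nominal budget is $c_\beta\,\delta B$ (declared up front, large enough), where
\[
    c_\beta:=\frac{1}{1-e^{-\beta}}-\frac{1}{\beta}.
\]
Since the algorithm is deterministic, the adversary knows the spending limit $s_{1d}\in[0,c(R_{1d})]$ that the algorithm picks for advertiser $1$ at the start of day $d$, and can tailor day $d$ to it.

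On day $d$, the adversary first sends $s_{1d}B$ unit-bid-for-advertiser-$1$ queries (in small increments). When advertiser $1$ has spent $x\in[0,s_{1d}]$, the query also carries a bid for $2^{(d)}$ equal to
\[
    \frac{(1-\epsilon)\bigl(1-e^{\beta(x/s_{1d}-1)}\bigr)}{1-e^{-\beta}}.
\]
Advertiser $2^{(d)}$ is never charged by \ALG. Whatever limit $s_{2^{(d)}d}>0$ the algorithm chooses for it, its reduced bid is its bid times $1-f_{\beta,s}(0)=1-e^{-\beta}$, which is exactly a $(1-\epsilon)$ fraction of advertiser $1$'s reduced bid; and if its limit is $0$ it is inactive. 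Hence \ALG assigns every such query to advertiser $1$, which reaches $s_{1d}$. Afterwards the adversary sends $B$ queries on which only advertiser $1$ bids one. \ALG must discard them, since advertiser $1$'s penalty equals $1$ for the rest of the day. If $s_{1d}=0$, no shared queries are sent.

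The accounting is then as follows.
\begin{itemize}
    \item \ALG earns $SB$, where $S:=\sum_d s_{1d}\le D$.
    \item \OPT gives the day-$d$ shared queries to $2^{(d)}$, earning (as $\epsilon\to0$ and in the small-bid limit)
    \[
        s_{1d}B\int_0^1\frac{1-e^{\beta(u-1)}}{1-e^{-\beta}}\,du=c_\beta s_{1d}B .
    \]
    This is within $2^{(d)}$'s daily budget because $s_{1d}\le\delta$.
    \item \OPT gives the advertiser-$1$-only queries to advertiser $1$, earning $B$ per day and $DB$ in total, within its daily and horizon limits.
\end{itemize}
Thus $\ALG/\OPT\le S/(c_\beta S+D)$. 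This is increasing in $S$, so with $S\le D$ it is at most $1/(1+c_\beta)=\rho(\beta)$. Sending $\epsilon\to0$ and the bid granularity to zero gives the claim.

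Finally, $\rho(\beta)\le1-1/e$ is equivalent to $c_\beta\ge 1/(e-1)=c_1$. This follows from monotonicity of $c_\beta$ on $[1,\infty)$, since
\[
    c_\beta'=\frac{1}{\beta^2}-\frac{e^{-\beta}}{(1-e^{-\beta})^2}=\frac1{\beta^2}-\frac{1}{4\sinh^2(\beta/2)}\ge0,
\]
using $\sinh y\ge y$.

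The step I expect to need the most care is justifying that the algorithm's adaptive freedom cannot escape the trap. It has three possible escapes: shrinking $s_{1d}$, deactivating or reshaping $2^{(d)}$'s penalty through $s_{2^{(d)}d}$, and the bid granularity near $x=s_{1d}$. Each must either leave the tie-breaking preference for advertiser $1$ intact or only lower $S$, and the monotonicity of $S/(c_\beta S+D)$ then handles the last case.
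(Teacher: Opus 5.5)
Your proposal is correct and follows essentially the same route as the paper. On each day, advertiser $1$ is lured up to its chosen limit $s_{1d}$ by a competitor whose bid is advertiser $1$'s reduced bid scaled by $(1-\epsilon)/(1-e^{-\beta})$; then $B$ advertiser-$1$-only queries arrive. This gives $\ALG/\OPT\le S/(D+c_\beta S)\le\rho(\beta)$, and monotonicity of $c_\beta$ finishes the proof. The differences are only cosmetic: the paper uses a single competitor with nominal budget $\delta B$ rather than a fresh one per day (this works because \ALG never charges it, so its penalty stays at $e^{-\beta}$), and your $\sinh y\ge y$ argument makes explicit the sign of $G'(\beta)$ that the paper leaves to direct calculation.
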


\begin{proof}
Fix an arbitrarily small $\epsilon\in(0,1)$ and a sufficiently large $B$. Consider two advertisers with nominal daily budgets
\[
    B_1:=B,
    \qquad
    B_2:=\delta B.
\]
All bids in this instance are at most one, and the instance lies in the small-bid regime as $B\to\infty$.

We construct the arrivals adaptively over the $D$ days. At the beginning of day $d$, let
\[
    \delta_d:=s_{1d}\in[0,c(R_{1d})]
\]
be the spending limit selected by the algorithm for advertiser $1$.  Suppose first that $\delta_d>0$, and let
$n_d:=\lfloor\delta_dB\rfloor$. The first $n_d$ queries on day $d$ are
compatible with both advertisers. Advertiser $1$ bids one, while advertiser
$2$ bids
\[
    b_{2,d, t}
    :=
    \frac{
        (1-\epsilon)
        \left[
            1-
            \exp\left(
                \beta\left(
                    \frac{t-1}{\delta_dB}-1
                \right)
            \right)
        \right]
    }{
        1-e^{-\beta}
    },
    \qquad
    t\in[n_d].
\]
After these shared queries, $B$ additional unit-bid queries arrive, each compatible only with advertiser $1$. If $\delta_d=0$, there are no shared queries and release only these $B$ advertiser-$1$-only queries.

\medskip
\noindent
\emph{Offline Optimum.} \OPT assigns the $B$ advertiser-$1$-only queries on every day to advertiser $1$ and all shared queries to advertiser $2$. Note that this allocation is feasible under its daily limit, as advertiser $1$ spends exactly $B$ per day, while the total bid assigned to advertiser $2$ on day $d$ is at most $\delta_dB\leq\delta B=B_2$. Therefore,
\[
    \OPT
    \geq
    DB+\sum_{d \in [D]} \sum_{t \in [n_d]}b_{2,d, t}.
\]

For every $\delta_d>0$, as $B \to \infty$,
\[
\begin{aligned}
    \frac{1}{B}\sum_{t=1}^{n_d}b_{2,d,t}
    \approx
    \frac{(1-\epsilon)\delta_d}{1-e^{-\beta}}
    \int_0^1
        \left(1-e^{\beta(u-1)}\right)\,du
    =
    (1-\epsilon)\delta_d
    \left(
        \frac{1}{1-e^{-\beta}}
        -
        \frac{1}{\beta}
    \right).
\end{aligned}
\]

Let
\[
    S:=\sum_{d \in [D]}\delta_d.
\]

Then, as $B \to \infty$,
\[
    \frac{\OPT}{B}
    \geq
    \left[
        D+
        (1-\epsilon)S
        \left(
            \frac{1}{1-e^{-\beta}}
            -
            \frac{1}{\beta}
        \right)
    \right].
\]

\medskip
\noindent
\emph{Online Algorithm.} Suppose that the first $t-1$ shared queries on day $d$ have been assigned to advertiser $1$. Its normalized current-day usage is then $(t-1)/B$, so its reduced bid on query $t$ is
\[
    1-
    \exp\left(
        \beta\left(
            \frac{t-1}{\delta_dB}-1
        \right)
    \right).
\]
Advertiser $2$ has not been matched to any queries. If it is active, its reduced bid is
\[
\begin{aligned}
    b_{2,d,t}(1-e^{-\beta})
    &=
    (1-\epsilon)
    \left[
        1-
        \exp\left(
            \beta\left(
                \frac{t-1}{\delta_dB}-1
            \right)
        \right)
    \right],
\end{aligned}
\]
which is strictly smaller. If advertiser $2$ is inactive, the conclusion is immediate. Therefore, by induction, the algorithm assigns every shared query to advertiser $1$.

Consequently, the algorithm spends up to its effective limit $\delta_d$ from advertiser $1$ on day $d$ and obtains no additional value from the remaining advertiser-$1$-only queries.
Since these are actual daily spending limits subject to advertiser $1$'s
normalized horizon budget, $S\leq D$, and
\[
    \frac{\ALG}{B} \to S,
\]
as $B\to \infty$.

Therefore,
\[
    \limsup_{B\to\infty}\frac{\ALG}{\OPT}
    \leq
    \frac{S}{
        D+
        (1-\epsilon)S
        \left(
            \frac{1}{1-e^{-\beta}}
            -
            \frac{1}{\beta}
        \right)
    } \leq
    \left[
        1+
        (1-\epsilon)
        \left(
            \frac{1}{1-e^{-\beta}}
            -
            \frac{1}{\beta}
        \right)
    \right]^{-1}.
\]
Letting $\epsilon\to0$ proves that the competitive ratio is at most
$\rho(\beta)$.

It remains to compare this bound with $1-1/e$. Let
\[
    G(\beta)
    :=
    \frac{1}{1-e^{-\beta}}
    -
    \frac{1}{\beta}.
\]
For every $\beta>0$, by direct calculation,
\[
    G'(\beta)
    =
    \frac{1}{\beta^2}
    -
    \frac{e^{-\beta}}{(1-e^{-\beta})^2}
    >
    0.
\]
Thus, $G(\beta)$ is strictly increasing and $\rho(\beta)=1/(1+G(\beta))$ is strictly decreasing. At $\beta=1$,
\[
    \rho(1)
    =
    \left(
        1+\frac{1}{e-1}
    \right)^{-1}
    =
    1-\frac{1}{e}.
\]
Therefore, $\rho(\beta)\leq1-1/e$ for every $\beta\geq1$, with strict
inequality for $\beta>1$.
\end{proof}

Note that, when $\beta=\delta=2$, the bound becomes
\[
    \rho(2)
    =
    \frac{2(1-e^{-2})}{3-e^{-2}}
    \approx0.604,
\]
recovering Example~1.

\section{From Primal-Dual to LP-Free Analysis}
\label{apdLPFreeJustification}

In the small-bid regime, it is folklore in Adwords to consider the natural fractional relaxation of the offline optimum. Let $y_{it}\in[0,1]$ denote the fraction of query $t$ assigned to advertiser $i$. In particular, $y_{it}=1$ corresponds to assigning query $t$ entirely to advertiser $i$. In our setting, consider
\begin{align}
    \max\quad
    & \sum_{d\in[D]}\sum_{t\in Q_d}\sum_{i\in I}
        b_{it}y_{it}
    \nonumber\\
    \text{s.t.}\quad
    & \sum_{i\in I}y_{it}\leq1,
        && \forall d\in[D],\ t\in Q_d,
    \label{LPPrimalQuery}\\
    & \sum_{t\in Q_d}b_{it}y_{it}\leq B_i,
        && \forall i\in I,\ d\in[D],
    \label{LPPrimalBudget}\\
    & y_{it}\geq0,
        && \forall i\in I,\ d\in[D],\ t\in Q_d.
    \nonumber
\end{align}
Note that the horizon spending limit is redundant for \OPT, since its
inflexible daily spending limits already imply it. Also, the primal relaxation above can equivalently be viewed as a classical Adwords instance with $D$ copies of each advertiser, one for each day, where queries in $Q_d$ are adjacent only to the corresponding day-$d$ copies. 

Let $\lambda_t$ be the dual variable associated with
\eqref{LPPrimalQuery}, and let $\theta_{id}/B_i$ be the dual variable associated with \eqref{LPPrimalBudget}. The resulting dual program is
\begin{align}
    \min\quad
    & \sum_{d\in[D]}\sum_{t\in Q_d}\lambda_t
      +\sum_{i\in I}\sum_{d\in[D]}\theta_{id}
    \nonumber\\
    \text{s.t.}\quad
    & \frac{b_{it}}{B_i}\theta_{id}+\lambda_t
      \geq b_{it},
        && \forall i\in I,\ d\in[D],\ t\in Q_d,
    \label{LPDualEdge}\\
    & \theta_{id}\geq0,
        && \forall i\in I,\ d\in[D],
    \nonumber\\
    & \lambda_t\geq0,
        && \forall d\in[D],\ t\in Q_d.
    \nonumber
\end{align}

A classical primal-dual analysis of Adwords \citep{buchbinder2007online,devanur2013randomized} can be viewed as constructing nonnegative dual variables, together with a factor $\alpha\geq0$, such that
\[
    \frac{b_{it}}{B_i}\theta_{id}+\lambda_t
    \geq
    \alpha b_{it}
\]
holds for every advertiser-query pair $(i,t)$. Directly applying this pairwise analysis in our setting recovers only the $1-1/e$ guarantee.

To exploit the flexibility and improve beyond this guarantee, one possible approach is to allow the local guarantee associated with each advertiser-day pair to vary across days. In particular, suppose that for advertiser $i$ on day $d$, we obtain a factor $\alpha_{id}\geq0$ such that
\[
    \frac{b_{it}}{B_i}\theta_{id}+\lambda_t
    \geq
    \alpha_{id}b_{it},
    \qquad
    t\in\OPT_{id},
\]
where $\OPT_{id}\subseteq Q_d$ denotes the set of queries matched to advertiser $i$ by \OPT on day $d$. Summing these inequalities over $t\in\OPT_{id}$ gives
\[
    O_{id}\theta_{id}
    +
    \sum_{t\in\OPT_{id}}\lambda_t
    \geq
    \alpha_{id}B_iO_{id}.
\]
Since $O_{id}\leq1$, this further implies
\[
    \theta_{id}
    +
    \sum_{t\in\OPT_{id}}\lambda_t
    \geq
    \alpha_{id}B_iO_{id}.
\]

Directly characterizing the local factors $\alpha_{id}$, however, is less transparent in our setting. This is because the penalty function on day $d$, as well as its domain, depends on the realized remaining horizon budget $R_{id}$, which itself depends on the online usages on earlier days. A natural alternative is therefore to bypass the individual local factors and directly seek an amortized certificate of the form
\[
    \sum_{d\in[D]}
    \left(
        \theta_{id}
        +
        \sum_{t\in\OPT_{id}}\lambda_t
    \right)
    \geq
    \alpha^* B_i
    \sum_{d\in[D]}O_{id},
\]
where
\[
    \alpha^* := \frac{
        \sum_{d\in[D]}\alpha_{id}O_{id}
    }{
        \sum_{d\in[D]}O_{id}
    }.
\]

This amortization across days for each advertiser leads directly to the LP-free framework of \citet{goyal2025asymptotically} used in Section~\ref{SecLPFree}.

\section{Why deterministic fractional algorithms are sufficient for upper bounds}
\label{apdDetJustification}

In a deterministic fractional online algorithm, an arriving query $t$ may be fractionally assigned among advertisers, where $y_{it}\in[0,1]$ denotes the fraction assigned to advertiser $i$. The resulting allocations satisfy
\begin{align}
    \text{maximize} \quad
    & \sum_{i\in I}\sum_{d\in[D]}\sum_{t\in Q_d} b_{it} y_{it}
    \nonumber\\
    \text{subject to} \quad
    & \sum_{t\in Q_d} b_{it} y_{it}
    \leq \delta B_i,
    && \forall i\in I,\ d\in[D],
    \label{LPUpperDailyApp}\\
    & \sum_{d\in[D]}\sum_{t\in Q_d} b_{it} y_{it}
    \leq DB_i,
    && \forall i\in I,
    \label{LPUpperTotalApp}\\
    & \sum_{i\in I} y_{it}\leq1,
    && \forall d\in[D],\ t\in Q_d,
    \label{LPUpperQueryApp}\\
    & 0\leq y_{it}\leq1,
    && \forall i\in I,\ d\in[D],\ t\in Q_d.
    \nonumber
\end{align}
The constraints in \eqref{LPUpperDailyApp} and \eqref{LPUpperTotalApp} enforce the daily and horizon spending limits, respectively, while \eqref{LPUpperQueryApp} ensures that the total fraction assigned from each query is at most one.

To justify the reduction, consider an arbitrary randomized online algorithm. For each arriving query $t$, let $\mathcal{F}_t$ denote the filtration observed up to the arrival of query $t$. Define
\[
    \bar y_{it}
    :=
    \mathbb{E}\!\left[y_{it}\mid\mathcal{F}_t\right].
\]
Now, consider the deterministic fractional online algorithm that assigns query $t$ according to the fractions $(\bar y_{it})_{i\in I}$. Since every realization of the randomized algorithm
satisfies \eqref{LPUpperDailyApp}--\eqref{LPUpperQueryApp}, so does the resulting fractional algorithm, and by linearity of expectation its reward is exactly the expected reward of the randomized algorithm. Hence, any upper bound for deterministic fractional online algorithms also applies to randomized online algorithms.

\section{Additional Details for the Proof of Theorem~\ref{Thm2}}
\label{apdPerfectUtilizationUB}

In this section, we provide the formal derivation of
\eqref{eq:UpperBoundOneDayPotential} used in the proof of
Theorem~\ref{Thm2}. Recall that
\[
    \Phi_d(y)
    :=
    y+V_{D-d+1}(D-y),
    \qquad
    \Phi_{D+1}(y):=y.
\]

We first show that $\Phi_d$ is non-decreasing and $1$-Lipschitz. Indeed, since $V_k'(r)=\nu_k(r)\in[0,1]$ almost everywhere,
\[
    \Phi_d'(y)
    =
    1-\nu_{D-d+1}(D-y)
    \in[0,1].
\]

The definition of $V_k$ also gives the recursion
\begin{equation}\label{eq:PhiRecursionAppendix}
    \Phi_d(y)
    =
    \mathbb{E}\left[
        \Phi_{d+1}
        \left(
            y+\min\{Z,c(D-y)\}
        \right)
    \right],
    \qquad
    Z\sim\operatorname{Exp}(1).
\end{equation}
Indeed, writing $r=D-y$ and
$Y=\min\{Z,c(r)\}$, we have
\[
    V_{D-d+1}(r)
    =
    \mathbb{E}\left[
        Y+V_{D-d}(r-Y)
    \right],
\]
which is equivalent to \eqref{eq:PhiRecursionAppendix}.

We now prove a useful lemma for analyzing the hard instance.

\begin{Lemma}\label{LemmaOneDayPotentialUB}
Fix $M\geq1$ and let $\eta:=D/M$ and
$\ell_m:=(m-1)\eta$. For every $\xi>0$, there exists an integer
$K=K(M,\xi)$ such that, for every day $d\in[D]$,
\[
    \frac{1}{N}\sum_{i=1}^N
    \Phi_{d+1}(Y_{id})
    \leq
    \frac{1}{N}\sum_{i=1}^N
    \Phi_d(Y_{i,d-1})
    +
    \eta+\xi+\frac{N}{B}+\frac{DMK}{N}.
\]
\end{Lemma}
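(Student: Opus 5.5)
The plan is to prove the inequality slab by slab on a fixed day $d$ and then sum over slabs. Fix $d$ and a nonempty slab $I_{md}$ of size $N_m$. Every $i\in I_{md}$ satisfies $\ell_m<Y_{i,d-1}\le\ell_m+\eta$; for $m=1$ zero usage is included. Such an advertiser's true remaining horizon budget is $D-Y_{i,d-1}\le D-\ell_m$, so its day-$d$ usage is at most $c(D-\ell_m)=\kappa_m$. Apply Lemma~\ref{KVVBuildingBlock} to $I_{md}$ with spending limit $\kappa_m$. After relabeling the slab as $1,\dots,N_m$, this gives
\[
X_{id}\le\min\bigl\{H_{N_m}(i)+N_m/B,\ \kappa_m\bigr\}.
\]
Because $\Phi_{d+1}$ is non-decreasing and $1$-Lipschitz, we then have
\[
\Phi_{d+1}(Y_{id})\le\Phi_{d+1}\bigl(\ell_m+\min\{H_{N_m}(i),\kappa_m\}\bigr)+\eta+\frac{N_m}{B}.
\]
Here we used $Y_{i,d-1}\le\ell_m+\eta$, and the fact that $\min\{a+\epsilon,\kappa\}\le\min\{a,\kappa\}+\epsilon$.

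Next we average over the slab and pass to the exponential limit. The map $g_m(z):=\Phi_{d+1}(\ell_m+\min\{z,\kappa_m\})$ is bounded (by $D+D$), monotone and continuous. Lemma~\ref{LemmaHarmonicLimit} gives $\frac{1}{N_m}\sum_{i}g_m(H_{N_m}(i))\to\mathbb{E}[g_m(Z)]$. There are only $M$ slab indices and $D$ days, so finitely many functions $g_m$ arise for each $d$. Hence there is $K=K(M,\xi)$ such that every slab with $N_m\ge K$ has averaging error at most $\xi$. The recursion \eqref{eq:PhiRecursionAppendix} gives $\mathbb{E}[g_m(Z)]=\Phi_d(\ell_m)$. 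Monotonicity of $\Phi_d$ then yields $\Phi_d(\ell_m)\le\Phi_d(Y_{i,d-1})$ for every $i\in I_{md}$. In the case $m=1$ with $Y=0$ we have $\ell_1=0$, so this holds there as well. Therefore every slab with $N_m\ge K$ satisfies
\[
\sum_{i\in I_{md}}\Phi_{d+1}(Y_{id})\le\sum_{i\in I_{md}}\Phi_d(Y_{i,d-1})+N_m\Bigl(\eta+\xi+\frac{N}{B}\Bigr).
\]

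Small slabs, those with $N_m<K$, are handled crudely. For each such advertiser, $\Phi_{d+1}(Y_{id})-\Phi_d(Y_{i,d-1})$ is bounded by a constant of order $D$. One way to see this: $\Phi_{d+1}(Y_{id})\le\Phi_{d+1}(Y_{i,d-1}+\delta)$, and both $\Phi$ values lie in $[0,2D]$. At most $MK$ advertisers fall into small slabs, so their total excess is at most $DMK$, up to a constant that can be absorbed. Summing over slabs and dividing by $N$ gives the stated bound, with error $\eta+\xi+N/B+DMK/N$.

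The main obstacle is making the uniformity in the harmonic-to-exponential limit precise. Convergence for each fixed function does not by itself give a single $K$ that works for all slabs. The resolution is that $m$ ranges over a finite set and $d$ over $[D]$, so one takes the maximum of finitely many thresholds. One also has to be careful that the relabeling in Lemma~\ref{KVVBuildingBlock} is applied within each slab independently. Finally, the harmonic bound for a slab of size $N_m$ is $H_{N_m}$, not $H_N$, which is why the threshold $K$ is stated in terms of slab size.
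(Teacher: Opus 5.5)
Your proposal is correct and follows the paper's proof essentially step for step: relax each slab's limit to $\kappa_m$, apply Lemma~\ref{KVVBuildingBlock} together with the monotone, $1$-Lipschitz property of $\Phi_{d+1}$, make the harmonic-to-exponential limit uniform over the finitely many pairs $(d,m)$, invoke the recursion \eqref{eq:PhiRecursionAppendix}, and bound slabs of size below $K$ crudely. The only looseness is in the small-slab step: $0\le\Phi_d(y)=y+V_{D-d+1}(D-y)\le D$ on $[0,D]$, so each such advertiser contributes at most $D$, which gives exactly $DMK$ with no constant to absorb.
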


Consequently, the error on the right-hand side can be made arbitrarily small by first choosing $M$ sufficiently large and $\xi$ sufficiently small, then choosing $N$ sufficiently large, and finally choosing $B$ sufficiently large relative to $N$.

Applying Lemma~\ref{LemmaOneDayPotentialUB} for
$d=1,\ldots,D$ gives
\[
    \frac{1}{N}\sum_{i=1}^N Y_{iD}
    \leq
    \Phi_1(0)
    +
    D
    \left(
        \eta+\xi+\frac{N}{B}+\frac{DMK}{N}
    \right).
\]
Since $\Phi_1(0)=V_D(D)$, $\ALG=B\sum_iY_{iD}$, and
$\OPT=NBD$, this yields
\[
    \frac{\ALG}{\OPT}
    \leq
    \frac{V_D(D)}{D}
    +
    \eta+\xi+\frac{N}{B}+\frac{DMK}{N}.
\]
Choosing the parameters based on the order described above gives the desired $V_D(D)/D+\varepsilon$ upper bound for every $\varepsilon >0$. We conclude this subsection with the proof of Lemma~\ref{LemmaOneDayPotentialUB}.

\begin{proof}[Proof of Lemma~\ref{LemmaOneDayPotentialUB}]
Fix a day $d\in[D]$. Recall that
\[
    I_{md}
    =
    \left\{
        i:
        \ell_m<Y_{i,d-1}\leq\ell_m+\eta
    \right\},
\]
with zero usage included in $I_{1d}$, and let
$n_{md}:=|I_{md}|$. For advertisers in $I_{md}$, we use the relaxed
spending limit
\[
    \kappa_m:=c(D-\ell_m).
\]
Since $Y_{i,d-1}\geq\ell_m$, the actual effective spending limit satisfies
\[
    c(D-Y_{i,d-1})\leq \kappa_m,
\]
so this relaxation can only increase the possible usage of the online
algorithm.

Consider first a slab $I_{md}$ of size $n_{md}$. Relabel the advertisers
inside the slab according to their removal order in the single-day
upper-triangular block, and let
$\rho_{id}\in[n_{md}]$ denote the index of advertiser $i$. By
Lemma~\ref{KVVBuildingBlock},
\[
    X_{id}
    \leq
    \min\left\{
        H_{n_{md}}(\rho_{id})+\frac{n_{md}}{B},
        \kappa_m
    \right\}.
\]
Define
\[
    \widehat X_{id}
    :=
    \min\left\{
        H_{n_{md}}(\rho_{id}),
        \kappa_m
    \right\}.
\]
Then
$X_{id}\leq\widehat X_{id}+n_{md}/B$. Since
$Y_{i(d-1)}\leq\ell_m+\eta$,
\[
    Y_{id}
    \leq
    \ell_m+\widehat X_{id}
    +\eta+\frac{n_{md}}{B}.
\]
Using that $\Phi_{d+1}$ is non-decreasing and $1$-Lipschitz,
\begin{equation}\label{eq:SlabLipBound}
    \Phi_{d+1}(Y_{id})
    \leq
    \Phi_{d+1}
    \left(
        \ell_m+\widehat X_{id}
    \right)
    +
    \eta+\frac{n_{md}}{B}.
\end{equation}

Assume that the slab $n_{md} = |I_{md}|$ is sufficiently large. Since the indices $(\rho_{id})_{i\in I_{md}}$ form a permutation of $[n_{md}]$,
\[
    \frac{1}{n_{md}}
    \sum_{i\in I_{md}}
    \Phi_{d+1}
    \left(
        \ell_m+\widehat X_{id}
    \right)
    =
    \frac{1}{n_{md}}
    \sum_{j \in [n_{md}]}
    \Phi_{d+1}
    \left(
        \ell_m+
        \min\{H_{n_{md}}(j),\kappa_m\}
    \right).
\]
By Lemma~\ref{LemmaHarmonicLimit}, if $U_n$ is uniform on $[n]$,
then $H_n(U_n)$ converges in distribution to
$Z\sim\operatorname{Exp}(1)$. Since
\[
    z\mapsto
    \Phi_{d+1}
    \left(
        \ell_m+\min\{z,\kappa_m\}
    \right)
\]
is bounded and continuous, its expectations converge as well. Since there are only finitely many pairs $(d,m)$, for every $\xi>0$ we may choose $K=K(M,\xi)$ such that, whenever $n_{md}\geq K$,
\[
    \frac{1}{n_{md}}
    \sum_{i\in I_{md}}
    \Phi_{d+1}
    \left(
        \ell_m+\widehat X_{id}
    \right)
    \leq
    \mathbb{E}\left[
        \Phi_{d+1}
        \left(
            \ell_m+\min\{Z,\kappa_m\}
        \right)
    \right]
    +\xi.
\]
By $\kappa_m=c(D-\ell_m)$ and
\eqref{eq:PhiRecursionAppendix}, the expectation on the right-hand side is
$\Phi_d(\ell_m)$. Combining this with
\eqref{eq:SlabLipBound} gives
\[
    \sum_{i\in I_{md}}
    \Phi_{d+1}(Y_{id})
    \leq
    n_{md}\Phi_d(\ell_m)
    +
    n_{md}
    \left(
        \eta+\xi+\frac{n_{md}}{B}
    \right).
\]
Since $\Phi_d$ is non-decreasing and
$Y_{i(d-1)}\geq\ell_m$,
\[
    n_{md}\Phi_d(\ell_m)
    \leq
    \sum_{i\in I_{md}}
    \Phi_d(Y_{i,d-1}).
\]
Therefore, every slab of size at least $K$ satisfies
\begin{equation}\label{eq:LargeSlabBound}
    \sum_{i\in I_{md}}
    \Phi_{d+1}(Y_{id})
    \leq
    \sum_{i\in I_{md}}
    \Phi_d(Y_{i(d-1)})
    +
    n_{md}
    \left(
        \eta+\xi+\frac{n_{md}}{B}
    \right).
\end{equation}

It remains to handle slabs with $n_{md}<K$. There are at most $M$ such
slabs, so they contain fewer than $MK$ advertisers in total. Since
$0\leq\Phi_d\leq D$, their total contribution to the left-hand side is at
most $DMK$.

Summing \eqref{eq:LargeSlabBound} over the large slabs, adding the small
slabs, and using
$\sum_m n_{md}=N$ and
$\sum_m n_{md}^2\leq N^2$, we obtain
\[
    \sum_{i \in [N]}
    \Phi_{d+1}(Y_{id})
    \leq
    \sum_{i \in [N]}
    \Phi_d(Y_{i,d-1})
    +
    N(\eta+\xi)
    +
    \frac{N^2}{B}
    +
    DMK.
\]
\end{proof}

\section{Additional Details for the Proof of Theorem~\ref{Thm3}}
\label{Appendix_Lemma41}

We complete the lower-bound calculation by showing that
\[
    \Gamma_{f^*}
    =
    \inf_{\substack{
        x\in[0,2],\;
        o_1,o_2\in[0,1]\\
        o_1+o_2>0
    }}
    \frac{
        F^*(x)
        +o_1\bigl(1-f^*(x\mid2,1)\bigr)
        +o_2\bigl(1-e^{x-2}\bigr)
    }{
        o_1+o_2
    }
    =
    1-e^{-h^*},
\]
where
\[
    f^*(x\mid2,1)
    :=
    \begin{cases}
        e^{x-h^*},
        & 0\leq x\leq2-h^*,\\[1mm]
        (3-x)e^{x-2},
        & 2-h^*<x\leq2,
    \end{cases}
\]
and
\[
    F^*(x):=\int_0^x f^*(u\mid2,1)\,du.
\]

We first show that
\[
    \Gamma_{f^*}
    =
    \inf_{\substack{x\in[0,2]\\ \lambda\in[0,1]}}
    \left\{
        F^*(x)\max\{\lambda,1-\lambda\}
        +\lambda\bigl(1-f^*(x\mid2,1)\bigr)
        +(1-\lambda)\bigl(1-e^{x-2}\bigr)
    \right\}.
\]
To derive an equivalent formulation of \eqref{NewFRP}, set
\[
    \lambda:=\frac{o_1}{o_1+o_2}\in[0,1].
\]
Since $o_1,o_2\leq1$,
\[
    \frac{1}{o_1+o_2}
    \geq
    \max\{\lambda,1-\lambda\}.
\]
Therefore, \eqref{NewFRP} implies
\[
    \Gamma_{f^*}
    \geq
    \inf_{\substack{x\in[0,2]\\ \lambda\in[0,1]}}
    \left\{
        F^*(x)\max\{\lambda,1-\lambda\}
        +\lambda\bigl(1-f^*(x\mid2,1)\bigr)
        +(1-\lambda)\bigl(1-e^{x-2}\bigr)
    \right\}.
\]

Now, for each fixed $x \in [0,2]$, the expression inside the braces is affine in $\lambda$ on $[0,1/2]$ and $[1/2,1]$. Choosing $(o_1,o_2)=(0,1)$, $(1,1)$, and $(1,0)$ gives in \eqref{NewFRP}, respectively,
\[
    F^*(x)+1-e^{x-2},
    \qquad
    \frac{
        F^*(x)+2-f^*(x\mid2,1)-e^{x-2}
    }{2},
    \qquad
    F^*(x)+1-f^*(x\mid2,1),
\]
which are exactly the values of the above expression at
$\lambda=0$, $\lambda=1/2$, and $\lambda=1$, getting
\[
    \Gamma_{f^*}
    \leq
    \inf_{\substack{x\in[0,2]\\ \lambda\in[0,1]}}
    \left\{
        F^*(x)\max\{\lambda,1-\lambda\}
        +\lambda\bigl(1-f^*(x\mid2,1)\bigr)
        +(1-\lambda)\bigl(1-e^{x-2}\bigr)
    \right\}.
\]

Also, note that, for fixed $x$, its minimum is therefore attained at
$\lambda\in\{0,1/2,1\}$.

Now, to prove $\Gamma_{f^*}\geq1-e^{-h^*}$, it suffices to verify, for every $x\in[0,2]$,
\[
\begin{aligned}
    F^*(x)+1-e^{x-2}
    &\geq 1-e^{-h^*},
    &&\text{for }(o_1,o_2)=(0,1),\\
    F^*(x)+2-f^*(x\mid2,1)-e^{x-2}
    &\geq 2(1-e^{-h^*}),
    &&\text{for }(o_1,o_2)=(1,1),\\
    F^*(x)+1-f^*(x\mid2,1)
    &\geq 1-e^{-h^*},
    &&\text{for }(o_1,o_2)=(1,0).
\end{aligned}
\]
Note that a direct integration yields
\[
    F^*(x)
    =
    \begin{cases}
        e^{-h^*}(e^x-1),
        & 0\leq x\leq2-h^*,\\[1mm]
        (4-x)e^{x-2}-2e^{-h^*},
        & 2-h^*<x\leq2.
    \end{cases}
\]

Suppose first that $0\leq x\leq2-h^*$. Direct substitution gives
\[
\left[
\begin{aligned}
    \qquad &F^*(x)+1-f^*(x\mid2,1)
    =
    1-e^{-h^*},\\
    &F^*(x)+1-e^{x-2}
    =
    1-e^{-h^*}
    +e^x\bigl(e^{-h^*}-e^{-2}\bigr)
    \geq
    1-e^{-h^*},\\
    &F^*(x)+2-f^*(x\mid2,1)-e^{x-2}
    =
    1-e^{-h^*}+1-e^{x-2}
    \geq
    2(1-e^{-h^*}).
\end{aligned}
\right.
\]
The inequalities follow from $h^*<2$ and $x\leq2-h^*$.

Next, suppose that $2-h^*<x\leq2$. Then
\[
\left[
\begin{aligned}
    \qquad & F^*(x)+1-f^*(x\mid2,1)
    =
    1-e^{-h^*}
    +\bigl(e^{x-2}-e^{-h^*}\bigr)
    \geq
    1-e^{-h^*},\\
    & F^*(x)+1-e^{x-2}
    =
    1-e^{-h^*}
    +\bigl(f^*(x\mid2,1)-e^{-h^*}\bigr)
    \geq
    1-e^{-h^*},\\
    &F^*(x)+2-f^*(x\mid2,1)-e^{x-2}
    =
    2(1-e^{-h^*}).
\end{aligned}
\right.
\]
Here, $e^{x-2}\geq e^{-h^*}$ and
$f^*(x\mid2,1)=(3-x)e^{x-2}\geq e^{-h^*}$ on this interval.

Conversely, taking $x=0$ and $(o_1,o_2)=(1,0)$ in \eqref{NewFRP} gives
\[
    \Gamma_{f^*}
    \leq
    F^*(0)+1-f^*(0\mid2,1)
    =
    1-e^{-h^*}.
\]
\end{APPENDICES}
\end{document}